\newcommand{\ketbra}[2]{|#1\rangle\!\langle#2|}
\definecolor{mygrey}{gray}{0.35}
\definecolor{myblue}{rgb}{0.2,0.2,0.8}
\definecolor{myzard}{cmyk}{0,0,0.05,0}
\definecolor{mywhite}{rgb}{1,1,1}
\definecolor{myred}{rgb}{0.9,0.1,0.}
\definecolor{goldenyellow}{rgb}{1.0, 0.87, 0.0}
\definecolor{cornellred}{rgb}{0.7, 0.11, 0.11}
\newtheoremstyle{customStyle1}  
{0pt}       
{0pt}       
{\normalfont}   
{\parindent}        
{\em}  
{. --}   	 
{.5em}       
{\thmname{#1}\thmnumber{ #2}\thmnote{ (#3)}}  
\newcounter{theorems}
\newtheorem{thm}[theorems]{Theorem}
\newtheorem{prop}[theorems]{Proposition}
\newtheorem{cor}[theorems]{Corollary}
\newtheorem{lem}[theorems]{Lemma}
\newtheorem{ex.}{Example}[theorems]
\newtheorem*{cor*}{Corollary}
\newtheorem*{thm*}{Theorem}
\newtheorem*{prop*}{Proposition}
\newtheorem*{lem*}{Lemma}
\newtheorem*{rem*}{Remark}
\newcommand{\id}{{\mathbb{1}}}
\newcommand{\norm}[1]{\left\lVert#1\right\rVert}
\DeclareMathOperator{\trace}{Tr}
\newcommand{\Tr}[1]{\trace\left[#1\right]}
\newcommand{\partTr}[2]{\trace_{#1}\left[#2\right]}
\DeclareMathOperator{\diag}{diag}
\DeclarePairedDelimiter\floor{\lfloor}{\rfloor}
\newcommand{\DI}{\mathcal{D}\mathcal{I}}
\newcommand{\DIS}{\mathcal{D}\mathcal{I}\mathcal{S}}
\newcommand{\CI}{\mathcal{M}\mathcal{I}\mathcal{O}}
\newcommand{\CIS}{\mathcal{M}\mathcal{I}\mathcal{O}\mathcal{S}}
\newcommand{\CDI}{\mathcal{C}\mathcal{D}\mathcal{I}}
\newcommand{\suc}{\text{succ}}
\newcommand{\free}{\text{free}}
\newcommand{\IM}{\mathcal{I}\mathcal{M}}
\newcommand\newsubcap[1]{\phantomcaption%
	\caption*{\figurename~\thefigure\thesubfigure: #1}} 
\definecolor{dartmouthgreen}{rgb}{0.05, 0.5, 0.06}
\begin{document}
	\title{On the Role of Coherence in Shor's Algorithm}
	\author{Felix Ahnefeld}
	\email{felix.ahnefeld@uni-ulm.de}
	\affiliation{Institute of Theoretical Physics, 			Universit{\"a}t Ulm, Albert-Einstein-Allee 11, D-89069 Ulm, Germany}
	\author{Thomas Theurer}
	\email{thomas.theurer@ucalgary.ca}
	\affiliation{Department of Mathematics and Statistics, Institute for Quantum
Science and Technology, University of Calgary, AB, Canada T2N 1N4}
	\author{Dario Egloff}
	\email{dario.egloff@mailbox.tu-dresden.de}
	\affiliation{Institute of Theoretical Physics, Technical University Dresden, D-01062 Dresden, Germany}
	\author{Juan Mauricio Matera}
	\email{matera@fisica.unlp.edu.ar}
	\affiliation{IFLP-CONICET, Departamento de F{\'i}sica, Facultad de Ciencias Exactas, Universidad Nacional de La Plata, C.C. 67, La Plata 1900, Argentina}
	
	\author{Martin B. Plenio}
	\email{martin.plenio@uni-ulm.de}
	\affiliation{Institute of Theoretical Physics, Universit{\"a}t Ulm, Albert-Einstein-Allee 11, D-89069 Ulm, Germany}

	\begin{abstract}
Shor's factoring algorithm provides a super-polynomial speed-up over all known classical factoring algorithms. Here, we address the question of which quantum properties fuel this advantage. We investigate a sequential variant of Shor's algorithm with a fixed overall structure and identify the role of coherence for this algorithm quantitatively. We analyze this protocol in the framework of dynamical resource theories, which capture the resource character of operations that can create and detect coherence. This allows us to derive a lower and an upper bound on the success probability of the protocol, which depend on rigorously defined measures of coherence as a dynamical resource. We compare these bounds with the classical limit of the protocol and conclude that within the fixed structure that we consider, coherence is the quantum resource that determines its performance by bounding the success probability from below and above. Therefore, we shine new light on the fundamental role of coherence in quantum computation.
		
	\end{abstract}
	\date{\today}
	\maketitle

	\section{Introduction} 
	Factoring large integers is considered to be a notoriously hard problem on a classical device. No classical factoring algorithm with polynomial run time is known, and the assumption that none exists lies at the heart of the widely used RSA cryptosystem~\cite{RSA}. Therefore, Shor's discovery of a quantum algorithm capable of factoring in polynomial time~\cite{Shor1995} not only attracted interest in this algorithm itself but the field of quantum computation in general: It is an example of a quantum algorithm that provides a super-polynomial computational speed-up over its best known classical counterpart (see also Refs.~\cite{Deutsch1992, Simon1997,Knill1998,Harrow2009}).
Since quantum devices are governed by laws different to those of classical physics, in principle, it might not seem too surprising that they can outperform them in certain applications. But which properties of quantum mechanics not present in classical physics fuel the speed-up in Shor's algorithm? And can they be used to explain speed-ups for the solution of other problems too? It is known that the presence of an unbounded amount of multi-partite entanglement is necessary for exponential speed-ups in circuit-based pure state quantum computation because every protocol that does not exhibit this property can be simulated efficiently on a classical device~\cite{Josza2003}. 	
This result, therefore, describes a necessary condition for exponential speed-ups in \textit{arbitrary} protocols but {\em not} a sufficient condition as the presence of unbounded entanglement does not gaurantee efficient quantum computation. This and the lack of a connection of entanglement and classical simulability in the case of mixed states might give a hint that deeper concepts underpin the computational speed-up. 

Here, we go one step further and instead of asking whether a resource is necessary to obtain speed-ups or describing its evolution during a protocol~\cite{Josza2003,Hillery2016,Anand2016,Shi2017,Liu2019YC}, we explore the speed-up that it actually grants. To start this exploration, we retreat from the general computational setting and focus on a specific algorithm with a fixed overall structure, namely a variant
of Shor’s algorithm introduced by Parker and Plenio~\cite{Parker2000}. The focus allows us to present lower and upper bounds on the performance of this algorithm that hold for mixed states too and are expressed in terms of coherence measures, which are derived in the framework of quantum resource theories~\cite{Streltsov2017}.

Quantum resource theories, see for example Refs.~\cite{Vedral1997,Vedral998,Horodecki2003,Aberg2006,Gour2008,Horodecki2013,Brandao2013, Veitch2014,Grudka2014,Baumgartz2014,delRio2015,Streltsov2017,Chitamber2019}, are  mathematical tools developed to describe the resource character of quantum properties in a mathematically and operationally well defined manner. Their central idea is to impose additional restrictions on the laws of quantum mechanics, which single out certain properties as precious resources. The resource content of various physical objects such as states, operations, or measurements can then be quantified rigorously via resource measures that cannot increase under actions that are still permitted in the presence of the constraints. Furthermore, one can study which physical operations may still be implemented in the presence of these restrictions and at what cost they can be overcome when supplied with resourceful quantum objects. This allows for an investigation of which resource is responsible for what quantum advantage. Besides the insights that our results give on Shor's algorithm, they also show that (dynamical) resource theories are applicable to problems of practical relevance. Whilst this is often used as a motivation to study resource theories, quantitative relations between coherence and performance beyond variations of discrimination, exclusion, and detection games~\cite{Napoli2016,Piani2016,Takagi2019b,Takagi2019,Skrzypczyk2019a,Skrzypczyk2019b,Uola2019,Mori2020,Ducuara2020a,Ducuara2020b,Uola2020,Masini2021} are surprisingly rare~\cite{Hillery2016,Matera2016,Biswas2017,Marvian2020}.

After a short introduction to the relevant aspects of resource theories and Shor's algorithm, we present our main results. First, we carefully motivate and describe what algorithm we are considering and how this allows us to investigate the role of coherence. This is crucial because we need to fix the overall structure of our algorithm: The most general approach to an investigation of the speed-up quantum resources grant in factoring would be to compare an ideal quantum algorithm (given fixed resources) to an ideal classical algorithm. Since it is unknown what such algorithms are, this is however out of reach. Instead, we will focus on the quantum part in Shor's algorithm, namely the order-finding protocol, and fix its core, the modular exponentiation, whilst varying the remainder. This approach provides enough freedom while giving enough structure to observe interesting quantitative connections. We conclude with a discussion and outlook and refer to the SM for proofs and further details.

\section{Quantum resource theories} 
In this section, we give a brief introduction to the resource theoretical notions that we will use in this work. We will restrict ourselves to finite-dimensional quantum systems, which we label by large Latin letters. Quantum states will be denoted by small Greek letters, quantum channels, i.e., linear and completely positive and trace-preserving (CPTP) maps that transform quantum states by large Greek letters. Additionally, if clear from the context, super-channels, i.e., linear maps that transform channels into channels will be labeled by large Latin letters as well. 

Generally, resource theories emerge from restrictions that are frequently motivated experimentally. Here we focus on constraints concerning the ability to create and detect coherence, but the concepts can be analogously applied to other restrictions. We begin by fixing the incoherent basis $\{\ket{i}\}_i$, i.e., the basis with respect to which we are going to describe coherence. Since we are considering circuit-based quantum computation, the computational basis in which we are encoding and extracting our classical information is the natural choice: If we never create coherence with respect to the computational basis, we are essentially reduced to the (classical) application of stochastic matrices onto probability vectors. 

A quantum state $\sigma$ is now considered incoherent and equivalent to a probability vector iff it is diagonal in the incoherent basis, i.e., iff $\Delta(\sigma)=\sigma$, where 
\begin{equation}
	\Delta(\rho)=\sum_i \ketbra{i}{i} \rho \ketbra{i}{i}
\end{equation}
denotes total dephasing in the incoherent (computational) basis $\{\ket{i}\}_i$. We denote the set of incoherent states by $\mathcal{I}$ and call the maximal set of channels $\Phi$ that cannot create coherence, i.e., turn an incoherent state into a coherent one, the \textit{maximally incoherent} channels and denote it by $\CI$~\cite{Aberg2006,Liu2017,Diaz2018,Theurer2019}. This set constitutes of all channels $\Phi$ that satisfy $\Phi\Delta=\Delta\Phi\Delta$. To exploit coherence, we do not only need to create it, but also use it. Using coherence is only possible if we have access to measurements that can detect it in the sense that its presence makes a difference in measurement statistics~\cite{Theurer2019,Yadin2016,Smirne2018}. 

As detailed in Ref.~\cite{Theurer2019}, it is possible to identify all instruments that cannot detect coherence with the \textit{detection-incoherent} channels $\mathcal{D}\mathcal{I}$, i.e., all channels $\Phi$ satisfying $\Delta \Phi=\Delta\Phi\Delta$~\cite{Meznaric2013,Liu2017,Theurer2019,Xu2020}. After we defined the set of channels that cannot create/detect coherence and are thus considered free for the respective task, we can use them to build dynamical resource theories~\cite{Dana2017,Zhuang2018,Theurer2019,Wang2019a,Wang2019b,Liu2020, Liu2019,Gour2019a,Gour2021,Saxena2020,Gour2020,Takagi2019,Bauml2019,Li2020, Takagi2020, Takagi2021} introduced only relatively recently to quantify how well non-free channels can create/detect coherence. The missing pieces are the super-channels that map quantum channels to quantum channels. A super-channel $S$ can be represented by two quantum channels $\Theta_1,\Theta_2$ that are used as pre-and post-processing, i.e.,
\begin{align}
	S[\Lambda]=\Theta_2(\mathbb{1}\otimes \Lambda)\Theta_1.
\end{align}
This definition is natural in the context of circuit quantum computation, but can also be shown to be the most general one consistent with an operational interpretation~\cite{Chiribella2008}. We now also divide the super-channels into free and non-free: A minimal requirement is that a free super-channel maps free channels to free channels, otherwise it would be possible to create non-free operations from free ones at no cost. Since both $\DI$ and $\CI$ are closed under sequential and parallel concatenation, we take the circuit-based approach and define a super-channel as free iff it can be represented by a free pre-and postprocessing~\cite{PhysRevLett.123.150401, gour2020dynamical}. The set of free super-channels in the resource theory concerning the creation/detection of coherence will be labeled by $\CIS/\DIS$. This concept allows us to compare the value of channels: A channel $\Theta$ is at least as valuable as a channel $\Lambda$ if there exists a free super-channel $S$ such that $S[\Theta]=\Lambda$. In general, it is difficult to decide whether such an $S$ exists, which is why one considers (dynamical) resource measures. These are functionals $M$ that map quantum operations to the non-negative numbers and satisfy (i) monotonicity: $M(\Theta)\ge M(S[\Theta])$ for all free super-channels $S$, i.e., they respect the preorder that the free super-channels impose on the channels and therefore their relative value, (ii) faithfulness: $M(\Theta)=0$ iff $\Theta$ is free, (iii) convexity.

\begin{figure*}\label{fig:Shor}
     \centering
     \begin{subfigure}[b]{0.4\textwidth}
         \centering
         \scalebox{0.68}{
         \begin{quantikz}
\lstick{$\ket{0}$} &\gate{H}\gategroup[5,steps=4,style={rounded corners,fill=goldenyellow!25, inner xsep=2pt}, background,label style={label position=above,anchor=north,yshift=-0.2cm}]{}\gategroup[4,steps=1,style={rounded corners,fill=cornellred!35, inner xsep=2pt}, background,label style={label position=above,anchor=north,yshift=-0.2cm}]{}&\qw &\ctrl{1} & \gate[4]{\mathcal{F}^\dagger}\gategroup[4,steps=1,style={rounded corners,fill=dartmouthgreen!35, inner xsep=2pt}, background,label style={label position=above,anchor=north,yshift=-0.2cm}]{}  &\meter{} \\
\lstick{$\ket{0}$} &\gate{H}&\qw &\ctrl{1} & &\meter{} \\
\lstick{$\ket{0}$} &\gate{H}&\qw &\ctrl{1}& &\meter{} \\
\lstick{$\ket{0}$} &\gate{H}&\qw &\ctrl{1}& &\meter{} \\
&&\lstick{$\ket{1}$}&\gate{U_c} \qwbundle[alternate]{}&\gate{\text{Discard}}\qwbundle[alternate]{}
\end{quantikz}
}
         \newsubcap{The quantum subroutine of Shor's order-finding algorithm.}
         \label{fig:ShorA}
     \end{subfigure}%
     \begin{subfigure}[b]{0.6\textwidth}
         \centering
		 \scalebox{0.68}{
		 \begin{quantikz}
	&\gategroup[1,steps=14,style={rounded corners,fill=dartmouthgreen!35, inner xsep=2pt}, label style={label position=above,anchor=north,yshift=-0.2cm}]{{\sc Classical Control and Classical Post-Processing}}&\gate{\text{A}} \cw &&&&&&&&&&&&&&&&&&&& \\
	\lstick{$\ket{0}$}\qw &\qw \gategroup[2,steps=6,style={rounded corners,fill=blue!20, inner xsep=2pt},background, label style={label position=below,anchor=north,yshift=-0.2cm}]{{\sc Block 1}}&\gate{H}&\ctrl{1}& \gate{R_1^\prime} \vcw{-1} & \gate{H}&\meter{} \vcw{-1}
	&& \lstick{$\ket{0}$} & \qw \gategroup[2,steps=6,style={rounded corners,fill=blue!20, inner xsep=2pt},background, label style={label position=below,anchor=north,yshift=-0.2cm}]{{\sc Block L}}&\gate{H} & \ctrl{1}&\gate{R_L^\prime} \vcw{-1} &\gate{H} &\meter{} \vcw{-1} &&&&&&&&&& \\
	\lstick{$\ket{1}$}&\qwbundle[alternate]{}&\qwbundle[alternate]{}&\gate{\mathcal{U}_1} \qwbundle[alternate]{}&\qwbundle[alternate]{} &\qwbundle[alternate]{}&\qwbundle[alternate]{} &\qwbundle[alternate]{}\ \ldots  & &\qwbundle[alternate]{} &\qwbundle[alternate]{} &\gate{\mathcal{U}_L} \qwbundle[alternate]{} &\qwbundle[alternate]{} &\gate{\text{Discard}}\qwbundle[alternate]{}&&&&&&&&&
\end{quantikz}
}
		 \newsubcap{A sequential variant of the order-finding algorithm, where the $R'_n$ denote phase gates that depend on the outcomes of the previous measurements and $\mathcal{U}_l=U^{2^{L-l}}_B$. See the main text and the SM for further details.}
		 \label{fig:ShorB}
     \end{subfigure}
\end{figure*}

We will now proceed to define the two measures that we will use to quantify the connection between coherence and the performance of Shor's algorithm. The robustness of coherence~\cite{Napoli2016} is defined as
\begin{align}
	C(\rho)=\min\left\{s\ge0: \frac{\rho+s \tau}{1+s}\in \mathcal{I}, \tau \text{ a state}\right\}.
\end{align} 
From this, we can define a dynamical measure that describes how well a channel can create coherence (i.e., with respect to $\CI$) via 
\begin{align}
	\mathscr{C}(\Theta)=\max_{\tau\in \mathcal{I}} C(\Theta\tau),
\end{align}
which is a resource generation capacity~\cite{Bennett2003,Mani2015,Xi2015,Garcia2016,Bu2017a,Bu2017b}.
For the detection-incoherent setting, the  NSID-measure~\cite{Theurer2019} (non-stochastic in detection) is a dynamical measure that describes how well a channel can detect coherence (i.e., with respect to $\DI$), and is given by
\begin{equation}\label{nSID}
	\tilde{M}_\diamond(\Lambda)=\min_{\Phi \in \DI} \max_{\sigma} \norm{\Delta(\Lambda-\Phi)\sigma}_1.
\end{equation}
Furthermore, we show in the SM that an intuitive candidate for a measure, namely
$\mathscr{D}(\Lambda)=\max_\rho \norm{\Delta\Lambda(\id-\Delta)\rho}_1$,
fails to form a measure in the $\DI$ setting, as it violates monotonicity. \newline

\section{Shor's algorithm} 
Let $N$ denote an integer to factor and assume $N$ to be large. The factorization problem can be reduced to the order-finding problem: given integers $N$ and $x$ where $x<N$ and $x$ coprime to $N$, the order $r$ is defined as the smallest integer such that $x^r=1 \, (\text{mod}\,N)$ (see Ref.~\cite{Shor1995} and the SM for more information).
Solving order-finding for a randomly chosen $x$ with the above properties allows to solve factoring with high probability, and it is exactly what the quantum parts of the various versions of Shor's algorithm accomplish efficiently. 

For the standard quantum order-finding protocol one uses two quantum systems $A$ and $B$ of dimension $q$ and $N$ respectively, where system $A$ consists of $L$ qubits with $N^2<q=2^L<2N^2$.  Furthermore, one defines a unitary by $U_B\ket n=\ket{x n \,(\text{mod}\, N)}$ that acts on system $B$ and the modular exponentiation via 
\begin{equation}
	U_c=\sum_{n=0}^{q-1} \ketbra{n}{n}_A \otimes U_B^n.
\end{equation} 
Important from a resource theoretical perspective, $U_c$ is both in $\DI$ and in $\CI$, i.e., it can neither produce nor detect coherence and is thus considered free in both resource theories.
As shown in Fig.~\ref{fig:ShorA}, an order-finding protocol works then as follows: Initialize system $AB$ in the state $\ket{0}_A^{\otimes L}\ket{1}_B$, first apply Hadamard gates to each qubit, apply  $U_c$, followed by an inverse Fourier transform $\mathcal{F}^\dagger $ on $A$ and then a measurement in the computational basis. This allows us to estimate a randomly chosen eigenvalue of $U_B$ with sufficiently high probability from the measurement outcome via the continued fraction algorithm and thus deduce $r$. 
Since both the quantum part (and in particular modular exponentiation and the inverse Fourier transform) as well as the classical pre- and post-processing can be implemented efficiently, this allows to factor in polynomial runtime ~\cite{Shor1995,Parker2000,Hardy1984,Nielsen2016}. A particular implementation of the Fourier transformStreltsov2017 given by sequentially applied controlled phase gates and Hadamard gates~\cite{Griffiths1995} allows to derive an equally efficient variant of the order-finding protocol that requires only a single control qubit which is being recycled~\cite{Parker2000}, see Fig.~\ref{fig:ShorB}. 

\section{Results}
We now describe the setup to which our results apply, namely the order-finding protocol depicted in Fig.~\ref{fig:ShorB}, and connect its performance to coherence. The quantum advantage in this protocol is obviously not emerging from the classical control and post-processing, so we will keep this part fixed. Now looking at a single block, we remind that the controlled unitary $\mathcal{U}_l=U^{2^{L-l}}_B$ as well as the phase gate $R'_l$ (see the SM for details) can neither create nor detect coherence and are thus free in both resource theories. Therefore, we will keep them fixed as well and treat them as a black-box that we can probe. The remaining ingredients of each block become the main focus of study: If we would replace the initial state $H\ket{0}=\ket{+}$ of the control qubit, which is a maximally coherent state~\cite{Baumgartz2014}, with an incoherent one, the block would be seriously flawed, in the sense that it does not encode information about $r$, since the black-box only affects the coherences of the control qubit (see the SM for more information). Incoherent and maximally coherent states are extreme cases, and to connect the performance of the algorithm quantitatively to coherence, we will investigate the impact on efficiency if we replace the initial control state with a partially coherent state. Since every quantum state can be identified with its replacements channel, we replace it with a fixed qubit channel $\Theta_l$ that is used to create an initial (partially coherent) control qubit state from an incoherent state $\sigma_l$. We further allow $\Theta_l$ to be transformed by arbitrary super-channels $S_1^{(l)}\in\CIS$ since this is free from a resource perspective and ensures that we use the resource at hand appropriately.
In this spirit, $S_1^{(l)}$ allows for a fair comparison of different resourceful operations. Note that another approach would be to optimize over different $U_l$ (see Refs.~\cite{Biswas2017,Masini2021} for related approaches in different settings).
\begin{figure}[t]
	\scalebox{0.69}{
	\begin{quantikz}
	&&&&&&\vcw{2}&&&&\vcw{2}&&&&&&&&&&&&&&\\
	&\gategroup[5,steps=10,style={rounded corners,fill=blue!20, inner xsep=2pt},background, label style={label position=below,anchor=north,yshift=-0.1cm}]{}&&&&&&&&&&\\
	\lstick{$\sigma_l$}&\qw &\gate[2,nwires={2}]{\Phi_1^{(l)}} \gategroup[2,steps=1,style={dashed,rounded corners,fill=cornellred!40, inner xsep=2pt},background, label style={label position=below,anchor=north,yshift=-0.1cm}]{}&\gate{\Theta_l} &\gate[2]{\Phi_2^{(l)}} \gategroup[2,steps=1,style={dashed,rounded corners,fill=cornellred!40, inner xsep=2pt},background, label style={label position=below,anchor=north,yshift=-0.1cm}]{}&\ctrl{3}&\gate{R_l^\prime}&\gate[2,nwires={2}]{\Phi_3^{(l)}} \gategroup[2,steps=1,style={dashed,rounded corners,fill=cornellred!40, inner xsep=2pt},background, label style={label position=below,anchor=north,yshift=-0.1cm}]{}&\gate{\Lambda_l}&\gate[2]{\Phi_4^{(l)}} \gategroup[2,steps=1,style={dashed,rounded corners,fill=cornellred!40, inner xsep=2pt},background, label style={label position=below,anchor=north,yshift=-0.1cm}]{}&\meter{}\\
	&&\gategroup[1,steps=3,style={dashed,rounded corners,fill=cornellred!40, inner xsep=2pt},background, label style={label position=below,anchor=north,yshift=-0.1cm}]{\sc{$S_1^{(l)}$}}&\qw&\qw &&&\gategroup[1,steps=3,style={dashed,rounded corners,fill=cornellred!40, inner xsep=2pt},background, label style={label position=below,anchor=north,yshift=-0.1cm}]{\sc{$S_2^{(l)}$}}& \qw&\qw&&&& \\
	&&&&&& &&&& &&&&&&&&&&&&&&\\
	\qwbundle[alternate]{}&\qwbundle[alternate]{}&\qwbundle[alternate]{}&\qwbundle[alternate]{}&\qwbundle[alternate]{}&\gate{\mathcal{U}_l}\qwbundle[alternate]{}&\qwbundle[alternate]{}&\qwbundle[alternate]{}&\qwbundle[alternate]{}&\qwbundle[alternate]{}&\qwbundle[alternate]{}&\qwbundle[alternate]{}
\end{quantikz}
	}
	\caption{A single modified block of the sequential order-finding algorithm with super-channels to make optimal use of the resources in the protocol.}\label{fig:SingleBlock} 
\end{figure}
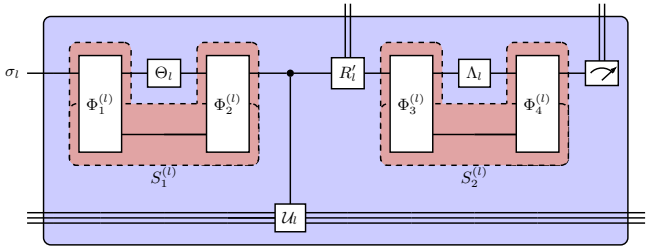
Furthermore, after the application of $\mathcal{U}_l$, we must extract the desired information, which is encoded exclusively in the coherences of the control qubit, hence we must detect coherence exactly in the sense that it makes a difference in the measurement statistics. The application of a Hadamard gate, which maximizes the NSID-measure among all qubit channels, is thus an extremal case too~\cite{Theurer2019}.

In contrast, a channel that cannot detect coherence would not be able to recover any of the available information on the prime factors. The ability to detect coherence, therefore, plays a vital role after the application of $\mathcal{U}_l$, and to investigate its precise contribution, we replace $H$ with a fixed channel $\Lambda_l$, that interpolates between the optimal $H$ and a completely incoherent measurement. We now allow to apply an arbitrary super-channel $S_2^{(l)}\in \DIS$ that is unitality preserving (we comment on this requirement in the SM), for the same reasoning as for $S_1^{(l)}$. The resulting block is depicted in Fig.~\ref{fig:SingleBlock}. To simplify our analysis for the main text, we further assume here that in each block, we use the same channel $\Theta/\Lambda$ for the creation/detection of coherence (see the SM for the more general version). For fixed $\Theta$ and $\Lambda$, we then define $P^{\suc}(\Theta,\Lambda)$  to be the probability (maximized over the $S_1^{(l)}$, $S_2^{(l)}$, and $\sigma_l$) that a single run of this order-finding protocol leads to the correct order and bound it by the following Theorem.
\begin{thm}\label{DynamicBound}
	The success probability of the order-finding protocol as described above with qubit operations $\Theta$ and unital $\Lambda$ for creation and detection respectively is lower bounded by
	\begin{align}\label{eq:main}
		P^{\suc}(\Theta,\Lambda) \ge& \frac{4}{\pi^2} \frac{\varphi(r)}{r} \left(\frac{1+\mathscr{C}(\Theta)\tilde{M}_\diamond(\Lambda)}{2} \right)^L,
	\end{align}
	where $\varphi(r)$ denotes Euler's totient function. 
\end{thm}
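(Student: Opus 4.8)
The plan is to condition on an eigenvalue of $U_B$ so that the protocol factorizes into independent single-qubit rounds, to show that replacing the ideal preparation $\ket{+}$ and readout $H$ by the resourceful $\Theta$ and $\Lambda$ merely rescales each round's ``contrast'' by the single factor $\gamma:=\mathscr{C}(\Theta)\,\tilde{M}_\diamond(\Lambda)$, and then to recombine the $L$ rounds with an elementary product inequality. Concretely I would first write $\ket{1}_B=\tfrac{1}{\sqrt r}\sum_{s=0}^{r-1}\ket{u_s}$ with $U_B\ket{u_s}=e^{2\pi i s/r}\ket{u_s}$. Since each $\mathcal{U}_l=U_B^{2^{L-l}}$ only imprints a phase on the control coherences, $B$ stays in $\ket{u_s}$ and, following the standard Shor analysis, the overall outcome statistics are the mixture $\tfrac1r\sum_s$ of feedback-corrected single-qubit phase estimations. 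For a ``good'' $s$ (one of the $\varphi(r)$ values with $\gcd(s,r)=1$) the continued-fraction step recovers $r$, and the ideal protocol hits the good set with probability at least $4/\pi^2$; I would record the ideal conditional probability of the bit measured in block $l$ as $\tfrac{1+b_l}{2}$ with $b_l\in[-1,1]$ the ideal bias, so that $P^{\text{ideal}}(\vec k\mid s)=\prod_l \tfrac{1+b_l}{2}$.

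For the creation side I would use that for a qubit the robustness is the coherence magnitude, $C(\rho)=2|\rho_{01}|$, so $\mathscr{C}(\Theta)$ is exactly the largest off-diagonal weight $\Theta$ can produce from an incoherent state. Choosing $\sigma_l$ together with the free pre-/post-processing of $S_1^{(l)}\in\CIS$ (a diagonal incoherent unitary suffices to orient the phase), I would realize the control state as the convex mixture $\mathscr{C}(\Theta)\,\ketbra{\psi}{\psi}+(1-\mathscr{C}(\Theta))\,\delta$, with $\ket{\psi}$ maximally coherent and $\delta$ incoherent. By linearity the incoherent branch carries no phase information, whereas the coherent branch behaves like the ideal input up to the readout.

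The detection side is where the NSID-measure enters and is the main obstacle. Let $\Phi^{\ast}\in\DI$ attain the minimum in \eqref{nSID}; since $\Phi^{\ast}$ cannot detect coherence, the composite $\Delta\Phi_4\Phi^{\ast}\Phi_3$ produces a phase-independent background, while the coherence-dependent part of the outcome distribution has trace norm controlled by $\max_\sigma\norm{\Delta(\Lambda-\Phi^{\ast})\sigma}_1=\tilde{M}_\diamond(\Lambda)$, the maximum being attained at maximal input coherence. Using the free pre-processing of $S_2^{(l)}\in\DIS$ to align the detection axis with the rotated control coherence, the coherent branch contributes a bias $\tilde{M}_\diamond(\Lambda)\,b_l$, and the unitality of $\Lambda$ together with the unitality-preservation of $S_2^{(l)}$ is precisely what pins the phase-independent background to $1/2$. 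Weighting by the coherent-branch probability $\mathscr{C}(\Theta)$ then turns the degraded per-block conditional into $\tfrac{1+\gamma b_l}{2}$. The hard part is making this identity rigorous — showing that the two abstract dynamical measures multiply into a single \emph{achievable} contrast $\gamma$ and that the background is genuinely unbiased — plus the technical bookkeeping that the sequential phase feedback keeps the created coherence and the detection axis aligned in every block.

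Finally I would invoke the elementary inequality $(1-\gamma)(1-b_l)\ge 0$, valid since $\gamma,b_l\le 1$, which rearranges to $\tfrac{1+\gamma b_l}{2}\ge \tfrac{1+\gamma}{2}\cdot\tfrac{1+b_l}{2}$. Taking the product over the $L$ blocks gives $P^{\text{deg}}(\vec k\mid s)\ge \big(\tfrac{1+\gamma}{2}\big)^L P^{\text{ideal}}(\vec k\mid s)$ for \emph{every} outcome string; summing over the good set retains the factor $4/\pi^2$, and averaging over the $\varphi(r)$ good eigenvalues with weight $1/r$ yields $P^{\suc}(\Theta,\Lambda)\ge \tfrac{4}{\pi^2}\tfrac{\varphi(r)}{r}\big(\tfrac{1+\gamma}{2}\big)^L$, which is the claimed bound.
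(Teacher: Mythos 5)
Your architecture matches the paper's proof almost step for step: condition on a single eigenvalue/rotation $\mathcal{E}_j$ (equivalently an eigenstate $\ket{u_s}$ of $U_B$), use the free super-channels to align phases and uniformize populations so each block's conditional outcome probability becomes $\tfrac{1}{2}\bigl(1+\gamma\cos(2\pi 2^{L-l}\chi)\bigr)$ with $\gamma=\mathscr{C}(\Theta)\tilde{M}_\diamond(\Lambda)$, keep only the unique good outcome $k'$ with $|\chi|<1/(2q)$ for each of the $\varphi(r)$ coprime $j$, and recombine. Your final step — the pointwise comparison $\tfrac{1+\gamma b_l}{2}\ge\tfrac{1+\gamma}{2}\cdot\tfrac{1+b_l}{2}$ from $(1-\gamma)(1-b_l)\ge 0$, followed by importing the ideal bound $4/\pi^2$ — is algebraically identical to the paper's Lem.~\ref{4PiBound}, which performs the same factor-by-factor comparison and then invokes the Vi\`ete--Euler product $\prod_{l}\cos^2(\pi/2^{l+1})\ge 4/\pi^2$; packaging it as a transfer from the ideal protocol is a clean presentational choice but not a different argument.

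The one place where your proposal is genuinely incomplete is the point you yourself flag as ``the hard part'': why the \emph{abstract} dynamical measure $\tilde{M}_\diamond(\Lambda)$ (a minimax over all $\Phi\in\DI$) is exactly the achievable per-block detection contrast, and why the two measures multiply. Your appeal to an optimal $\Phi^\ast$ producing a ``phase-independent background'' does not by itself give this; the quantity that actually appears in the computation is the Choi coefficient $2|\Lambda_{00}^{01}|$, and one must prove that this equals $\tilde{M}_\diamond(\Lambda)$ for qubit channels (the paper's Lem.~\ref{lem:DetectabilityNSID}, obtained by showing $\tilde{M}_\diamond\le\mathscr{D}$ via the feasible point $\Phi=\Lambda\Delta$ and $\tilde{M}_\diamond\ge 2|\Lambda_{00}^{01}|$ by explicit optimization) and that a free post-processing in $\DIS$ can align the phases of $\Lambda_{00}^{01}$ and $\Lambda_{11}^{01}$ so that the full magnitude contributes with the correct sign in both measurement outcomes (the ``Hadamard-like'' construction of Lem.~\ref{HadamardLike}, which also uses $|\Lambda_{00}^{01}|=|\Lambda_{11}^{01}|$ from trace preservation). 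Your treatment of the creation side is fine — the convex decomposition $\tfrac{1}{2}\id+c\,\sigma_x=2c\,\ketbra{+}{+}+(1-2c)\tfrac{\id}{2}$ is valid and $C(\rho)=2|\rho_{01}|$ for qubits is the correct identification of $\mathscr{C}(\Theta)$ — so once the detection identity is supplied, your proof closes and coincides with the paper's.
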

The presence of $\mathscr{C}(\Theta)$ is intuitive as it quantifies the ability of $\Theta$ to create coherence in the control qubit~\cite{Garcia2016,Bu2017a,Bu2017b}, which is exactly what we use the channel $\Theta$ for. We note that for any qubit channel $\Theta$, we have $\mathscr{C}(\Theta)\le1$, with equality if and only if $\Theta$ can create a maximally coherent qubit state~\cite{Bu2017b}. Moreover, for qubit operations $\Lambda$, $\tilde{M}_\diamond(\Lambda)\le 1$, and the bound is saturated for a Hadamard gate~\cite{Theurer2019}. The measures enter the bound on equal footing, which indicates that the ability to create and detect coherence are equally important, as one would intuitively expect. In case both $\Theta$ and $\Lambda$ are Hadamard gates, we thus recover the bound presented in Refs.~\cite{Shor1995,Parker2000}, which is used to prove the polynomial runtime of the algorithm. If the abilities to create and detect coherence decrease, this influences our bound exponentially in $L$. This suggests that the polynomial runtime of the fully coherent protocol becomes degraded exponentially in $L$ by the lack of coherence and the ability to detect it. However, one needs to ask whether the performance actually decreases exponentially with less coherence, or if only our bound does so. To address this question, we continue to present a sufficiently general upper bound.
\begin{thm}\label{thm:main2}
The success probability of the order-finding protocol as described above with qubit operations $\Theta$ and unital $\Lambda$ for creation and detection respectively is upper bounded by
\begin{equation}
\resizebox{.91\hsize}{!}{$P^{\suc}(\Theta,\Lambda)\leq  \min \left\lbrace \varphi(r)\left(1+\floor{\tfrac{2^L}{r^2}} \right) \left(\frac{1+\mathscr{C}(\Theta) \tilde{M}_\diamond(\Lambda)}{2}\right)^L , 1  \right\rbrace,$}
\end{equation}
where $\varphi(r)$ denotes Euler's totient function.
\end{thm}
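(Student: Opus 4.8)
The plan is to write the single‑run success probability as the expectation of a positive operator on the target register $B$ and to bound that operator one block at a time. Because the continued‑fraction decoding is a fixed classical map, the records that are decoded to the true order form a set $\mathrm{Succ}$, and
\begin{equation}
P^{\suc}(\Theta,\Lambda)=\bra{1}_B\,E_{\mathrm{Succ}}\,\ket{1}_B,\qquad E_{\mathrm{Succ}}=\sum_{y\in\mathrm{Succ}}E_{y},
\end{equation}
with $\{E_{y}\}$ the POVM on $B$ induced by the $L$ sequential blocks (the feedback phases $R'_{l}$ being fixed along each record $y$). Since $0\le E_{\mathrm{Succ}}\le\id$ one immediately has $P^{\suc}\le1$, which supplies the second entry of the minimum; the work is the first bound.

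For the prefactor I would invert the continued‑fraction analysis of phase estimation: a record $y$ is decoded to $r$ only when $|y/2^{L}-s'/r|<1/(2r^{2})$ for some $s'$ coprime to $r$. There are $\varphi(r)$ admissible residues $s'$, their windows are pairwise disjoint (neighbouring centres differ by at least $1/r$), and each contains at most $1+\floor{2^{L}/r^{2}}$ integers, so $|\mathrm{Succ}|\le\varphi(r)\bigl(1+\floor{2^{L}/r^{2}}\bigr)$.

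Next I would diagonalise in the eigenbasis of $U_{B}$. Because $\mathcal U_{l}=U_{B}^{2^{L-l}}$ is diagonal there, each block's instrument preserves every eigen‑sector and all cross‑sector terms die in the final trace, giving $P(y)=\tfrac1r\sum_{s}\prod_{l}P_{s}(y_{l})$ with $P_{s}(y_{l})$ the outcome probability at eigenphase $e^{2\pi i s/r}$. A direct computation yields $P_{s}(y_{l})=b_{y_{l}}+2\real\!\big[\xi_{01}e^{-i\phi_{l}}\,[S_{2}^{(l)}[\Lambda](\ketbra{0}{1})]_{y_{l}}\,e^{-2\pi i s2^{L-l}/r}\big]$, a phase‑independent baseline $b_{y_{l}}$ plus a phase‑carrying signal, where $\xi$ is the control state after $S_{1}^{(l)}[\Theta]$. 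The two measures now bound the signal on equal footing and purely through monotonicity: $\mathscr{C}(S_{1}^{(l)}[\Theta])\le\mathscr{C}(\Theta)$ together with the qubit identity $C(\xi)=2|\xi_{01}|$ gives $|\xi_{01}|\le\tfrac12\mathscr{C}(\Theta)$, while $\tilde{M}_\diamond(S_{2}^{(l)}[\Lambda])\le\tilde{M}_\diamond(\Lambda)$ together with the fact that every $\Phi\in\DI$ obeys $\Delta\Phi(\ketbra{0}{1})=0$ gives $|[S_{2}^{(l)}[\Lambda](\ketbra{0}{1})]_{y_{l}}|\le\tfrac12\tilde{M}_\diamond(\Lambda)$. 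Hence the signal is at most $\tfrac12\mathscr{C}(\Theta)\tilde{M}_\diamond(\Lambda)$ in modulus, uniformly in $s$, $y_{l}$ and the adaptive phase $\phi_{l}$.

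The remaining—and, I expect, hardest—step is to pin the baseline to $b_{y_{l}}=\tfrac12$, for then $P_{s}(y_{l})\le\tfrac12\bigl(1+\mathscr{C}(\Theta)\tilde{M}_\diamond(\Lambda)\bigr)$, the product over the $L$ blocks is at most $\bigl(\tfrac{1+\mathscr{C}(\Theta)\tilde{M}_\diamond(\Lambda)}{2}\bigr)^{L}$ in every sector, and summing over the at most $|\mathrm{Succ}|$ records while averaging over the $r$ equally weighted sectors yields the claimed bound. Here the unitality of $\eta:=S_{2}^{(l)}[\Lambda]$ enters decisively: with a balanced control population it forces $b_{y_{l}}=\tfrac12\bigl([\,\eta(\ketbra{0}{0})]_{y_{l}}+[\,\eta(\ketbra{1}{1})]_{y_{l}}\bigr)=\tfrac12$ through $\eta(\id)=\id$. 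The delicate point is to show that this balance may be assumed without loss—that the incoherent reset $\sigma_{l}$ and the free $\CIS/\DIS$ operations cannot profitably unbalance the control, which would otherwise let the protocol steer a predetermined record into $\mathrm{Succ}$ and evade the estimate. Establishing this, thereby reducing the optimisation to the balanced case, is the crux; with it in hand, intersecting the resulting bound with $P^{\suc}\le1$ completes the argument.
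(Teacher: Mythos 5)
Your architecture is essentially the paper's: decompose $\mathcal{E}$ into the eigen-rotations $\mathcal{E}_j$, factorize the record probability block-wise in each eigen-sector, split each block's outcome probability into an incoherent baseline plus a phase-carrying signal, bound the signal by $\tfrac12\mathscr{C}(\Theta)\tilde{M}_\diamond(\Lambda)$ via the qubit identities $C(\xi)=2|\xi_{01}|$ and $\tilde{M}_\diamond(\eta)=2|\eta_{00}^{01}|$ together with monotonicity under the free super-channels, and multiply by the number of decodable records. However, two steps are not sound as written. First, your count of $\mathrm{Succ}$ inverts the Diophantine-approximation theorem: the condition $|y/2^{L}-s'/r|<1/(2r^{2})$ is \emph{sufficient} for $s'/r$ to be a convergent of $y/2^{L}$, not necessary, so ``decoded to $r$ only when'' is the wrong direction and your window undercounts the records that can succeed. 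For an upper bound one must use the necessary condition $|y/2^{L}-s'/r|\le 1/r^{2}$; the paper's supplemental proof accordingly works with the sets $\mathcal{K}_2^{j'}$ of half-width $1/r^2$ and obtains the per-residue cardinality $1+2\floor{2^L/r^2}$, so the bound it actually proves carries $1+2\floor{\cdot}$ rather than the $1+\floor{\cdot}$ you (and the main-text display) quote.

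Second, you explicitly leave the crux open: showing that the baseline may be taken to be $b_{y_l}=\tfrac12$ without loss of generality. The paper closes this by arguing that the protocol is fixed before $r$ (and hence the location of the favourable outcome window $\mathcal{K}_2^{j'}$) is known, so the best achievable outcome distribution from the population sector alone is the uniform one; unitality of $\Lambda$ and unitality preservation of $S_2^{(l)}$ guarantee that the free pre- and post-processing can enforce uniform populations without disturbing the coherence term, exactly as in the construction used for the lower bound. Without this step your per-block estimate $P_s(y_l)\le\tfrac12\bigl(1+\mathscr{C}(\Theta)\tilde{M}_\diamond(\Lambda)\bigr)$ does not follow, so the proposal as it stands is an outline of the right strategy rather than a proof.
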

We notice that this bound depends on both the problem and the employed coherence. The bound becomes trivial if the first term exceeds unit probability, which is sensitively dependent on the ratio of $2^L$ and $r$. Nevertheless, this is a rather gentle restriction on our upper bound, which can be justified by comparing the bounds on the resourceful success probability with the classical limit of the algorithm. We define the classical limit as the corresponding protocol if we are only allowed to use operations that cannot detect or create coherence, i.e., if both $\Theta$ and $\Lambda$ are free in their respective resource theories. In this case, we are in a classical regime and all states and operations can be reduced to probability vectors and stochastic matrices. The success probability is then determined by the uniform measurement statistic and the probability that the post-processing works, which results in
\begin{equation}
\resizebox{.88\hsize}{!}{$2\tfrac{\varphi(r)}{2^L}\floor{\tfrac{2^L-1}{2r^2}} \le P^\suc(\Theta_{\text{free}},\Lambda_{\text{free}})\le 2\tfrac{\varphi(r)}{2^L} \left(1+2\floor{\tfrac{2^L}{r^2}}\right)$},
\end{equation}
as we show in the SM. If we compare the bounds on the classical limit of the success probability with the one in Thm.~\ref{thm:main2}, we see that the same prefactor occurs. In this sense, the slightly limited upper bound in Thm.~\ref{thm:main2} can be regarded as an artifact of the problem-dependence.
If the fixed protocol does not perform well in the classical limit (which is the case of interest), we conclude that coherence is the quantum resource that determines the success probability by bounding it from below and above.

\section{Discussion and outlook}
In our work, we have used resource theories to derive quantitative upper and lower bounds on the success probability of the quantum part of a sequential
version of Shor’s algorithm in terms of measures of (dynamical)
coherence. Since the full algorithm repeats the quantum part until
it succeeds, this also quantifies the total run time and speed-up in
terms of the available resources. It is a novelty of our approach that
we do not only observe how a resource evolves or depletes during
an algorithm~\cite{Hillery2016,Anand2016,Shi2017,Liu2019YC} but determine quantitatively the performance advantage
that it grants. Moreover, our approach differs from Ref.~\cite{Josza2003}, where a necessary condition for the presence of a resource (here entanglement) to admit a speed-up in pure state quantum computing was derived. The argumentation of Ref.~\cite{Josza2003} is based on the observation that a quantum protocol with limited multi-partite entanglement operating on pure states can be simulated efficiently on a classical device. As already pointed out in Ref.~\cite{Josza2003}, this approach is incapable of establishing a (quantitative) sufficient condition for the contribution of entanglement as a resource as the presence of certain forms of large scale multi-partite entanglement can permit efficient classical simulation when employing a suitable mathematical data structure such as the stabilizer formalism~\cite{gottesman1997}. 

In contrast, we derive bounds that hold even for mixed states and show quantitatively
that coherence is necessary and sufficient to achieve an advantage over the classical limit of the investigated algorithm with a fixed overall structure. This, however, comes
at the price that at present these quantitative connections are tied to a specific family of factoring algorithms. Furthermore, we remark that whilst the way we fixed the overall structure of the protocol and our choice of the free operations is natural, well-motivated, and models the operations that are available to a classical computer, other choices may be considered too. Indeed, introducing restrictions that model the capabilities of a classical computer more accurately is an open problem that would lead to different (and potentially more involved) resource theories. As an example, one can additionally restrict the ability to preserve coherence~\cite{Saxena2020}, or more generally states~\cite{Rosset2018}. It is an interesting open question whether other restrictions and the corresponding resources would lead to relations comparable to those we found; see for instance the discussion in the SM why we did not choose operations that can neither create nor detect coherence as free.

A closely related question is to what extent the overall structure of the protocol can be generalized whilst still obtaining meaningful bounds. As we discuss in the SM in more detail, our findings hold for the standard parallel version of Shor’s algorithm as depicted in Fig.~\ref{fig:ShorA} too if the first register is in a product state and if the inverse Fourier transform is implemented in a way that leads to the sequential version. 
Indeed, generalizing the structure and choosing other free operations may reveal additional resources to underpin the efficiency of the quantum processor. One may for example argue that the implementation of the modular exponentiation, which is assumed to be free in our framework, does carry a cost. Relaxing this assumption may establish entanglement as a resource that bounds the efficiency of the protocol. However, as incoherent operations such as the modular exponentiation can convert coherence to entanglement~\cite{Streltsov2015,Egloff2018,Theurer2020} it may also be possible to reduce the resource entanglement to coherence when it comes to computation.  In summary, our results depend on the choice of free operations and overall structure and we do not claim that coherence is {\em the} quantum resource for factoring alone, but we show that it is {\em a} quantum resource that lower and upper bounds the performance. In fact, it might well be that other resources not captured in our framework contribute (in other factoring algorithms) too. Exploring this is an interesting starting point for future work.

Furthermore, using our technique to fix the structure of a protocol and to define a free limit, one can investigate the role of quantum resources in other quantum algorithms too. Since general statements about the role of quantum resources in computation are often out of reach, such an algorithm and implementation specific approach might lead to further insights into the value of quantum resources in computation, which might help to understand separation between classical and quantum computing.

\begin{acknowledgments}
This work was supported in part by the BMBF project PhotonQ (Förderkennzeichen 13N15762). JMM acknowledges support from CONICET Argentina. 
TT acknowledges support from a 2020 Eyes High Postdoctoral Match-Funding Fellowship. TT’s research conducted for this paper is partially supported by the Pacific Institute for the Mathematical Sciences (PIMS). The research and findings may not reflect those of the Institute. For figures and numerics, we utilized Refs.~\cite{Kay2018,Lofberg2004,Sturm1999,cvx}.
\end{acknowledgments}
%

\appendix
\pagebreak
\widetext
\newpage


	\begin{center}
		\textbf{On the Role of Coherence in Shor's Algorithm\\
			Supplemental Material}
	\end{center}

	In this Supplemental Material, we give the proofs of the results presented in the main text and some further information. This includes additional dynamic resource measures and their properties.

\section{On the appearing measures}
In this section, we present properties of the resource measures employed in our analysis of the performance of Shor's algorithm. To begin with, we discuss the functional 
\begin{align}
	\mathscr{D}(\Lambda)=\max_{\rho} \norm{\Delta\Lambda(\id-\Delta)\rho}_1,
\end{align}
which is interesting from a resource theoretical perspective and we will see later that $\mathscr{D}(\Lambda)$ appears naturally when connecting the success probability of the investigated order-finding protocol with the ability to detect coherence. Moreover, it seems to be a natural candidate for a resource measure under detection incoherent operations. However, it is not monotonic under $\DIS$ as we show here. To relate the performance of Shor's algorithm with a rigorous dynamical measure in Thm.~\ref{Thm:LowerBound} and Thm.~\ref{thm:UpperBoundSM} we make use of the fact that the functional $\mathscr{D}$ shares sufficient similarities with the NSID measure $\tilde{M}_\diamond(\Lambda)=\min_{\Phi\in \DI}\max_{\rho} \norm{\Delta(\Lambda-\Phi)\rho}_1$. In particular, that $\mathscr{D}$ provides an upper bound on $\tilde{M}_\diamond$ and that the two functionals coincide on qubit channels.

\begin{prop}\label{prop:Properties}
	Let $\Phi\in \DI$. The functional $\mathscr{D}(\Lambda)=\max_{\rho} \norm{\Delta\Lambda(\id-\Delta)\rho}_1$ has the following properties:
	\begin{enumerate}
		\item $\mathscr{D}(\Lambda)=0 \Leftrightarrow \Lambda \in\DI$,
		\item $\mathscr{D}(\Lambda^{C\leftarrow A} \otimes \id_B)=\mathscr{D}(\Lambda^{C\leftarrow A})$,
		\item $\mathscr{D}(\Phi\Lambda)\le\mathscr{D}(\Lambda)$,
		\item $\mathscr{D}$ is convex.
	\end{enumerate}
\end{prop}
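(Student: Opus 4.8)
The plan is to establish the four properties separately, relying throughout on the facts that $\Delta$, $\Phi$, and $\Lambda$ are channels (hence completely positive, trace preserving, and Hermiticity preserving), that $\Delta$ is idempotent ($\Delta^2=\Delta$), and that the density matrices span the whole operator space, so that a linear map annihilating every state vanishes identically. Properties 1, 3, and 4 are short; the work is in property 2.

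For property 1, if $\mathscr{D}(\Lambda)=0$ then $\norm{\Delta\Lambda(\id-\Delta)\rho}_1=0$ for every state $\rho$, so $\Delta\Lambda(\id-\Delta)\rho=0$ on a spanning set and hence $\Delta\Lambda(\id-\Delta)=0$ as a linear map; this is exactly $\Delta\Lambda=\Delta\Lambda\Delta$, i.e.\ $\Lambda\in\DI$, and the converse is immediate. For property 3 I would use $\Delta\Phi=\Delta\Phi\Delta$ to rewrite $\Delta\Phi\Lambda(\id-\Delta)\rho=\Delta\Phi\bigl[\Delta\Lambda(\id-\Delta)\rho\bigr]$, and then note that $\Delta\Phi$ is a composition of channels and therefore contracts the trace norm, giving $\norm{\Delta\Phi\Lambda(\id-\Delta)\rho}_1\le\norm{\Delta\Lambda(\id-\Delta)\rho}_1\le\mathscr{D}(\Lambda)$ for each $\rho$. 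Property 4 follows from linearity of $\Delta\Lambda(\id-\Delta)$ in $\Lambda$, the triangle inequality for $\norm{\cdot}_1$, and subadditivity of the maximum.

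The crux is property 2. The key algebraic input is the factorization $\Delta_{AB}=\Delta_A\otimes\Delta_B$, which yields the splitting
\[
\id_{AB}-\Delta_{AB}=(\id_A-\Delta_A)\otimes\id_B+\Delta_A\otimes(\id_B-\Delta_B).
\]
Applying $\Lambda^{C\leftarrow A}\otimes\id_B$ and then $\Delta_{CB}=\Delta_C\otimes\Delta_B$, the second summand is killed because $\Delta_B(\id_B-\Delta_B)=\Delta_B-\Delta_B^2=0$, leaving
\[
\Delta_{CB}(\Lambda\otimes\id_B)(\id_{AB}-\Delta_{AB})=\bigl[\Delta_C\Lambda(\id_A-\Delta_A)\bigr]\otimes\Delta_B.
\]
Writing $T:=\Delta_C\Lambda(\id_A-\Delta_A)$ and expanding a general input in the $B$-basis, the identity $\Delta_B(\ketbra{j}{k})=\delta_{jk}\ketbra{j}{j}$ makes the output block-diagonal in $B$, equal to $\sum_j T(\rho^{(jj)})\otimes\ketbra{j}{j}$ with $\rho^{(jj)}=\bra{j}\rho_{AB}\ket{j}\ge0$. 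Its trace norm is then $\sum_j\norm{T(\rho^{(jj)})}_1$, and writing $\rho^{(jj)}=p_j\tilde\rho_j$ with $p_j=\Tr{\rho^{(jj)}}$, $\sum_j p_j=1$, bounds this by $\sum_j p_j\max_{\rho_A}\norm{T\rho_A}_1=\mathscr{D}(\Lambda)$. The reverse inequality comes from restricting to product inputs $\rho_A\otimes\ketbra{0}{0}_B$, which reproduce $\norm{T\rho_A}_1$ exactly.

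I expect the block-diagonal trace-norm step to be the only subtle point: the argument hinges on the output dephasing $\Delta_B$ (rather than $\id_B$) collapsing all off-diagonal $B$-terms, so that no diamond-norm-type amplification can occur when adding the ancilla, together with the elementary fact that the trace norm of a $B$-block-diagonal operator is the sum of the trace norms of its blocks. Everything else reduces to the channel/idempotency properties already recorded above.
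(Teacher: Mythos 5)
Your proof is correct. Properties 1, 3, and 4 are argued exactly as in the paper: faithfulness from the fact that states span the operator space and $\norm{\cdot}_1$ is a norm, post-processing monotonicity from $\Delta\Phi=\Delta\Phi\Delta$ followed by trace-norm contractivity under the CPTP map $\Delta\Phi$, and convexity from the triangle inequality. For property 2 you and the paper both reduce the expression to the same operator identity, $\Delta_{CB}(\Lambda\otimes\id_B)(\id_{AB}-\Delta_{AB})=\bigl[\Delta_C\Lambda(\id_A-\Delta_A)\bigr]\otimes\Delta_B$, but you finish differently: the paper invokes an external result (Lem.~14 of Ref.~\cite{Theurer2019}) to restrict the maximization to products of a pure state on $A$ with an incoherent pure state on $B$, and then uses multiplicativity of the trace norm on tensor products; you instead observe that the output dephasing $\Delta_B$ renders the output block-diagonal in $B$, so its trace norm is $\sum_j\norm{T(\rho^{(jj)})}_1$ with $\rho^{(jj)}=\bra{j}\rho_{AB}\ket{j}\ge 0$, and conclude by convexity since the $\Tr{\rho^{(jj)}}$ form a probability distribution. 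Your version is self-contained and elementary, at the cost of spelling out the block-diagonal trace-norm decomposition explicitly; the paper's version is shorter but leans on a lemma proved elsewhere. Both give the two-sided bound (the lower bound via product inputs is identical), so the proofs are interchangeable.
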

\begin{proof}
	Let us begin by pointing out that convexity in the argument follows from the convexity of the trace norm itself. Notice that for any $\Lambda\in \DI$, i.e., $\Delta\Lambda=\Delta\Lambda\Delta$, it follows that the functional vanishes. Furthermore, for any detecting channel $\Lambda$ there exists some state $\rho$ such that $\Delta\Lambda(\id-\Delta)\rho\neq 0$, which proves faithfulness since $\norm{\cdot}_1$ is a norm.\newline
	The functional behaves monotonically under post-processing with free channels even with an identity channel attached in parallel. Using~\cite[Lem.~14]{Theurer2019} for the inequality, we see that
	\begin{equation}
		\begin{split}
			\mathscr{D}(\Lambda^{C\leftarrow A} \otimes \id_B)&= \max_{\rho_{AB}} \norm{\Delta_{CB} (\Lambda^{C\leftarrow A}\otimes \id_B)(\id_{AB}-\Delta_{AB})\rho_{AB}}_1\\
			&=\max_{\rho_{AB}} \norm{\Delta_{CB} (\Lambda^{C\leftarrow A}\otimes \id_B)(\id_A\otimes \id_B-\Delta_A\otimes \Delta_B)\rho_{AB}}_1 \\
			&=\max_{\rho_{AB}} \norm{\left[(\Delta_C\Lambda^{C\leftarrow A}-\Delta_C\Lambda^{C\leftarrow A}\Delta_A)\otimes\id_B \right](\id_A\otimes \Delta_B)\rho_{AB}}_1 \\
			&\leq \max_{\ket{\psi},\ket{i}} \norm{(\Delta_C\Lambda^{C\leftarrow A}-\Delta_C\Lambda^{C\leftarrow A}\Delta_A)\ketbra{\psi}{\psi}_A\otimes \ketbra{i}{i}}_1\\
			&=\max_{\ket{\psi},\ket{i}} \norm{(\Delta_C\Lambda^{C\leftarrow A}-\Delta_C\Lambda^{C\leftarrow A}\Delta_A)\ketbra{\psi}{\psi}_A}_1 \norm{ \ketbra{i}{i}}_1\\
			&=\max_{\ket{\psi}} \norm{(\Delta_C\Lambda^{C\leftarrow A}-\Delta_C\Lambda^{C\leftarrow A}\Delta_A)\ketbra{\psi}{\psi}_A}_1, 
		\end{split}
	\end{equation}
	which coincides with $\mathscr{D}(\Lambda^{C\leftarrow A})$ due to the convexity of the trace norm. The reverse inequality follows from restricting $\rho_{AB}$ to product states in the first line.
	
	Additionally, the properties of the trace norm allow us to write
	\begin{equation}
		\begin{split}
			\mathscr{D}(\Phi\Lambda)&=\max_{\rho} \norm{\Delta\Phi\Lambda(\id-\Delta)\rho}_1\\
			&=\max_{\rho} \norm{\Delta\Phi\Delta\Lambda(\id-\Delta)\rho}_1 \\
			&\leq \max_{\rho} \norm{\Delta\Lambda(\id-\Delta)\rho}_1.
		\end{split}
	\end{equation}
\end{proof}

To show that $\mathscr{D}$ is a resource measure, we would need in addition that $\mathscr{D}(\Lambda\Phi)\le\mathscr{D}(\Lambda)$. However, this condition is violated in general for $\Phi\in\DI$. To prove this, we first describe how we can evaluate $\mathscr{D}$ numerically.

\begin{prop}\label{theo:semi_pre}
	Consider a quantum channel $\Theta^{C\leftarrow B}$ and let $N=\dim(C)$. Let further $(s_{m,n})_{m,n}$ be the matrix of dimension $2^N\times N$ that contains as rows all $N$-dimensional  vectors $\vec{s}_m$  whose entries are $\pm1$. The numerical value of $\mathscr{D}(\Theta^{C\leftarrow B} )$ is then equivalent to the maximum of the solutions of the following $2^N$ semidefinite programs (each for a fixed $m$)
	\begin{align}\label{eq:semied}
		\begin{split}
			\text{maximize:}\quad\, &t_m=\sum_{n=0}^{N-1} s_{m,n} \bra{n}_C \Theta^{C\leftarrow B} \left(\id-\Delta\right)\sigma \ket{n}_C,\\		
			\text{subject to:}\quad & \sigma_B\geq 0, \\
			& \Tr{\sigma_B}=1.
		\end{split}
	\end{align} 
\end{prop}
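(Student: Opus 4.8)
The plan is to reduce the computation of $\mathscr{D}(\Theta^{C\leftarrow B})$ to a finite family of linear optimizations over density matrices, exploiting two elementary facts: that the operator inside the trace norm is diagonal and Hermitian, and that the trace norm of a diagonal Hermitian matrix is the $\ell_1$-norm of its real diagonal, which can be linearized by maximizing over sign patterns.

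First I would observe that, for any input state $\sigma$ on $B$, the operator $X := \Delta\,\Theta^{C\leftarrow B}(\id-\Delta)\sigma$ is diagonal in the incoherent basis $\{\ket{n}_C\}$ by construction of the total dephasing $\Delta$ on $C$, and Hermitian because $\sigma$ is Hermitian while $\id-\Delta$ (on $B$), $\Theta^{C\leftarrow B}$, and $\Delta$ (on $C$) are all Hermiticity-preserving. Hence the diagonal entries $d_n := \bra{n}_C\Theta^{C\leftarrow B}(\id-\Delta)\sigma\ket{n}_C$ are real, and the trace norm collapses to $\norm{X}_1=\sum_{n=0}^{N-1}|d_n|$.

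Next I would linearize the absolute values via $\sum_n|d_n|=\max_{\vec{s}\in\{\pm1\}^N}\sum_n s_n d_n$, the maximum being attained at $s_n=\mathrm{sign}(d_n)$. Since the rows $\vec{s}_m$ of the matrix $(s_{m,n})_{m,n}$ enumerate exactly the $2^N$ sign vectors, this gives $\norm{X}_1=\max_m\sum_n s_{m,n}d_n$. Combining with the definition of $\mathscr{D}$ and interchanging the finite maximum over $m$ with the maximum over the input (valid for any two suprema), I obtain $\mathscr{D}(\Theta^{C\leftarrow B})=\max_m\big[\max_\sigma\sum_n s_{m,n}\bra{n}_C\Theta^{C\leftarrow B}(\id-\Delta)\sigma\ket{n}_C\big]$, with $\sigma$ ranging over density operators on $B$.

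Finally, for each fixed $m$ the inner objective $t_m$ is a real linear functional of $\sigma$ (by linearity of $\Theta^{C\leftarrow B}$, of $\Delta$, and of the input dephasing), so it can be written as $\Tr{A_m\sigma}$ for a Hermitian $A_m$; the feasible set $\{\sigma_B\ge0,\ \Tr{\sigma_B}=1\}$ is precisely the set of states, so the inner problem is the semidefinite program in Eq.~\eqref{eq:semied}, and taking the largest of the $2^N$ optimal values recovers $\mathscr{D}$. The one step requiring care—and the crux of the argument—is the very first observation: one must verify that $X$ is genuinely Hermitian, for only then is its diagonal real and $\norm{X}_1$ equal to $\sum_n|d_n|$, so that the sign-vector trick produces a bona fide linear objective (an SDP) rather than a problem carrying a modulus in the objective.
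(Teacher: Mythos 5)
Your proposal is correct and follows essentially the same route as the paper's own proof: reduce the trace norm of the dephased (hence diagonal, Hermitian) output to the $\ell_1$-norm of its real diagonal, linearize the absolute values by maximizing over the $2^N$ sign vectors, and observe that each resulting objective is linear in $\sigma$ over the set of density matrices. The only difference is that you make explicit the Hermiticity-preservation argument guaranteeing the diagonal entries are real, which the paper leaves implicit.
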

\begin{proof}
	We use that for real $f_n$
	\begin{equation}
		\sum_n|f_n|=\max_{\vec{s}_m} (\vec{s}_m\cdot\vec{f}),
	\end{equation}
	where the vectors $\vec{s}_m$ have been introduced in the statement of the Proposition. In addition
	\begin{align}
		\mathscr{D}(\Theta^{C\leftarrow B} )= \max_{\sigma}\norm{\Delta \Theta^{C\leftarrow B}  \left(\id-\Delta\right)\sigma}_1=\max_{\sigma}\sum_n \left|\bra{n}\Theta^{C\leftarrow B}  \left(\id-\Delta\right)\sigma\ket{n}\right|.
	\end{align}
\end{proof}
This method of evaluating $\mathscr{D}$ is certainly not the most efficient. However, with the help of the following Proposition, it allows us to disprove monotonicity.

\begin{prop} \label{prop:optimPhi}
	Let $\Theta^{C\leftarrow B}$ be a quantum channel and $A$ a third and fixed quantum system. Denote by $\mathcal{S}$ the set of all diagonal matrices of dimension $\text{dim}(C)$ with diagonal elements $\pm1 $, and by $M_A$ the matrix on system $A$ with all entries equal to one. Furthermore, we define 
	\begin{align}
		\mathcal{X}=\{X=\left(M_A-\id_A\right)\otimes\Theta^\dagger S: S\in\mathcal{S}\}.
	\end{align}
	The solution of the maximization problem 
	\begin{align}
		\max_{\sigma,\Phi^{B\leftarrow A}\in\mathcal{DI}}\norm{\Delta \Theta^{C\leftarrow B} \Phi^{B\leftarrow A} \left(\id-\Delta\right)\sigma}_1
	\end{align}
	is then given by the maximum of the solutions of the following (finite number of) semidefinite programs to be evaluated for all fixed $X\in\mathcal{X}$
	\begin{align}
		\begin{split}
			\text{maximize:}\quad\, &\Tr{X Y_{AB}},\\		
			\text{subject to:}\quad & Y\ge0,\\
			&\partTr{B}{Y}=\Delta \sigma_A,\ \\
			&\sigma_A\geq 0, \\
			& \Tr{\sigma_A}=1, \\
			&\diag\left(\bra{i}_AY\ket{j}_A\right)=0 \ \forall i\ne j.
		\end{split}
	\end{align} 
\end{prop}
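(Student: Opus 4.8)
The plan is to recast the joint maximization over the input state $\sigma_A$ and the detection-incoherent channel $\Phi^{B\leftarrow A}$ as a single semidefinite program in a Choi-type variable, extending the sign-vector technique of Prop.~\ref{theo:semi_pre} so that it now carries the free channel along. First I would note that $\Delta\Theta^{C\leftarrow B}\Phi^{B\leftarrow A}(\id-\Delta)\sigma$ is Hermitian and diagonal in $C$, so that its trace norm equals $\max_{S\in\mathcal{S}}\Tr{S\,\Theta^{C\leftarrow B}\Phi^{B\leftarrow A}(\id-\Delta)\sigma}$, the outer $\Delta$ being redundant because $S$ is diagonal. Since $\mathcal{S}$ is finite, I can exchange this maximization with the one over $(\sigma,\Phi)$ and fix $S$ (equivalently $X=(M_A-\id_A)\otimes\Theta^\dagger S$); the claim then reduces, for each $S$, to identifying $\max_{\sigma,\Phi\in\DI}\Tr{S\,\Theta\Phi(\id-\Delta)\sigma}$ with the stated program.

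The key step is to encode the pair $(\Phi,\sigma)$ in the single operator $Y_{AB}=(\id_A\otimes\Phi^{B\leftarrow A'})(\tilde\sigma)$, where $\tilde\sigma=\sum_{ij}\sigma_{ij}\,\ketbra{i}{j}_A\otimes\ketbra{i}{j}_{A'}$ is the maximally correlated state of $\sigma$, which is positive whenever $\sigma$ is. This gives $\bra{i}_AY\ket{j}_A=\sigma_{ij}\Phi(\ketbra{i}{j})$, from which I would read off the three constraints of the program: $Y\ge0$ since $\Phi$ is completely positive; $\partTr{B}{Y}=\Delta\sigma_A$ by trace preservation of $\Phi$; and $\diag(\bra{i}_AY\ket{j}_A)=0$ for $i\ne j$ precisely because $\Phi\in\DI$, i.e.\ $\Delta_B\Phi=\Delta_B\Phi\Delta_A$. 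Using the identities $\Phi(\sigma)=\partTr{A}{(M_A\otimes\id_B)Y}$ and $\Phi(\Delta\sigma)=\partTr{A}{Y}$, the objective turns into $\Tr{\Theta^\dagger(S)\,\partTr{A}{((M_A-\id_A)\otimes\id_B)Y}}=\Tr{XY}$, matching the program exactly. This direction shows that the original maximum is at most the maximum of the SDPs.

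The hard part will be the converse: showing that every feasible $Y$ is realized by some admissible $(\Phi,\sigma)$ achieving the same value $\Tr{XY}$. Given feasible $Y$, I would set $d=\partTr{B}{Y}=\sum_i d_i\ketbra{i}{i}$ and choose the pure input $\sigma=\ketbra{\psi}{\psi}$ with $\ket{\psi}=\sum_i\sqrt{d_i}\ket{i}$, so that $\Delta\sigma=d$; on the support of $d$ I would define $\Phi$ through the Choi matrix $J=(D^{-1}\otimes\id_B)Y(D^{-1}\otimes\id_B)$ with $D=\sum_i\sqrt{d_i}\ketbra{i}{i}$. Then $J\ge0$ follows from $Y\ge0$, trace preservation from the reduced-state constraint, and $\Phi\in\DI$ from the $\diag$-constraint forcing the off-diagonal blocks of $Y$ to have vanishing diagonal. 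The delicate point is the kernel of $d$: whenever $d_i=0$, positivity of $Y$ forces the whole $i$-th block row and column of $Y$ to vanish, so these indices neither contribute to $\Tr{XY}$ nor obstruct feasibility, and $\Phi$ can be completed to a bona fide channel on all of $A$ (for instance by sending those basis states to a fixed incoherent state) while staying in $\DI$. Checking that this completion is CPTP and detection-incoherent, and that the reconstructed data indeed satisfy $Y=(\id_A\otimes\Phi)(\tilde\sigma)$, would close the reverse inequality and complete the proof.
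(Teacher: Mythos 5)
Your proof is correct and follows essentially the same route as the paper: encode the pair $(\Phi,\sigma)$ in the operator $Y_{AB}=(\id_A\otimes\Phi)\tilde\sigma$ built from the maximally correlated extension of $\sigma$, characterize the feasible set by the stated semidefinite constraints, and linearize the trace norm with the sign matrices $S\in\mathcal{S}$ to arrive at $\Tr{XY}$. The only difference is that the paper delegates the equivalence between admissible pairs $(\Phi,\sigma)$ and feasible $Y$ to Lem.~12 of Ref.~\cite{Masini2021}, whereas you prove both directions explicitly (including the rescaled-Choi reconstruction and the treatment of the kernel of $\partTr{B}{Y}$), which is a sound and self-contained substitute.
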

\begin{proof}
	We write quantum states as 
	$\rho=\sum_{i,j}\rho_{ij}\ketbra{i}{j}$ and the action of quantum channels as $\Phi(\ketbra{i}{j})=\sum_{k,l} \Phi^{ij}_{kl}\ketbra{k}{l}$. With this notation, we find
	\begin{align}
		\norm{\Delta \Theta \Phi \left(\id-\Delta\right)\rho}_1&=\norm{\Delta \Theta \left[\sum_{i\ne j}\sum_{kl}\Phi_{kl}^{ij}\ketbra{k}{l}\rho_{ij} \right]}_1 \nonumber \\
		&=\norm{\Delta \Theta \left[\sum_{i\ne j} \bra{i}_A \left(\sum_{op} \rho_{op} \ketbra{o}{p}_A\otimes\sum_{kl} \Phi_{kl}^{op}\ketbra{k}{l}_B\right)\ket{j}_A \right]}_1\nonumber \\
		&=\norm{\Delta \Theta \left[\sum_{i\ne j} \bra{i}_A \left( \left(\id^A \otimes \Phi^{B\leftarrow \tilde{A}} \right)\sum_{op} \rho_{op} \ketbra{oo}{pp}_{A\tilde{A}}\right)\ket{j}_A \right]}_1.
	\end{align}
	Using \cite[Lem.~12]{Masini2021} we thus have
	\begin{align}
		\max_{\sigma,\Phi^{B\leftarrow A}\in\mathcal{DI}}\norm{\Delta \Theta^{C\leftarrow B} \Phi^{B\leftarrow A} \left(\id-\Delta\right)\sigma}_1
		=\max_{Y_{AB}\in\mathcal{Y}} \norm{\Delta \Theta^{C\leftarrow B} \left[\sum_{i\ne j} \bra{i}_A Y_{AB}\ket{j}_A \right]}_1,
	\end{align}
	where the set $\mathcal{Y}$ is defined as
	\begin{align}
		\mathcal{Y}:= &\{Y_{AB} | Y\ge0,\ \text{Tr}_B(Y)=\Delta \sigma_A,\ \diag\left(\bra{i}_AY\ket{j}_A\right)=0 \ \forall i\ne j,\ \sigma_A  \text{ quantum state} \}
	\end{align}
	and therefore characterized by semidefinite constraints. Using the absolute value technique from Prop.~\ref{theo:semi_pre}, we can solve this optimization problem via a set of SDPs: With the definitions from the Proposition, we note that $\sum_{i\ne j} \bra{i}_A Y_{AB}\ket{j}_A=\text{Tr}_A \left[\left(M_A-\id_A\right)\otimes\id_B Y_{AB}\right]$ and therefore 
	\begin{align}
		&\max_{Y_{AB}\in\mathcal{Y}} \left|\left|\Delta \Theta^{C\leftarrow B} \left[\sum_{i\ne j} \bra{i}_A Y_{AB}\ket{j}_A \right]\right|\right|_1 \nonumber \\
		&=\max_{S \in\mathcal{S}}\max_{Y_{AB}\in\mathcal{Y}} \text{Tr}\left(S  \Theta^{C\leftarrow B} \left[\sum_{i\ne j} \bra{i}_A Y_{AB}\ket{j}_A \right]\right) \nonumber \\
		&=\max_{S \in\mathcal{S}}\max_{Y_{AB}\in\mathcal{Y}} \text{Tr}\left(\left[\Theta^\dagger S  \right]  \left[\sum_{i\ne j} \bra{i}_A Y_{AB}\ket{j}_A \right]\right) \nonumber \\
		&=\max_{S \in\mathcal{S}}\max_{Y_{AB}\in\mathcal{Y}} \text{Tr}\left(\left[\Theta^\dagger S  \right] \text{Tr}_A\left[\left(M_A-\id_A\right)\otimes\id_B Y_{AB}\right]\right) \nonumber \\
		&=\max_{S \in\mathcal{S}}\max_{Y_{AB}\in\mathcal{Y}} \text{Tr}\left(\left[\id_A\otimes\Theta^\dagger S  \right] \left(M_A-\id_A\right)\otimes\id_B Y_{AB}\right)\nonumber\\
		&=\max_{X\in\mathcal{X}} \max_{Y_{AB}\in\mathcal{Y}} \text{Tr}\left(X  Y_{AB}\right).
	\end{align}
\end{proof}
Due to this Proposition, for every fixed system $A$, we can evaluate 
\begin{align}
	\tilde{\mathscr{D}}_A(\Theta^{C\leftarrow B}):=\max_{\Phi^{B\leftarrow A}\in\DI} \mathscr{D}(\Theta^{C\leftarrow B} \Phi^{B\leftarrow A})= \max_{\Phi^{B\leftarrow A}\in\DI} \max_{\sigma_A} \norm{\Delta\Theta^{C\leftarrow B} \Phi^{B\leftarrow A}(\id_A-\Delta_A)\sigma_A}_1
\end{align}
numerically by solving a collection of SDPs. We now move to the equivalence on qubit channels, which we will use to connect the performance of Shor's algorithm to the ability to detect coherence.

\begin{lem}\label{lem:DetectabilityNSID}
	Let $\Lambda$ be any qubit channel defined in the index representation as $\Lambda(\ketbra{n}{m})=\sum_{kl} \Lambda_{kl}^{nm} \ketbra{k}{l}$.
	
	Then the functional $\mathscr{D}$
	\begin{enumerate}
		\item  is  given by $\mathscr{D}(\Lambda)=2|\Lambda_{00}^{01}|$,
		\item coincides with the NSID measure~\cite{Theurer2019} $\tilde{M}_\diamond(\Lambda)=\min_{\Phi \in \DI} \max_\rho \norm{\Delta(\Lambda-\Phi)\rho}_1$.
	\end{enumerate}
\end{lem}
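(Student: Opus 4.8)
The plan is to establish the two items in order: first derive the closed form $\mathscr{D}(\Lambda)=2|\Lambda_{00}^{01}|$ in item~1, and then use it as the target value when pinning down $\tilde{M}_\diamond$ in item~2.

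For item~1 I would exploit that $\Delta\Lambda(\id-\Delta)\rho$ sees $\rho$ only through its off-diagonal part. For a qubit $(\id-\Delta)\rho=\rho_{01}\ketbra{0}{1}+\rho_{10}\ketbra{1}{0}$, so after the final dephasing the output is diagonal with $(n,n)$-entry $\rho_{01}\Lambda_{nn}^{01}+\rho_{10}\Lambda_{nn}^{10}$. Hermiticity preservation of $\Lambda$ forces $\Lambda_{00}^{10}=\overline{\Lambda_{00}^{01}}$, and trace preservation (applied to the off-diagonal inputs) forces $\Lambda_{00}^{01}+\Lambda_{11}^{01}=0$, together with the conjugate relation. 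Feeding these constraints in, together with $\rho_{10}=\overline{\rho_{01}}$, collapses the output to $2\real(\rho_{01}\Lambda_{00}^{01})\bigl(\ketbra{0}{0}-\ketbra{1}{1}\bigr)$, whose trace norm is $4|\real(\rho_{01}\Lambda_{00}^{01})|$. Maximising over qubit states, where $|\rho_{01}|\le\tfrac12$ with free phase (saturated by a $\ket{+}$-type pure state), then yields $\mathscr{D}(\Lambda)=2|\Lambda_{00}^{01}|$.

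For item~2 I would split into the two inequalities. The bound $\tilde{M}_\diamond(\Lambda)\le\mathscr{D}(\Lambda)$ holds in every dimension and follows from a single well-chosen free channel: take $\Phi=\Lambda\Delta$, which lies in $\DI$ because $\Delta(\Lambda\Delta)\Delta=\Delta\Lambda\Delta=\Delta(\Lambda\Delta)$. For this choice $\Delta(\Lambda-\Phi)=\Delta\Lambda(\id-\Delta)$, so the inner maximisation in the definition of $\tilde{M}_\diamond$ evaluated at this $\Phi$ equals $\mathscr{D}(\Lambda)$ exactly, and the minimum over $\DI$ can only decrease it. This gives $\tilde{M}_\diamond(\Lambda)\le\mathscr{D}(\Lambda)$ without using the qubit assumption.

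The reverse inequality is the qubit-specific part and is where the real work lies; I expect the triangle-inequality argument below to be the main obstacle. Fix an arbitrary $\Phi\in\DI$ and construct the pair of states $\rho_\pm=\tfrac12\bigl(\id\pm(c\ketbra{0}{1}+\bar c\ketbra{1}{0})\bigr)$ with $|c|=1$, which share the maximally mixed diagonal but carry opposite coherences. Since $\Delta(\rho_+-\rho_-)=0$ and $\Phi\in\DI$ satisfies $\Delta\Phi=\Delta\Phi\Delta$, the free contributions cancel, $\Delta\Phi(\rho_+-\rho_-)=\Delta\Phi\Delta(\rho_+-\rho_-)=0$, so that $\Delta(\Lambda-\Phi)(\rho_+-\rho_-)=\Delta\Lambda(\rho_+-\rho_-)$, whose trace norm is $4|\Lambda_{00}^{01}|$ for the optimal phase of $c$ by the item~1 computation. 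The triangle inequality then gives $\norm{\Delta(\Lambda-\Phi)\rho_+}_1+\norm{\Delta(\Lambda-\Phi)\rho_-}_1\ge 4|\Lambda_{00}^{01}|$, so at least one of $\rho_\pm$ achieves $2|\Lambda_{00}^{01}|=\mathscr{D}(\Lambda)$. Hence $\max_\rho\norm{\Delta(\Lambda-\Phi)\rho}_1\ge\mathscr{D}(\Lambda)$ for every $\Phi\in\DI$, and minimising over $\Phi$ preserves the bound, giving $\tilde{M}_\diamond(\Lambda)\ge\mathscr{D}(\Lambda)$. Combining the two inequalities yields the claimed equality.
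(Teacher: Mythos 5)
Your proof is correct, and while item~1 and the inequality $\tilde{M}_\diamond(\Lambda)\le\mathscr{D}(\Lambda)$ (via $\Phi=\Lambda\Delta$) follow essentially the same lines as the paper, your argument for $\tilde{M}_\diamond(\Lambda)\ge\mathscr{D}(\Lambda)$ is genuinely different. The paper proves this direction by brute force: it writes $2|\bra{0}\Delta(\Lambda-\Phi)\rho\ket{0}|$ for a pure state, optimizes the relative phase to turn the expression into $|\sum_n p_n(\Lambda_{00}^{nn}-\Phi_{00}^{nn})|+2\sqrt{p_0p_1}|\Lambda_{00}^{01}|$, and then simply drops the nonnegative population term. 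You instead use the two states $\rho_\pm$ with identical diagonals and opposite coherences, observe that $\Delta\Phi(\rho_+-\rho_-)=\Delta\Phi\Delta(\rho_+-\rho_-)=0$ for every $\Phi\in\DI$ so that the free channel drops out of the difference entirely, and conclude via the triangle inequality that one of the two states must already witness $\mathscr{D}(\Lambda)$. This isolates the conceptual reason the bound holds --- detection-incoherent channels cannot distinguish states with equal diagonals --- and avoids any phase bookkeeping involving $\Phi$; the premise $\Delta\Phi(\rho_+-\rho_-)=0$ is dimension-independent, and only the final evaluation $\norm{\Delta\Lambda(\rho_+-\rho_-)}_1=4|\Lambda_{00}^{01}|$ uses the qubit structure. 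The paper's computation, by contrast, yields the explicit intermediate expression for $\max_\rho\norm{\Delta(\Lambda-\Phi)\rho}_1$ at fixed $\Phi$, which is slightly more information but is not needed for the Lemma. Your explicit derivation of $\Lambda_{00}^{10}=\overline{\Lambda_{00}^{01}}$ and $\Lambda_{11}^{01}=-\Lambda_{00}^{01}$ from Hermiticity and trace preservation correctly replaces the paper's citation of an external lemma for the same facts.
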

\begin{proof}
	Due to convexity of the trace norm, the optimization in the definition of $\mathscr{D}(\Lambda)$ over all states can be reduced to pure states $\rho=\ketbra{\psi}{\psi}$. We expand a pure qubit state as $\ket{\psi}=p_0 \ket{0}+p_1 e^{i\phi} \ket{1}$ and write $\Lambda_{kl}^{nm}=|\Lambda_{kl}^{nm}|e^{i\lambda_{kl}^{nm}}$. According to \cite[Lem.~6]{Masini2021},  $|\lambda_{00}^{01}|=-|\lambda_{00}^{10}|=:\lambda$ and $|\Lambda_{00}^{01}|=|\Lambda_{00}^{10}|$. For the first part, from $\Delta\Lambda(\id-\Delta)$ being diagonal it follows then straightforward that
	\begin{equation}
		\begin{split}
			\mathscr{D}(\Lambda)&= \max_{\rho} \norm{\Delta\Lambda(\id-\Delta)\rho}_1 =\max_{\rho} \,\, 2 \left| \bra{0}\Delta\Lambda(\id-\Delta)\rho \ket{0} \right |\\
			&=\max_{\lbrace p_n, \phi \rbrace} 2\left| \sum_{n\neq m} \sqrt{p_np_m} e^{i\phi(n-m)} \Lambda_{00}^{nm} \right| =\max_{\lbrace p_n, \phi \rbrace} 2\left| \sum_{n\neq m} \sqrt{p_np_m} e^{i\phi(n-m)} |\Lambda_{00}^{nm}| e^{i\lambda_{00}^{nm}} \right| \\
			&=\max_{\lbrace p_n, \phi \rbrace} 2\left| \sum_{n\neq m} \sqrt{p_np_m} e^{i(\phi+\lambda)(n-m)} |\Lambda_{00}^{nm}| \right|\\
			&=\max_{\lbrace p_n, \phi \rbrace}  4\sqrt{p_0p_1} \cos(\phi+\lambda) |\Lambda_{00}^{01}|, \\
			&=2 |\Lambda_{00}^{01}|.
		\end{split}
	\end{equation}
	Secondly, for a qubit map $\Lambda$, we find that
	\begin{equation}
		\begin{split}
			\tilde{M}_\diamond(\Lambda)&=\min_{\Phi \in \DI} \max_\rho \norm{\Delta(\Lambda-\Phi)\rho}_1\\
			&= \min_{\Phi \in \DI} \max_\rho \, 2 \left| \bra{0} \Delta(\Lambda-\Phi)\rho \ket{0} \right| \\
			&= \min_{\Phi \in \DI} \max_{\lbrace p_n,\phi \rbrace } \,2 \left| \sum_n p_n (\Lambda_{00}^{nn}-\Phi_{00}^{nn}) +\sum_{n\neq m}\sqrt{p_np_m} e^{i\phi(n-m)} \Lambda_{00}^{nm}   \right|\\
			&= \min_{\Phi \in \DI} \max_{\lbrace p_n,\phi \rbrace } \,2 \left| \sum_n p_n (\Lambda_{00}^{nn}-\Phi_{00}^{nn}) +\sum_{n\neq m} \sqrt{p_np_m} e^{i(\phi+\lambda)(n-m)} |\Lambda_{00}^{nm}|  \right|.  
		\end{split}
	\end{equation}
	
	Now we first consider the inner optimization problem, i.e., we fix $\Phi$. The first sum always evaluates to a real number, and the phase $\phi$ only appears in the second sum. Let us assume that the first sum is positive. The optimum over $\phi$ is then obviously achieved for $\phi=-\lambda$. If the first sum is negative, the optimum is $\phi=\pi-\lambda$. In both cases, we have 
	\begin{equation}
		\begin{split}
			\tilde{M}_\diamond(\Lambda)&=  \min_{\Phi \in \DI} \max_{\lbrace p_n\rbrace } \, 2 \left(\left| \sum_n p_n (\Lambda_{00}^{nn}-\Phi_{00}^{nn})\right| + 2\sqrt{p_0p_1} |\Lambda_{00}^{01}| \right) \\
			&\ge \max_{\lbrace p_n\rbrace } \, 4\sqrt{p_0p_1} |\Lambda_{00}^{01}| = 2| \Lambda_{00}^{01}|\\
			&=\mathscr{D}(\Lambda).
		\end{split}
	\end{equation}
	Since $\Lambda \Delta \in \DI$, we also have 
	\begin{equation}
		\begin{split}
			\tilde{M}_\diamond(\Lambda)&=\min_{\Phi \in \DI} \max_\rho \norm{\Delta(\Lambda-\Phi)\rho}_1\\
			&\le \max_\rho \,\norm{\Delta(\Lambda-\Lambda\Delta)\rho}_1 \\
			&= \mathscr{D}(\Lambda)
		\end{split}
	\end{equation}
	and find 
	\begin{equation}
		\tilde{M}_\diamond(\Lambda)=\mathscr{D}(\Lambda) =2| \Lambda_{00}^{01}|
	\end{equation}
	for qubit channels $\Lambda$.
\end{proof}

Furthermore, this allows us to prove that $\mathscr{D}(\Lambda)=\max_{\rho} \norm{\Delta\Lambda(\id-\Delta)\rho}_1$ fails to form a measure as defined in the main text.

\begin{prop}
	There exist $\Phi \in \DI, \Theta$ such that $\mathscr{D}(\Theta)<\mathscr{D}(\Theta\Phi)$.
\end{prop}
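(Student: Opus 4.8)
The plan is to produce a single explicit pair $(\Theta,\Phi)$ with $\Phi\in\DI$ witnessing $\mathscr{D}(\Theta)<\mathscr{D}(\Theta\Phi)$. First I would record where such a pair cannot live: by Lem.~\ref{lem:DetectabilityNSID}, $\mathscr{D}$ agrees with the genuine monotone $\tilde{M}_\diamond$ on qubit channels, and Prop.~\ref{prop:Properties} already gives post-processing monotonicity; hence the violation must be a \emph{pre}-processing effect and must involve a system of dimension at least three. To locate the slack, I would rewrite both quantities geometrically. Writing $\mathcal{O}_B=\{(\id-\Delta)\sigma:\sigma\text{ a state on }B\}$, one has $\mathscr{D}(\Theta)=\max_{X\in\mathcal{O}_B}\norm{\Delta\Theta X}_1$. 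Since every $\Phi\in\DI$ satisfies $\Delta\Phi(\id-\Delta)=0$, the operator $\Phi(\id-\Delta)\sigma_A$ is purely off-diagonal, so $\max_{\Phi\in\DI}\mathscr{D}(\Theta\Phi)=\max_{X\in\mathcal{O}_B'}\norm{\Delta\Theta X}_1$, where $\mathcal{O}_B'=\{\Phi(\id-\Delta)\sigma_A:\Phi\in\DI,\ \sigma_A\text{ a state}\}$ is the set of off-diagonal operators reachable through detection-incoherent pre-processing. As $\mathcal{O}_B'\supseteq\mathcal{O}_B$ trivially (take $A=B$, $\Phi=\id$), it suffices to exhibit an element of $\mathcal{O}_B'\setminus\mathcal{O}_B$ that some $\Theta$ detects more strongly than anything in $\mathcal{O}_B$.

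The core step is to open the gap $\mathcal{O}_B'\supsetneq\mathcal{O}_B$ with $\dim B=3$. I would take $A$ a qubit, $B$ a qutrit, and let $\Phi(\cdot)=V(\cdot)V^\dagger$ be the isometric channel with $V=\ketbra{0}{0}+\tfrac{1}{\sqrt2}(\ket{1}+\ket{2})\bra{1}$. Because $\ket{0}$ and $\tfrac{1}{\sqrt2}(\ket{1}+\ket{2})$ have disjoint support in the computational basis, the output populations of $\Phi$ depend only on the input populations, so $\Phi\in\DI$ (yet $\Phi$ creates coherence, which is exactly the point). Feeding in $\ketbra{+}{+}$ gives $\omega:=\Phi\big((\id-\Delta)\ketbra{+}{+}\big)=\tfrac{1}{2\sqrt2}\big(\ketbra{0}{1}+\ketbra{0}{2}+\ketbra{1}{0}+\ketbra{2}{0}\big)\in\mathcal{O}_B'$. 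I would then check $\omega\notin\mathcal{O}_B$: any state with these off-diagonal entries must have $\sigma_{12}=0$ and $\sigma_{01}=\sigma_{02}=\tfrac{1}{2\sqrt2}$, but the $2\times2$ minors force $d_0d_1,d_0d_2\ge\tfrac18$, and then $\det\sigma\ge0$ collapses the diagonal to $(d_0,d_1,d_2)=(\tfrac12,\tfrac14,\tfrac14)$ with $\det\sigma=-\tfrac{1}{32}<0$, a contradiction. Thus $\omega$ is an off-diagonal operator that is \emph{not} the off-diagonal part of any density matrix.

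Finally I would convert this separation into a detecting channel. Since $\mathcal{O}_B$ is closed and convex and $\omega\notin\mathcal{O}_B$, there is an off-diagonal Hermitian $G$ with $\Tr{G\omega}>\max_{X\in\mathcal{O}_B}\Tr{GX}$. I would realize $G$ by a measure-and-prepare qubit-output channel $\Theta_H(\rho)=\Tr{H\rho}\ketbra{0}{0}+\Tr{(\id-H)\rho}\ketbra{1}{1}$ with effect $H=\tfrac12\id+\epsilon G$ (valid for small $\epsilon$): for qubit output $\norm{\Delta\Theta_H X}_1=2|\Tr{HX}|$, and against off-diagonal $X$ only the off-diagonal part $\epsilon G$ of $H$ contributes, so $\mathscr{D}(\Theta_H)=2\epsilon\max_{X\in\mathcal{O}_B}|\Tr{GX}|$ while $\mathscr{D}(\Theta_H\Phi)\ge2\epsilon|\Tr{G\omega}|$, giving the strict inequality. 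The main obstacle is the absolute value in the trace norm: the separating functional must dominate $\mathcal{O}_B$ on \emph{both} signs, so I would either symmetrize $G$ (or enlarge the output alphabet so that $\norm{\Delta\Theta\,\cdot}_1$ realizes the one-sided functional directly), or—most safely—simply certify the strict inequality for this fixed small instance by evaluating $\mathscr{D}(\Theta_H)$ via Prop.~\ref{theo:semi_pre} and $\max_{\Phi\in\DI}\mathscr{D}(\Theta_H\Phi)$ via the finite family of semidefinite programs in Prop.~\ref{prop:optimPhi}. The genuine difficulty, and the reason the statement is nontrivial, is precisely that no such example exists when $\dim B=2$: there $\mathcal{O}_B'=\mathcal{O}_B$, matching the monotonicity of $\tilde{M}_\diamond$ guaranteed by Lem.~\ref{lem:DetectabilityNSID}.
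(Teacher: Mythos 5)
There is a genuine gap, and it is structural rather than a fixable detail: you have placed the higher-dimensional system on the wrong side of $\Theta$. In your construction the overall input $A$ is a qubit and $\Theta_H$ has a qubit output, so the composite $\Theta_H\Phi$ is a qubit-to-qubit channel. By the very Lemma you invoke (Lem.~\ref{lem:DetectabilityNSID}), $\mathscr{D}(\Theta_H\Phi)=\tilde{M}_\diamond(\Theta_H\Phi)$. Pre-composition with $\Phi\in\DI$ is a free super-channel, and $\tilde{M}_\diamond$ is a genuine monotone (directly: restricting the minimization in $\tilde{M}_\diamond(\Lambda\Phi)$ to channels of the form $\Psi'\Phi$ with $\Psi'\in\DI$ gives $\tilde{M}_\diamond(\Lambda\Phi)\le\tilde{M}_\diamond(\Lambda)$). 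Combined with $\tilde{M}_\diamond(\Lambda)\le\mathscr{D}(\Lambda)$, which holds for every channel by choosing $\Lambda\Delta$ in that same minimization, this yields $\mathscr{D}(\Theta_H\Phi)\le\tilde{M}_\diamond(\Theta_H)\le\mathscr{D}(\Theta_H)$ for \emph{every} choice of $G$ and $\epsilon$ --- the opposite of what you need. Your verification that $\omega\notin\mathcal{O}_B$ is correct, but it cannot be converted into a violation: since $\Phi\in\DI$ preserves the diagonal, $\omega=(\id-\Delta)\Phi(\ketbra{+}{+})-(\id-\Delta)\Phi(\id/2)$ is a difference of two elements of $\mathcal{O}_B$, and the two-sided functional $\norm{\Delta\Theta_H(\cdot)}_1$ does not strictly separate it in the required direction. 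Concretely, for the natural separator $G=\ketbra{0}{1}+\ketbra{0}{2}+\mathrm{h.c.}$ one computes $\mathscr{D}(\Theta_H)=2\sqrt{2}\,\epsilon=\mathscr{D}(\Theta_H\Phi)$: exact equality. Enlarging the output alphabet evades the lemma formally, but you give no working instance, and the SDP certification you defer to would simply report failure for this candidate.

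The arrangement that works is the mirror image of yours, and your own abstract sets $\mathcal{O}_B\subsetneq\mathcal{O}_B'$ describe it correctly once $B$ is taken to be a \emph{qubit} and $A$ the larger system. For qubit $B$, every element of $\mathcal{O}_B$ is $c\,(e^{i\phi}\ketbra{0}{1}+\mathrm{h.c.})$ with $c\le 1/2$, and $\norm{\Delta\Theta(\cdot)}_1$ is homogeneous in $c$; so any element of $\mathcal{O}_B'$ with $c>1/2$ immediately yields a strict violation for a suitable qubit $\Theta$. A detection-incoherent qutrit-to-qubit $\Phi$ applied to the off-diagonal part of a maximally coherent qutrit state can indeed produce such an element with $c=2/3$, which is exactly the paper's construction: a qubit channel $\Theta$ with $\mathscr{D}(\Theta)=1$ (via Lem.~\ref{lem:DetectabilityNSID}) and a qutrit-to-qubit $\Phi\in\DI$, extracted from the SDP of Prop.~\ref{prop:optimPhi}, giving $\mathscr{D}(\Theta\Phi)\ge 4/3$. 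To repair your proof, keep the separating-set picture but instantiate it with $\dim B=2$ and $\dim A=3$, and exhibit the $c>1/2$ element explicitly.
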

\begin{proof}
	Let $\Theta^{B\leftarrow B}$ be defined via the two Kraus operators
	\begin{align}
		K_1=\frac{1}{\sqrt{2}}\begin{pmatrix}
			i & 1 \\
			0 & 0
		\end{pmatrix}, \quad K_2=\frac{1}{\sqrt{2}}\begin{pmatrix}
			0 & 0 \\
			1 & i
		\end{pmatrix},
	\end{align}
	i.e., $|B|=2$ (and it is straightforward to check that this defines indeed a CPTP map). With the help of Lem.~\ref{lem:DetectabilityNSID}, we find $\mathscr{D}(\Theta)=1$. Choosing $|A|=3$, we can use  Prop.~\ref{prop:optimPhi} to evaluate $\tilde{\mathcal{D}}_A(\Theta^{B\leftarrow B})$ numerically. Moreover, it is possible to extract optimal $\Phi^{B\leftarrow A}$ and $\sigma_A$ from the solution of the semidefinite program. An optimal choice consists of
	\begin{align}
		\sigma_A=\frac{1}{3}\begin{pmatrix}
			1 & 1 & 1 \\
			1 & 1 & 1 \\
			1 & 1 & 1
		\end{pmatrix},
	\end{align}
	which is a maximally coherent state and the quantum operation $\Phi^{C\leftarrow A}$ given by its Choi state
	\begin{align}
		J_{\Phi}=(\id_A\otimes \Phi) \sum_{n,m=1}^3\ketbra{nn}{mm}=:\sum_{n,m=1}^3\ketbra{n}{m}_A\otimes X_{nm}
	\end{align}
	where
	\begin{align}
		X_{nn}=\begin{pmatrix}
			1/2 & i/6 \\
			-i/6& 1/2
		\end{pmatrix}
	\end{align}
	and
	\begin{align}
		X_{nm}= \begin{pmatrix}
			0 & -i/3 \\
			i/3& 0
		\end{pmatrix}
	\end{align}
	for $n\ne m$.
	It is straightforward to check that $J_{\Phi}$ is hermitian, has eigenvalues $(0,0,0,1,1,1)$, and
	\begin{align}
		\text{Tr}_B\left(J_\Phi\right)=\id_A,
	\end{align}
	i.e., $\Phi^{C\leftarrow A}$ is CPTP. Moreover, due to
	\begin{align}
		&J_{\Delta\Phi\Delta}=\Delta_{AB} J_\Phi=\frac{1}{2}\id_{AB}=\Delta_B J_\Phi=J_{\Delta\Phi},
	\end{align}
	$\Phi\in \mathcal{DI}$. We are not going to prove optimality of $\Phi$ and $\sigma$, for example by deriving the dual program, but rather note that
	\begin{align}
		\tilde{\mathcal{D}}_A(\Theta^{B\leftarrow B})=&\max_{\tilde{\Phi}^{B\leftarrow A}\in\mathcal{DI}}\max_{\tilde{\sigma}_A}\norm{\Delta \Theta^{B\leftarrow B} \tilde{\Phi}^{B\leftarrow A} \left(\id-\Delta\right)\tilde{\sigma}_A}_1 \nonumber \\
		\ge& \norm{\Delta \Theta^{B\leftarrow B} \Phi^{B\leftarrow A} \left(\id-\Delta\right)\sigma_A}_1 \nonumber \\
		=&\norm{\Delta \Theta^{B\leftarrow B}\ \text{Tr}_A \left[\left(\left(\left(\id-\Delta\right)\sigma_A\right)^T\otimes \id_B \right) J_\Phi\right]}_1 \nonumber \\
		=&\norm{\Delta \Theta^{B\leftarrow B}\ \text{Tr}_A \left[\left(\left(\id-\Delta\right)\sigma_A\otimes \id_B\right) J_\Phi\right]}_1 \nonumber \\
		=& \norm{\Delta \Theta^{B\leftarrow B} \frac{1}{3}\sum_{\substack{n,m=0\\ n\ne m}}^{3} X_{ij} }_1 \nonumber \\
		=&\frac{2}{3} \norm{\Delta \Theta^{B\leftarrow B} \begin{pmatrix}
				0 & -i \\
				i &  0
		\end{pmatrix}}_1 \nonumber \\
		=& \frac{2}{3} \norm{ \begin{pmatrix}
				1 & 0 \\
				0 &  -1
		\end{pmatrix}}_1 \nonumber \\
		=&\frac{4}{3}>1,
	\end{align}
	which finishes the proof by giving an explicit example.
\end{proof}

Whilst this is not the purpose of this Letter, we note that it is straightforward to show that the family of functionals
\begin{align}
	\tilde{\mathscr{D}}_A(\Theta^{C\leftarrow B})= \max_{\Phi^{B\leftarrow A}\in\DI} \max_{\sigma_A} \norm{\Delta\Theta^{C\leftarrow B} \Phi^{B\leftarrow A}(\id_A-\Delta_A)\sigma_A}_1
\end{align}
defines resource measures in the detection-incoherent setting in itself. We notice similarities to the measures in Ref.~\cite{Masini2021}, but leave further investigations on these measures, for example on an operational interpretation, to future work.

\section{Shor's factorization algorithm}	
In this section we review Shor's algorithm, beginning with the basic prerequisites in number theory, moving on to Shor's protocol and a sequential version introduced in Ref.~\cite{Parker2000}. Additionally, the fine-tuned interplay of the quantum part and the classical post-processing via the continued fraction algorithm is discussed in detail, paving the way for a discussion of the protocols investigated in this work. Some notable examples of further reading on Shor's algorithm are the articles~\cite{Shor1995, Parker2000} and the textbook~\cite{Nielsen2016}, on which the following brief review is based. \newline

\subsection{Reduction to order-finding} The first step in Shor's algorithm is the reduction of the integer factorization problem to the so-called order-finding problem~\cite{Shor1995}. Let $N$ denote the integer to be factorized, which consists of $m$ distinct prime factors and can be represented in an $n$ bit string. Furthermore, let $x$ be an integer $1\leq x<N$ with $x$ coprime to $N$, i.e., $x$ and $N$ share no common factor. The order-finding problem is then to find the \textit{smallest} integer $r$ such that $x^r= 1 \,\text{mod}\, N$. This integer $r$ is called the order of $x$ modulo $N$. The reduction of factoring to order-finding results from the following two statements. We omit the proofs at this point, for further reading see for example Ref.~\cite{Nielsen2016}. 

\begin{lem}\label{gcdLEmma}
	Given a composite (with more than one distinct prime factor), odd integer $N$ and an integer solution $a$ with $1\leq a<N$ to the equation $a^2=1 \,\text{mod}\, N$, that is non-trivial, i.e., $a\neq \pm 1 \,\text{mod} \, N$, then at least one of $\text{gcd}(a\pm 1,N)$ is a non-trivial factor of $N$.
\end{lem}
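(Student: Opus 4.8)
The plan is to exploit the factorization $a^2 - 1 = (a-1)(a+1)$ together with the congruence $a^2 = 1 \,\mathrm{mod}\, N$. First I would observe that this congruence means precisely that $N$ divides the product $(a-1)(a+1)$, while the non-triviality assumption $a \neq \pm 1 \,\mathrm{mod}\, N$ guarantees that $N$ divides neither of the two factors individually. It is this tension --- $N$ divides a product but neither factor --- that will be leveraged to split off a non-trivial divisor.

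Next I would introduce the two quantities $d_\pm := \gcd(a\pm 1, N)$ and rule out the degenerate value $d_\pm = N$. Since $1 \le a < N$, the integer $a-1$ lies in $\{0,\dots,N-1\}$, so $N \mid a-1$ would force $a = 1$, which is excluded; hence $d_- \neq N$. Likewise $a+1$ lies in $\{2,\dots,N\}$, so $N \mid a+1$ would force $a = N-1 \equiv -1 \pmod{N}$, again excluded, giving $d_+ \neq N$. Thus both gcds are strictly smaller than $N$.

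The crux is then a short contradiction argument. Suppose that neither $d_-$ nor $d_+$ were a non-trivial factor of $N$; since we have already shown $d_\pm \neq N$, this would force $d_- = d_+ = 1$, i.e.\ $N$ is coprime to both $a-1$ and $a+1$. As coprimality is multiplicative, $N$ would then be coprime to $(a-1)(a+1) = a^2 - 1$. But $N \mid a^2 - 1$ combined with $N > 1$ contradicts $\gcd(N, a^2-1) = 1$. Hence at least one of $d_\pm$ satisfies $1 < d_\pm < N$ and is therefore a non-trivial factor of $N$, as claimed.

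I expect the only delicate point to be the bookkeeping of the residue representative: one must invoke the explicit range $1 \le a < N$ to turn the congruences $a \neq \pm 1 \,\mathrm{mod}\, N$ into the genuine divisibility statements $N \nmid a \pm 1$, which is exactly what drives both the $d_\pm \neq N$ step and the final coprimality contradiction. I would also remark that the oddness of $N$ and its having more than one distinct prime factor are not actually used in this conditional implication; they are required only upstream, to guarantee the existence of such a non-trivial square root $a$ in the first place, and so are safely inherited as standing hypotheses.
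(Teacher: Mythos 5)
Your proof is correct. The paper itself omits the proof of this lemma (deferring to Nielsen--Chuang), and your argument --- $N\mid(a-1)(a+1)$ while dividing neither factor, so one of the gcds is strictly between $1$ and $N$ --- is exactly the standard argument given there, including the correct observation that the oddness and compositeness of $N$ are only needed upstream to guarantee such an $a$ exists.
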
\vspace{0.4cm}
\begin{lem}\label{OrderEven}
	For a uniformly chosen $x$ in the range $1\leq x<N$ and coprime to $N$, the probability that the order $r$ of $x$ modulo $N$ is even and non-trivial is bounded by
	\begin{equation}
		P(r \,\,\text{even},\,\text{and}\,\, x^{r/2}\neq -1\,\text{mod}\, N)>1-\frac{1}{2^m},
	\end{equation}
	where $m$ is the number of distinct prime factors of $N$.
\end{lem}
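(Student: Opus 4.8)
The plan is to pass to the structure of the multiplicative group $(\mathbb{Z}/N\mathbb{Z})^*$ via the Chinese Remainder Theorem (CRT) and then to count, prime by prime, the fraction of elements that lead to failure. Writing the prime factorization $N=p_1^{\alpha_1}\cdots p_m^{\alpha_m}$ with all $p_j$ odd (as $N$ is odd, cf.\ Lem.~\ref{gcdLEmma}), CRT provides a group isomorphism $(\mathbb{Z}/N\mathbb{Z})^*\cong\prod_{j=1}^m(\mathbb{Z}/p_j^{\alpha_j}\mathbb{Z})^*$, under which a uniformly chosen $x$ corresponds to an independent tuple $(x_1,\dots,x_m)$ with each $x_j$ uniform in the cyclic group $(\mathbb{Z}/p_j^{\alpha_j}\mathbb{Z})^*$ of even order $\varphi(p_j^{\alpha_j})$. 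If $r_j$ is the order of $x_j$, then the order of $x$ is $r=\mathrm{lcm}(r_1,\dots,r_m)$, which reduces the whole question to the joint behaviour of the $r_j$.

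The key structural step is to translate the failure event ``$r$ odd, or $r$ even with $x^{r/2}=-1\,(\text{mod}\,N)$'' into a statement about the $2$-adic valuations $d_j:=v_2(r_j)$, where $v_2$ denotes the exponent of the largest power of two dividing its argument. First I would use that $-1$ is the unique element of order two in each cyclic factor, so that $x_j^{r/2}\in\{+1,-1\}$, and that, since $r_j\mid r$, one has $x_j^{r/2}=-1$ exactly when $r_j\nmid (r/2)$, i.e.\ when $d_j=v_2(r)=\max_i d_i$. Because $-1\,(\text{mod}\,N)$ corresponds under CRT to the tuple $(-1,\dots,-1)$, the condition $x^{r/2}=-1\,(\text{mod}\,N)$ holds iff $d_j=\max_i d_i$ for \emph{every} $j$; combined with the odd-order branch ($r$ odd iff every $d_j=0$), this shows that the failure event is precisely the coincidence $d_1=d_2=\cdots=d_m$.

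Next I would read off the law of each $d_j$ from the cyclic structure. Writing $\varphi(p_j^{\alpha_j})=2^{e_j}u_j$ with $u_j$ odd and $e_j\ge 1$, an elementary count of the number of elements of each order gives $P(d_j=e_j)=\tfrac12$ together with $P(d_j=k)=2^{\,k-1-e_j}$ for $1\le k\le e_j$ and $P(d_j=0)=2^{-e_j}$; in particular no single value of $d_j$ occurs with probability exceeding $\tfrac12$. By the independence furnished by CRT the failure probability factorizes as $P(\mathrm{fail})=\sum_{k\ge 0}\prod_{j=1}^m P(d_j=k)$, and evaluating this sum against the per-prime distributions is designed to produce the bound $1/2^{m}$, giving $P^{\mathrm{succ}}>1-1/2^{m}$; the strictness comes from the fact that, for $m\ge2$, assignments with non-coinciding valuations (e.g.\ one factor maximal and another of odd order) carry strictly positive probability, so the coincidence event is a proper sub-event.

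The step I expect to be the main obstacle is precisely this final counting/combination: one must control not merely the ``all valuations maximal'' configuration but \emph{every} common value $k$ that the $d_j$ can simultaneously attain, and show that summing these independent contributions across the $m$ sites collapses to the claimed power of $1/2$. This is where the exact per-prime distribution of $d_j$ matters rather than the crude estimate $P(d_j=k)\le\tfrac12$, and where the small weight $\prod_j 2^{-e_j}$ attached to the odd-order configuration has to be tracked carefully; assembling these pieces into the sharp constant is the technical heart of the argument, whereas the CRT reduction and the valuation characterization above are structurally routine.
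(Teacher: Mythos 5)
The paper never actually proves Lem.~\ref{OrderEven}: the proof is explicitly omitted and deferred to Ref.~\cite{Nielsen2016}, so your proposal must be judged against that standard textbook argument --- and your setup \emph{is} that argument. The CRT reduction, the identification of the failure event (``$r$ odd, or $r$ even with $x^{r/2}=-1\ \text{mod}\ N$'') with the coincidence $d_1=d_2=\cdots=d_m$ of the $2$-adic valuations, and the per-prime law $P(d_j=0)=2^{-e_j}$, $P(d_j=k)=2^{k-1-e_j}$ for $1\le k\le e_j$ are all correct. The genuine gap is exactly the step you defer as ``the technical heart'': it cannot be completed, because the bound you are aiming for is false. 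Plugging your distributions into $P(\mathrm{fail})=\sum_{k\ge0}\prod_{j=1}^{m}P(d_j=k)$ gives, whenever every $e_j=1$ (i.e., every $p_j^{\alpha_j}\equiv 3\ \text{mod}\ 4$), the value $2^{-m}+2^{-m}=2^{-(m-1)}$, which exceeds $2^{-m}$. Concretely, for $N=21=3\cdot 7$ (so $m=2$) the failing residues in $(\mathbb{Z}/21\mathbb{Z})^*$ are exactly $1,4,5,16,17,20$, six of the twelve elements, so the success probability is $1/2$ and not $>3/4$; for $N=15$ it equals $3/4$ exactly, so even strictness fails. What your machinery does prove --- condition on $d_1$ and use $P(d_j=d_1\mid d_1)\le 1/2$ independently for each $j\ge2$ --- is $P(\mathrm{fail})\le 2^{-(m-1)}$, i.e., the bound $1-1/2^{m-1}$ of Shor's original analysis~\cite{Shor1995}; this is tight (witness $N=21$), so no refinement of the counting can reach the stated $1-1/2^{m}$. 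The defect lies in the statement as transcribed (Theorem A4.13 of Ref.~\cite{Nielsen2016}, to which the paper points, is subject to the same counterexample), not in your structural steps.

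Two smaller points. First, your argument for strictness --- that configurations with non-coinciding valuations carry positive probability --- only shows $P(\mathrm{fail})<1$, not $P(\mathrm{fail})<2^{-m}$, so it would not deliver the strict inequality even if the constant were attainable. Second, nothing downstream in the paper needs the stronger constant: a success probability bounded below by $1-1/2^{m-1}\ge 1/2$ for composite $N$ with $m\ge 2$ distinct prime factors already yields the polynomial expected run time of the reduction from factoring to order-finding. If you finish your write-up by targeting $1-1/2^{m-1}$ and closing with the two-line conditioning argument above, you obtain a complete and correct proof.
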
\vspace{0.4cm}
With this at hand, a factorization algorithm is given by the following procedure: In a first step, catch exceptions like $N$ having two as a (multiple) prime factor and check if $N$ is a composite integer, i.e., has more than one distinct prime factor. This can be done efficiently on a classical device, see Ref.~\cite{Nielsen2016}. These two steps guarantee that the prerequisites of Lem.~\ref{gcdLEmma} and Lem.~\ref{OrderEven} are satisfied. In the next step, choose a random $x$, check if it is coprime to $N$, otherwise, repeat until it is. The bottleneck of the algorithm is the order-finding, but assuming we can solve this in polynomial time, determine the order $r$ and subsequently check if it is even and non-trivial (which has sufficiently high probability due to Lem.~\ref{OrderEven}). If so,
compute $a=x^{r/2}$ (note that $x^{r/2}$ cannot be $1 \text{ mod } N$ due to the definition of the order) and use Lem.~\ref{gcdLEmma} to find a factor of $N$, otherwise, repeat. The algorithm is run until all prime factors have been found.
Since the greatest common divisor can be computed efficiently in polynomial time in the input length $n$ (for example using Euclid's algorithm), having a polynomial time algorithm for order-finding results in a polynomial time algorithm for factorization. 
\newline
\subsection{Order-finding \`a la Shor} 
Shor's coup of an efficient order-finding protocol, depicted schematically in Fig.~\ref{fig:ShorDecomposition}, is at the heart of the factorization algorithm. This \textit{standard} protocol for order-finding utilizes two quantum systems $A$ and $B$ of dimension $q$ and $N$ respectively, where system $A$ consists of $L$ qubits such that $N^2<q=2^L<2N^2$, with $N$ being the number to factor. Along with the classical post-processing via the continued fraction algorithm, the quantum part of the protocol can be separated into three essential ingredients: preparation of an initial state, then the so-called modular exponentiation, and a  measurement. The modular exponentiation is defined by the controlled-like unitary
\begin{equation}\label{eq:ControlledUnitary}
	U_c=\sum_{n=0}^{q-1} \ketbra{n}{n}_A \otimes U_B^n,
\end{equation}
where $U_B \ket{n}_B= \ket{xn\, \text{mod}\, N}_B$. Note that the modular exponentiation can be implemented in polynomial time~\cite{Shor1995, PhysRevA.54.147,PhysRevA.54.1034,zalka1998fast}. The modular exponentiation encodes information about the order $r$ into the state of system $A$, only requiring knowledge about $x$ and the number $N$ to be factored. The encoding of this information depends on the initial state of the auxiliary system $B$, and a convenient choice is the state $\ket{1}_B$. Let us emphasize that other incoherent states can be used as well. For instance in Ref.~\cite{Parker2000}, it is shown that choosing a normalized maximally mixed initial state $\id_B$ will increase the runtime of the algorithm at most polynomially. In fact, for factorization problems of the form $N=pq$, where $p$ and $q$ are primes, the increase is asymptotically negligible.
After performing the modular exponentiation, the auxiliary system is discarded. For our purposes, the action of the modular exponentiation on system $A$ will be fixed and labeled by $\mathcal{E}$. This channel $\mathcal{E}$ admits the following simple structure. 
\begin{lem}\label{Encoding}
	If system $B$ is in the state $\ket{1}_B$, then the effect of the modular exponentiation on system $A$ is given by
	\begin{equation}
		\mathcal{E}(\rho_A)= \frac{1}{r}\sum_{j=0}^{r-1} \mathcal{E}_j(\rho_A) \quad \text{with} \quad \mathcal{E}_j(\rho_A)= R_{j/r} \rho_A R_{j/r}^\dagger,
	\end{equation}
	where the $R_{j/r}$ denote rotations around multiples of the fraction of $r$, i.e., $R_{j/r}=\sum_n e^{2\pi i\tfrac{j}{r} n} \ketbra{n}{n}$. 
\end{lem}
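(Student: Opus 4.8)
The plan is to compute both sides explicitly in the computational basis of system $A$ and show that each reduces to the same selection rule on the matrix elements $\rho_{nm}$: only the coherences $\ketbra{n}{m}$ with $n\equiv m \pmod r$ survive, while all others are annihilated. First I would write the channel as the partial trace that defines it,
\begin{equation}
    \mathcal{E}(\rho_A)=\partTr{B}{U_c\,(\rho_A\otimes\ketbra{1}{1}_B)\,U_c^\dagger},
\end{equation}
expand $\rho_A=\sum_{n,m}\rho_{nm}\ketbra{n}{m}_A$, and use the controlled structure $U_c=\sum_k\ketbra{k}{k}_A\otimes U_B^k$ together with $U_c^\dagger=\sum_l\ketbra{l}{l}_A\otimes U_B^{-l}$. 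Acting on a single term $\ketbra{n}{m}_A\otimes\ketbra{1}{1}_B$ collapses the control sums and leaves $\ketbra{n}{m}_A\otimes U_B^n\ketbra{1}{1}_BU_B^{-m}$. Since $U_B^n\ket{1}_B=\ket{x^n\bmod N}_B$, the environment factor is $\ketbra{x^n\bmod N}{x^m\bmod N}_B$, whose partial trace over $B$ is the inner product $\braket{x^m\bmod N|x^n\bmod N}=\delta_{x^n\equiv x^m}$, so that
\begin{equation}
    \mathcal{E}(\rho_A)=\sum_{n,m}\rho_{nm}\,\delta_{x^n\equiv x^m\bmod N}\,\ketbra{n}{m}_A.
\end{equation}

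The next step I would carry out is the number-theoretic rewriting of this selection rule. Because $x$ is coprime to $N$ it is invertible modulo $N$, hence $x^n\equiv x^m\pmod N$ is equivalent to $x^{n-m}\equiv 1\pmod N$, which by the very definition of the order $r$ holds precisely when $r\mid(n-m)$. Therefore the surviving coherences are exactly those with $n\equiv m\pmod r$.

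Finally I would expand the right-hand side. From $R_{j/r}=\sum_n e^{2\pi i (j/r) n}\ketbra{n}{n}$ one obtains $R_{j/r}\rho_A R_{j/r}^\dagger=\sum_{n,m}e^{2\pi i(j/r)(n-m)}\rho_{nm}\ketbra{n}{m}$, and averaging over $j$ invokes the orthogonality of roots of unity,
\begin{equation}
    \frac{1}{r}\sum_{j=0}^{r-1}e^{2\pi i (j/r)(n-m)}=\begin{cases}1 & r\mid(n-m),\\ 0 & \text{otherwise,}\end{cases}
\end{equation}
so $\tfrac{1}{r}\sum_{j}R_{j/r}\rho_A R_{j/r}^\dagger$ reproduces the identical sum restricted to $n\equiv m\pmod r$, which finishes the argument. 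No step here poses a genuine obstacle; the content lies entirely in two elementary identities that pin the same selection rule on both sides, namely the equivalence $x^n\equiv x^m\Leftrightarrow r\mid(n-m)$ (coprimality plus the definition of the order) and the geometric-series evaluation of the root-of-unity average. These are the only points requiring care.
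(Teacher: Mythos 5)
Your proof is correct, but it takes a genuinely different route from the paper's. The paper diagonalizes $U_B$: it observes $U_B^r=\id_B$, writes down the eigenstates $\ket{\psi_j}_B=\tfrac{1}{\sqrt r}\sum_{l}e^{-2\pi i l j/r}\ket{x^l\bmod N}_B$ with eigenvalues $e^{2\pi i j/r}$, expands $\ket{1}_B=\tfrac{1}{\sqrt r}\sum_j\ket{\psi_j}_B$, and takes the partial trace, whereupon the off-diagonal blocks $\ketbra{\psi_j}{\psi_{j'}}$ with $j\neq j'$ are traceless and the diagonal ones produce the rotations $\mathcal{E}_j$ directly. You instead stay in the computational basis of $B$, reduce the partial trace to the overlap $\braket{x^m\bmod N|x^n\bmod N}$, and show that both sides of the claimed identity implement the same selection rule $n\equiv m\pmod r$ — on the left via invertibility of $x$ modulo $N$ and the definition of the order, on the right via the orthogonality of the $r$-th roots of unity. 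Your argument is the more elementary one (no eigenbasis needed) and it makes explicit that $\mathcal{E}$ is precisely a dephasing of all coherences $\rho_{nm}$ with $n\not\equiv m\pmod r$, which is a clean structural characterization. What the paper's eigenbasis route buys is the physical decomposition into sectors: each eigenstate $\ket{\psi_j}$ individually induces the rotation $\mathcal{E}_j$, which is the picture the later proofs rely on when they analyze the idealized protocol with a single sampled $\mathcal{E}_j$ ($j$ drawn uniformly) and then average. Since the lemma only asserts the channel identity, your proof fully establishes it; just make sure the divisibility claim $x^{n-m}\equiv 1\Leftrightarrow r\mid(n-m)$ is stated for possibly negative $n-m$, which is unproblematic exactly because $x$ is invertible modulo $N$.
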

\begin{proof}
	Notice that by definition of the order-finding problem $x^r=1\, \text{mod}\, N$. It follows that $U^r_B=\id_{B}$, since $\forall n$ we find $U^r_B \ket{n}_B=\ket{x^r n \, \text{mod}\, N}_B=\ket{(x^r\, \text{mod}\, N)(n \, \text{mod}\, N)\,\text{mod}\, N}_B=\ket{n}_B$. Hence, orthonormal eigenstates $\ket{\psi_j}_B$ of $U_B$ are simply given by
	\begin{equation}
		\ket{\psi_j}_B= \frac{1}{\sqrt{r}} \sum_{l=0}^{r-1} e^{-2\pi i l \tfrac{j}{r}} \ket{x^l\, \text{mod}\, N}_B,
	\end{equation}
	with corresponding eigenvalues of $e^{2\pi i \tfrac{j}{r}}$. This allows us to expand the auxiliary state as $\ket{1}_B= \tfrac{1}{\sqrt{r}} \sum_{j=0}^{r-1} \ket{\psi_j}_B$. With this at hand, it is straightforward to calculate
	\begin{equation}
		\begin{split}
			\mathcal{E}(\rho_A)&=\text{Tr}_B\left[{U_c(\rho_A\otimes \ketbra{1}{1}_B) U_c^\dagger}\right] \\
			&=\text{Tr}_B\left[ \sum_{n,m}\rho_{nm} \ketbra{n}{m}_A \otimes  \frac{1}{r}\sum_{j,j^\prime=0}^{r-1} U^n_B \ketbra{\psi_j}{\psi_{j^{\prime}}}_B(U^m_B)^\dagger\right] \\
			&=\text{Tr}_B\left[ \sum_{n,m}  \rho_{nm} \ketbra{n}{m}_A \otimes  \frac{1}{r}\sum_{j,j^\prime=0}^{r-1} e^{2\pi i(n \tfrac{j}{r}- m\tfrac{j^{\prime}}{r})} \ketbra{\psi_j}{\psi_{j^{\prime}}}_B\right] \\
			&=\frac{1}{r}\sum_{j=0}^{r-1} \sum_{n,m} \rho_{nm} e^{2\pi i \tfrac{j}{r} (n-m)} \ketbra{n}{m}_A =\frac{1}{r} \sum_{j=0}^{r-1} \mathcal{E}_j(\rho_A).\qedhere
		\end{split}
	\end{equation}
\end{proof}
Let us emphasize the resemblance of $\mathcal{E}$ to a symmetry operation that gives rise to the resource theory of asymmetry~\cite{Gour_2008, PhysRevA.80.012307, Marvian_2013}. In this particular case, the symmetry group elements are simple rotations, being uniformly weighted to define the symmetry operation $\mathcal{E}$. This symmetry group gives rise to the resource theory of coherence as a special case~\cite{Marvian_2016, Piani2016, Napoli2016}. Any incoherent state is left invariant under the action of $\mathcal{E}$, i.e., an incoherent state is symmetric with respect to the symmetry group, thereby naturally selecting a set of free states. On the contrary, any coherent state will encode information about $r$, thus being useful at least in principle for the task of order-finding. Analyzing the protocol in the framework of coherence theory is a natural consequence. Concretely, in this work the performance of the protocol will be quantitatively linked to the ability to create and detect coherence.

Furthermore, it has to be noted that not every single rotation $\mathcal{E}_j$ encodes the order $r$ the way we wish. In fact, any rotation $\mathcal{E}_j$ where $\text{gcd}(j,r)\neq 1$ is equivalent to a rotation around an angle depending on a factor of $r$ rather than $r$ itself. Fortunately, this is sufficiently rare to still allow for an efficient post-processing strategy that estimates $r$ from the measurement statistics efficiently. After the modular exponentiation, a measurement of system $A$ in the Fourier basis produces an outcome $k$ that is forwarded to the continued fraction algorithm (CFA), which will then compute a continued fraction decomposition of $k/q$.

The continued fraction algorithm computes the decomposition of a number $x$ in the following iterative form: the sum of its closest integer part and the reciprocal of another number, which is then written as the sum of its closets integer part and another reciprocal, and so on, see for example Ref.~\cite{Nielsen2016}. This decomposition is typically denoted as
\begin{align}
	x=[a_0,a_1,a_2,\ldots]=a_0+\frac{1}{a_1+\frac{1}{a_2+\frac{1}{\ldots}}},
\end{align}
where the list is finite for rational $x$, i.e., $x=[a_0,a_1,\ldots, a_n ]$, and infinite otherwise. The so-called convergents, or specifically the $m$-th convergent of $x$, is defined by $[a_0,a_1,\ldots,a_m]$. The post-processing of measurement results will be done by computing the convergents of $k/q$. Some measurement results give sufficiently good approximations to some $j/r$ that allow recovering the latter fraction from $k/q$ by using the CFA to compute the convergents, where one matches $j/r$.

To clarify which measurement outcomes do so, we continue with the following result from number theory involved in the study of Diophantine Approximation, i.e., approximations of irrational numbers by rational ones. The following statement can be found in various textbooks on number theory, see for example Ref.~\cite{Miller2006}. The first part is also treated in the textbook~\cite{Nielsen2016}, and for completeness, we give a short proof of the statement based on Ref.~\cite{Miller2006}.
\begin{thm}\label{thm::Approximation}
	Let $x$ be a positive  number and $p/q$ a positive rational number. If 
	\begin{equation}\label{cond1}
		\left|x-\frac{p}{q} \right|<\frac{1}{2q^2}
	\end{equation}
	then $p/q$ is a convergent of $x$. Conversely if $p/q$ is a convergent of $x$, then 
	\begin{equation}\label{cond2}
		\left|x-\frac{p}{q} \right|\leq \frac{1}{q^2}.
	\end{equation}
\end{thm}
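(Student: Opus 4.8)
The plan is to treat the two implications separately via the standard machinery of continued fractions, which I take as known (see Ref.~\cite{Miller2006}): for $x=[a_0,a_1,\dots]$ the convergents $p_k/q_k$ obey $p_k=a_kp_{k-1}+p_{k-2}$ and $q_k=a_kq_{k-1}+q_{k-2}$ with $p_{-1}=1,\,p_{-2}=0,\,q_{-1}=0,\,q_{-2}=1$, they satisfy the determinant identity $p_kq_{k-1}-p_{k-1}q_k=(-1)^{k-1}$, and every rational number admits exactly two finite continued-fraction representations whose lengths have opposite parity. The identity $x=(\alpha_{n+1}p_n+p_{n-1})/(\alpha_{n+1}q_n+q_{n-1})$ in terms of the $(n+1)$-th complete quotient $\alpha_{n+1}\ge1$ will be the common engine for both parts.

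For the converse, Eq.~\eqref{cond2}, I would substitute the complete-quotient expression for $x$ into $x-p_n/q_n$; using $p_nq_{n-1}-p_{n-1}q_n=(-1)^{n-1}$ the numerator collapses and one gets $x-p_n/q_n=(-1)^n/[q_n(\alpha_{n+1}q_n+q_{n-1})]$. Since $\alpha_{n+1}q_n+q_{n-1}\ge a_{n+1}q_n+q_{n-1}=q_{n+1}\ge q_n$, taking absolute values yields $|x-p_n/q_n|\le 1/(q_nq_{n+1})\le 1/q_n^2$ (the case where $p_n/q_n$ is the terminating convergent is trivial, as then $x=p_n/q_n$). This is the easy direction.

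For the main direction, Eq.~\eqref{cond1}, assume $x\ne p/q$ and $\gcd(p,q)=1$, and write $x-p/q=\epsilon\delta/q^2$ with $\epsilon\in\{\pm1\}$ and $0<\delta<1/2$. I expand $p/q$ as a finite continued fraction and, using the two available representations of opposite parity, choose its length $n$ so that $(-1)^n=\epsilon$; then $p_n/q_n=p/q$ and $q_n=q$. Defining $\theta$ by $x=(\theta p_n+p_{n-1})/(\theta q_n+q_{n-1})$, the same determinant computation gives $x-p_n/q_n=(-1)^n/[q_n(\theta q_n+q_{n-1})]$. Matching this against $\epsilon\delta/q_n^2$ — the signs now agree by the parity choice — forces $\theta q_n+q_{n-1}=q_n/\delta$, i.e.\ $\theta=1/\delta-q_{n-1}/q_n$. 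Since $1/\delta>2$ and $0\le q_{n-1}/q_n\le1$, this gives $\theta>1$, so $\theta$ has a genuine continued-fraction expansion $[a_{n+1},a_{n+2},\dots]$ with $a_{n+1}=\lfloor\theta\rfloor\ge1$, exhibiting $x=[a_0,\dots,a_n,a_{n+1},\dots]$ and hence $p/q=p_n/q_n$ as the $n$-th convergent of $x$.

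The main obstacle is precisely the sign bookkeeping in the forward direction: the raw estimate only yields $|\theta q_n+q_{n-1}|>2q_n$, which does not on its own fix the sign of $\theta$. The resolving device is the freedom to choose the parity of the continued-fraction length of $p/q$ so that the two expressions for $x-p_n/q_n$ carry the same sign and $\theta$ emerges positive, indeed $>1$. Checking that this parity choice is always available — including the degenerate case $p/q\in\mathbb{Z}$, where the two representations are $[a_0]$ and $[a_0-1,1]$ — and that $q_{n-1}\le q_n$ holds in either representation are the only points needing genuine care; the rest is routine manipulation of the convergent recurrences.
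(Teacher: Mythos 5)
Your proof of the converse direction (Eq.~\eqref{cond2}) is essentially identical to the paper's: both substitute the complete-quotient expression for $x$, collapse the numerator with the determinant identity, and bound $\left|x-p_n/q_n\right|\le 1/(q_nq_{n+1})\le 1/q_n^2$. For the main direction (Eq.~\eqref{cond1}), however, you take a genuinely different and equally valid route. The paper argues by contradiction through the ``best approximation of the second kind'' property: it picks $n$ with $q_n\le q<q_{n+1}$, notes that a non-convergent $p/q$ with $q<q_{n+1}$ must satisfy $|q_nx-p_n|\le|qx-p|<1/(2q)$, and then derives $q_n>q$ from the triangle inequality applied to $\left|p/q-p_n/q_n\right|\ge 1/(qq_n)$ --- a short argument, but one that imports the best-approximation theorem as a black box. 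Your argument is the classical constructive proof of Legendre's theorem: write $x-p/q=\epsilon\delta/q^2$, exploit the two finite continued-fraction representations of $p/q$ of opposite parity to match $(-1)^n$ with $\epsilon$, and solve for the tail $\theta=1/\delta-q_{n-1}/q_n>1$, thereby exhibiting $p/q$ explicitly as the $n$-th convergent of $x$. What your approach buys is self-containedness and an explicit witness (the continued fraction of $x$ extending that of $p/q$); what it costs is the sign and parity bookkeeping, which you correctly identify as the crux and handle properly (including the integer case $[a_0]$ versus $[a_0-1,1]$ and the bound $q_{n-1}\le q_n$). Both proofs are correct; yours is closer to the textbook proof of Legendre's criterion, the paper's is closer to the best-approximation machinery it already cites.
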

\begin{proof}
	Let $\tfrac{p_n}{q_n}$ denote the convergents to the continued fraction decomposition of $x$. The sequence $(q_n)_n$ is increasing~\cite{Miller2006}, thus there exists some integer $n$ such that $q_{n}\leq q< q_{n+1}$. Now assume that $\tfrac{p}{q}$ satisfies the inequality~(\ref{cond1}) but is not a convergent to the continued fraction algorithm, i.e., $ \tfrac{p}{q}\neq \tfrac{p_n}{q_n}\, \forall n$. The convergents $\tfrac{p_n}{q_n}$ are precisely the best approximations to $x$ in the second sense, thus, $|qx-p|<|q_nx-p_n|$ implies $q>q_{n+1}$~\cite{Miller2006}.
	Therefore, if $\tfrac{p}{q}$ is not a convergent with $q< q_{n+1}$ (if $q=q_{n+1}$ there is nothing to show) we find $|q_nx-p_n|\leq|qx-p|<\tfrac{1}{2q}$, since $\tfrac{p}{q}$ satisfies Eq.~\eqref{cond1} by assumption. This yields
	\begin{equation}
		\begin{split}
			\frac{1}{qq_n}&\leq \frac{|pq_n-qp_n|}{qq_n}= \left|\frac{p}{q} -\frac{p_n}{q_n} \right| \leq \left|x -\frac{p_n}{q_n} \right|+ \left|x -\frac{p}{q} \right| \\
			&< \frac{1}{2qq_n}+\frac{1}{2q^2},
		\end{split}
	\end{equation}
	and thus $q_n>q$, which is a contradiction to $q_n\leq q<q_{n+1}$. Therefore, we find that $q=q_n$ and consequentially $p=p_n$, which concludes the first part of the statement.\newline
	For the second part, we can make use of the so-called complete quotients $a_i^\prime$, see for example Ref.~\cite{Miller2006}, defined as $a_i^\prime=[a_i,a_{i+1},...]$ which allows us to express arbitrary $x$ as
	\begin{equation}
		x=\frac{a_{i+1}^\prime p_i+p_{i-1}}{a_{i+1}^\prime q_i+q_{i-1}},
	\end{equation}
	in terms of an arbitrary convergent $\tfrac{p_i}{q_i}$. Then it follows
	\begin{equation}
		\begin{split}
			\left|x-\frac{p_i}{q_i} \right| &= \left|\frac{a_{i+1}^\prime p_i+p_{i-1}}{a_{i+1}^\prime q_i+q_{i-1}}-\frac{p_i}{q_i} \right|= \left|\frac{\left(a_{i+1}^\prime p_i+p_{i-1}\right) q_i-p_i\left(a_{i+1}^\prime q_i+q_{i-1}\right)}{q_i\left(a_{i+1}^\prime q_i+q_{i-1}\right)}\right| \\
			&= \left| \frac{p_{i-1}q_i-p_iq_{i-1}}{q_i\left(a_{i+1}^\prime q_i+q_{i-1}\right)} \right| = \left| \frac{(-1)^i}{q_i\left(a_{i+1}^\prime q_i+q_{i-1}\right)} \right| \\
			&\leq \frac{1}{q_i q_{i+1}},
		\end{split}
	\end{equation}
	where we used in the last line that $a_{i+1}^\prime q_i+q_{i-1}\geq q_{i+1}$.
	Lastly, since $q_{i+1}\geq q_{i}$ every convergent and thus also the particular convergent $p/q$ satisfies the inequality $\left|x -\tfrac{p}{q}\right|\leq\tfrac{1}{q^2}$. Recall that in the case of a rational $x$, i.e., a simple finite continued fraction expansion $x=[a_0,a_1,...,a_n]$, we define the denominator of the $n+1$ convergent simply as $q_n$, such that the proof also holds for rational $x$.
\end{proof}
This Theorem provides a sufficient and necessary condition on the absolute difference of the number $x$ and a rational approximation
$p/q$ such that said approximation is a convergent of $x$ in the continued fraction decomposition. Coming back to the
question of which measurement outcomes are useful, we employ the following Corollary.

\begin{cor}\label{cor:goodEstimate}
	Let $k$  be an integer with $0\leq k < q$ where $N^2<q=2^L<2N^2$ that satisfies the inequality 
	\begin{equation}\label{eq:DifferenceBeta}
		\left| \frac{j}{r}-\frac{k}{q} \right|\leq\frac{\beta}{2q}
	\end{equation}
	for some coprime $(j,r)$ with $0<j<r$ and $\beta=\tfrac{q-1}{r^2}$. 
	Then the continued fraction expansion of $k/q$ will yield
	$j/r$ and thereby $r$, as a convergent.
\end{cor}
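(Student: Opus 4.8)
The plan is to reduce the statement directly to Theorem~\ref{thm::Approximation}. I would set the target number to be $x = k/q$ and the candidate rational approximation to be $j/r$, so that in the notation of Theorem~\ref{thm::Approximation} the numerator ``$p$'' plays the role of $j$ and the denominator ``$q$'' plays the role of $r$. With this identification, the Theorem guarantees that $j/r$ is a convergent of the continued fraction expansion of $k/q$ as soon as
\begin{equation}
	\left| \frac{k}{q} - \frac{j}{r} \right| < \frac{1}{2r^2},
\end{equation}
so the whole task is to check that the hypothesis of the Corollary enforces this sharper bound.

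The arithmetic step is immediate. Substituting $\beta = \tfrac{q-1}{r^2}$ into the assumed inequality gives
\begin{equation}
	\left| \frac{j}{r} - \frac{k}{q} \right| \leq \frac{\beta}{2q} = \frac{q-1}{2qr^2} < \frac{1}{2r^2},
\end{equation}
where the last inequality is nothing but $q-1 < q$. Hence the hypothesis of Theorem~\ref{thm::Approximation} holds (with strict inequality), and $j/r$ appears as a convergent of $k/q$.

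The only point that requires care is to argue that recovering this convergent really recovers $r$, and not merely some fraction equal to $j/r$. Here I would invoke the standard fact that the convergents $p_n/q_n$ of any continued fraction are automatically in lowest terms, since the recurrence yields $p_{n-1}q_n - p_n q_{n-1} = (-1)^n$ and therefore $\gcd(p_n,q_n)=1$. As $(j,r)$ are coprime by assumption, the equality $p_n/q_n = j/r$ between two reduced representatives forces $p_n = j$ and $q_n = r$ exactly, so the denominator of the matching convergent is precisely $r$. I do not expect a genuine obstacle: the bound $\tfrac{q-1}{q}<1$ is trivial and the uniqueness-of-reduced-form argument is routine. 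The main thing to get right is the bookkeeping of matching $j$ and $r$ (rather than $k$ and $q$) to the two slots of Theorem~\ref{thm::Approximation}, together with the harmless observation that the hypothesis already excludes the degenerate case $k=0$, so that $x=k/q$ is genuinely positive as the Theorem requires.
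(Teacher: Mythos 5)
Your proposal is correct and follows essentially the same route as the paper: both apply the first part of Thm.~\ref{thm::Approximation} with $x=k/q$ and candidate fraction $j/r$, and both reduce the hypothesis to the trivial inequality $\tfrac{\beta}{2q}=\tfrac{q-1}{2qr^2}<\tfrac{1}{2r^2}$. Your added remarks (that convergents are automatically in lowest terms so the denominator is exactly $r$, and that the hypothesis excludes $k=0$) are correct refinements of points the paper leaves implicit, but they do not change the argument.
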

\begin{proof}
	According to the first part of Thm.~\ref{thm::Approximation}, any integer $k$ that satisfies $\left|\tfrac{j}{r}-\tfrac{k}{q}\right|<\tfrac{1}{2r^2}$ will yield $j/r$ as a convergent.  Obviously $\tfrac{\beta}{2q}<\tfrac{1}{2r^2}$. In particular, since $\beta>1$ all integers $k$ that obey the inequality $\left|\tfrac{j}{r}-\tfrac{k}{q}\right|< \tfrac{1}{2q}$ will yield $j/r$ as a convergent.
\end{proof}
This justifies the choice of the dimension of quantum system A with $\text{dim}(A)=q$ at the beginning of the discussion. Let us emphasize that extending the margin of error like in Cor.~\ref{cor:goodEstimate} for a $\beta>1$ has allowed to sharpen Shor's original bound (which basically utilizes a weaker bound with $\beta=1$) on the coherent protocol, see for example Refs.~\cite{Gerjuoy2005,Bourdon}. Looking at the following result, the reason why the post-processing via the CFA works well for a measurement result as in Cor.~\ref{cor:goodEstimate} can be better understood.

\begin{lem}\label{Unique} 
	Consider fixed integers $N$ and $q>N^2$.
	
	i) Assume you have a fixed integer $0\leq k<q$. Then there exists at most one pair of integers $(j,r)$ with $1\leq r <N$, $0\leq j<r$, and $\text{gcd}(j,r)=1$ such that $\left| \tfrac{j}{r}-\tfrac{k}{q} \right|<\tfrac{1}{2q}$.
	
	ii) Assume that you have a pair of integers $(j,r)$ with $1\leq r <N$ and $0\leq j<r$. Then there exists an integer $0\leq k<q$ such that $\left| \tfrac{j}{r}-\tfrac{k}{q} \right|<\tfrac{1}{2q}$ is satisfied.
\end{lem}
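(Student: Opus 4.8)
The plan is to treat the two assertions separately, as they rest on essentially independent arguments.

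For part i) I would argue by contradiction. Suppose two distinct admissible pairs $(j_1,r_1)\ne(j_2,r_2)$ both satisfy $\left|\tfrac{j_i}{r_i}-\tfrac{k}{q}\right|<\tfrac{1}{2q}$. Since $\gcd(j_i,r_i)=1$, two equal reduced fractions would force equal pairs, so $\tfrac{j_1}{r_1}\ne\tfrac{j_2}{r_2}$ and therefore $\left|\tfrac{j_1}{r_1}-\tfrac{j_2}{r_2}\right|=\tfrac{|j_1 r_2-j_2 r_1|}{r_1 r_2}\ge\tfrac{1}{r_1 r_2}$, the numerator being a nonzero integer. Because $r_1,r_2<N$ we have $r_1 r_2<N^2<q$, so the two fractions are separated by strictly more than $\tfrac{1}{q}$. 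The triangle inequality through $\tfrac{k}{q}$, however, bounds their separation by $\tfrac{1}{2q}+\tfrac{1}{2q}=\tfrac{1}{q}$. This contradiction shows that at most one such pair can exist.

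For part ii) I would take $k$ to be the integer nearest to $\tfrac{jq}{r}$, which immediately gives $\left|\tfrac{jq}{r}-k\right|\le\tfrac12$ and hence $\left|\tfrac{j}{r}-\tfrac{k}{q}\right|\le\tfrac{1}{2q}$ after dividing by $q$. The range constraint is quick: $0\le j<r$ gives $0\le\tfrac{jq}{r}<q$, and since $r<N<\sqrt{q}$ one has $\tfrac{q}{r}>1$, so $\tfrac{jq}{r}\le q-\tfrac{q}{r}<q-1$, placing the nearest integer in $\{0,\dots,q-1\}$.

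The main obstacle is upgrading the bound to the strict inequality, since the nearest-integer estimate degrades to equality exactly when $\tfrac{jq}{r}$ is a half-odd-integer (for a general modulus $q>N^2$ this can genuinely occur). Here the power-of-two structure $q=2^L$, in force throughout, is what saves the argument: if $\tfrac{jq}{r}=\tfrac{2m+1}{2}$ for some integer $m$, then $2jq=(2m+1)r$, and comparing $2$-adic valuations yields $L+1\le v_2(2jq)=v_2(r)$, forcing $2^{L+1}\mid r$ and thus $r\ge 2^{L+1}>q>N^2>r$, a contradiction. Hence $\left|\tfrac{jq}{r}-k\right|<\tfrac12$ strictly, which delivers $\left|\tfrac{j}{r}-\tfrac{k}{q}\right|<\tfrac{1}{2q}$ and completes the proof.
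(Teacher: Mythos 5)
Your proof is correct and follows essentially the same route as the paper's. Part i) is the identical argument: two distinct reduced fractions with denominators below $N$ differ by at least $1/(r_1r_2)>1/N^2>1/q$, while the triangle inequality through $k/q$ caps their separation at $1/q$. Part ii) is also the paper's nearest-integer argument, but your handling of the boundary case is actually sharper: the paper dismisses equality with the brief (and slightly imprecise, since coprimality of $(j,r)$ is not assumed in ii) remark that ``equality can only hold if $r$ is a power of $2$, in which case $k/q$ samples $j/r$ exactly,'' whereas your $2$-adic valuation computation $v_2(r)=v_2(2jq)\ge L+1$ settles it cleanly. You are also right that the strictness genuinely needs $q=2^L$ rather than just $q>N^2$ --- for instance $q=11$, $N=3$, $r=2$, $j=1$ gives $\left|\tfrac{1}{2}-\tfrac{5}{11}\right|=\tfrac{1}{22}=\tfrac{1}{2q}$ exactly --- a hypothesis the paper uses only implicitly.
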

\begin{proof}
	We begin with i). Assume that there exist two distinct fractions $\tfrac{j\prime}{r^\prime}\neq \tfrac{j}{r}$ that satisfy $\left| \tfrac{j}{r}-\tfrac{k}{q} \right|<\tfrac{1}{2q}$ and $\left| \tfrac{j'}{r'}-\tfrac{k}{q} \right|<\tfrac{1}{2q}$.
	It follows that 
	\begin{align}
		\left| \tfrac{j^\prime}{r^\prime}-\tfrac{j}{r}\right|=\left| \tfrac{j^\prime}{r^\prime}-\tfrac{k}{q}+\tfrac{k}{q}-\tfrac{j}{r}\right|<\tfrac{1}{q}  <\tfrac{1}{N^2}.
	\end{align}
	On the other hand $\left| \tfrac{j^\prime}{r^\prime}-\tfrac{j}{r}\right|=\left| \tfrac{j^\prime r-j r^\prime}{rr^\prime}\right|>\tfrac{1}{N^2}$, since $r,r^\prime<N$ and there exists an integer $i$ such that $|j^\prime r-j r^\prime|=|i|\geq 1$. By contradiction the two fractions are identical.
	
	For ii), we note that the distance between neighboring fractions $\frac{k}{q}$ is given by $\frac{1}{q}$. Therefore, there always exists a $k'$ such that $\left|\tfrac{j}{r}-\tfrac{k'}{q}\right|\leq \tfrac{1}{2q}$. However, equality can only hold if $r$ is a power of 2, in which case there exists a $k$ such that $\tfrac{k}{q}$ samples $\tfrac{j}{r}$ exactly.
\end{proof}

Combining the results of Cor.~\ref{cor:goodEstimate} and Lem.~\ref{Unique} tells us, that given a single rotation $\mathcal{E}_j$ as defined in Lem.~\ref{Encoding} and with $j$ coprime to $r$, there always exists a measurement outcome $k$ that will yield r via the continued fraction algorithm. 
With this at hand, we define the following two sets for a fixed $j$ coprime to $r$ 
\begin{equation}\label{eq:Sets}
	\begin{split}
		&\mathcal{K}_1^{j}=\left\{ k: 0\leq k< q\, \wedge \,  \left|\frac{j}{r}-\frac{k}{q}\right|< \frac{1}{2q} \right\},\\
		&\mathcal{K}_2^{j}=\left\{ k: 0\leq k< q \, \wedge \, \left|\frac{j}{r}-\frac{k}{q}\right|\leq\frac{1}{r^2} \right\}.
	\end{split}
\end{equation}
Additionally we define the sets $\mathcal{K}_1,\mathcal{K}_2$ as $\mathcal{K}_i=\cup_j \mathcal{K}_i^j$, where the union is formed over all $j$ smaller than and coprime to $r$. The set $\mathcal{K}_1$ contains all measurement outcomes that will yield the correct order $r$ by putting the outcome in the continued fraction algorithm. The second set $\mathcal{K}_2$ consists of all outcomes that obey the necessary condition to be a convergent of the CFA according to Thm.~\ref{thm::Approximation}, i.e., it contains all outcomes that will yield the correct $r$ via the CFA but potentially also outcomes that do not. Let us conclude this preliminary discussion by noting what happens for an unknown and randomly chosen $j$ (or equivalently a uniformly weighted $\mathcal{E}_j$, as we got here) during the post-processing. Suppose for the sampled $\mathcal{E}_j$, $j$ and $r$ share a common factor. The post-processing will then maximally yield a factor of $r$ and thus fail. This case is however rare: the probability that a randomly chosen $j$ is coprime to $r$ is given by $\varphi(r)/r$, where $\varphi(r)$ denotes Euler's totient function.
This ratio between Euler's totient function and its argument is bounded by $\tfrac{\varphi(r)}{r}>\tfrac{\delta}{\log\log r}>\tfrac{\delta}{\log\log N}$, for some positive constant $\delta$, according to a well-known result by Hardy~\cite[Theorem 328]{Hardy1984}. In fact, the latter inequality is asymptotically tight for infinitely many values of $r$. 

\subsection{Sequential order-finding}
Furthermore, we have to discuss a sequential version of Shor's original order-finding protocol that allows reducing the number of qubits drastically for large factorization problems. The protocol is based on a semi-classical implementation of the combination of an inverse quantum Fourier transform and a measurement in the computational basis following directly afterward (see Refs.~\cite{Griffiths1995,Parker2000}): Assume the inverse Fourier transform is implemented via its \textit{standard} decomposition into Hadamard gates and controlled rotations as depicted in Fig.~\ref{fig:inverseFourier}, see also Ref.~\cite{Nielsen2016}.
Fig.~\ref{fig:ShorDecomposition} therefore shows an implementation of Shor's algorithm in which the measurement outcome has to be reordered in reverse order. As explained in detail in Ref.~\cite{Griffiths1995}, it is then possible to do the measurement on the first qubit directly after the first Hadamard gate belonging to the inverse Fourier transform was implemented (the second Hadamard gate in the figure)  and use its outcome to classically control all the following rotations that depend on this qubit. A similar argument holds for the other qubits as well: after the respective Hadamard gate in their line, one can directly measure them and control all following rotations classically depending on the outcome. Since all the controlled rotations in one line lead to an effective rotation, in this way, one can replace them with a single effective classically controlled rotation $R_l'$ that depends on the previous measurement outcomes. This is shown in Fig.~\ref{fig:ShorDecompositionComp}.

Thereby, the gates and measurements on the individual qubits are performed sequentially, which allows to split Shor's protocol into blocks, see Fig.~\ref{BlockFormHadamards}, that each utilize only a single control qubit on which the Hadamard gates and the classically controlled rotations $R_l^\prime$ are performed. The single control qubit can be \textit{recycled} after each block, such that the total amount of qubits required decreases to $\log N+1$. Due to this decomposition, Shor's original protocol and the sequential version lead to identical measurement statistics if the auxiliary systems are initialized in the same state.  \newline

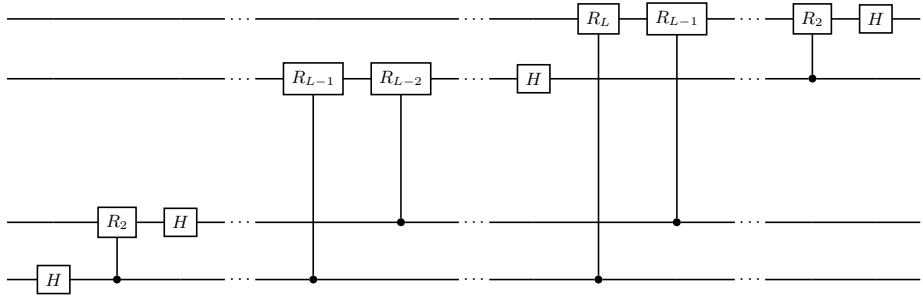
\begin{figure}[ht]
	\centering
	\scalebox{0.75}{
	\begin{quantikz}
	  & \qw& \qw& \qw &\ \ldots \ \qw  & \qw &  \qw &\ \ldots \ \qw & \qw & \gate{R_L} & \gate{R_{L-1}} &  \ \ldots \ \qw & \gate{R_2} &\gate{H} &  \qw\\
	& \qw& \qw & \qw &\ \ldots \ \qw  & \gate{R_{L-1}} & \gate{R_{L-2}} & \ \ldots \ \qw &\gate{H} &  \qw & \qw  & \ \ldots \ \qw & \ctrl{-1} &\qw & \qw\\
	 \ &&&&&&&&&&&&\\	
	 \ &&&&&&&&&&&&\\
	 \ &&&&&&&&&&&&\\ 
	 & \qw& \gate{R_2}&\gate{H} & \ \ldots \ \qw  & \qw & \ctrl{-4} & \ \ldots \ \qw &\qw & \qw& \ctrl{-5} & \ \ldots \ \qw & \qw & \qw & \qw\\
	 &\gate{H} & \ctrl{-1}& \qw &    \ \ldots \ \qw  & \ctrl{-5} & \qw & \ \ldots \ \qw &\qw & \ctrl{-6} & \qw& \ \ldots \ \qw & \qw & \qw & \qw\\
\end{quantikz}
}
	\caption{Standard decomposition of the inverse Fourier transform into Hadamard gates and controlled rotations $R_l$. The controlled rotation $R_l$ adds a phase of $-2\pi i/2^l$ to $\ket{1}$ and leaves $\ket{0}$ unchanged. An additional initial reordering of the qubits in reverse order is not shown. }
	\label{fig:inverseFourier}
\end{figure}

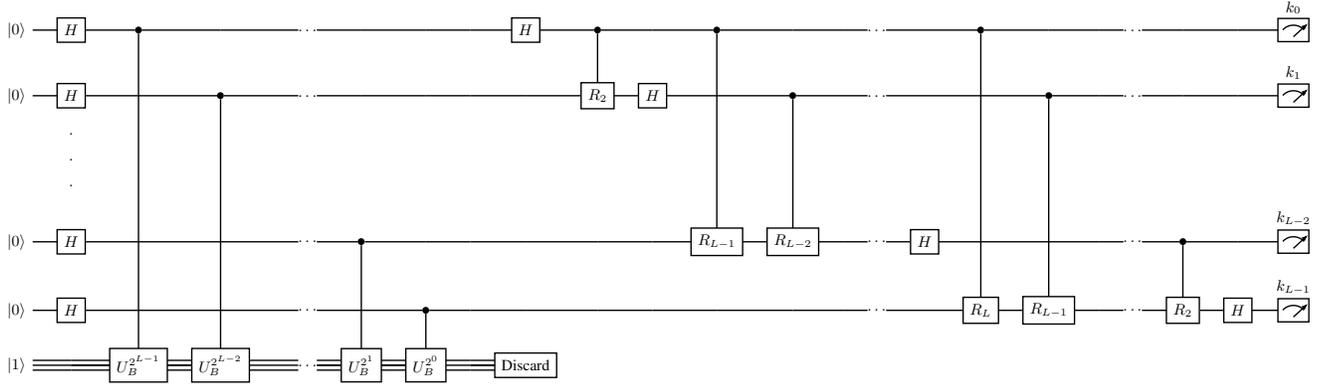
\begin{figure}[ht]
	\centering
	\scalebox{0.65}{
	\begin{quantikz}
	\lstick{$\ket{0}$} 	 & \gate{H} 			  & \ctrl{7} 												&\qw		 											&\qw						&\ldots	\qw						&\qw												&\qw												&\qw					& \gate{H}	&\ctrl{1}	&\qw		&\ctrl{5}		&\qw			&\qw		&\ldots\qw	&\qw		&\ctrl{6}	&\qw			&\qw	&\ldots\qw	&\qw		&\qw		&\meter{$k_0$}	\\
	\lstick{$\ket{0}$} 	 & \gate{H} 			  & \qw						 								& \ctrl{6}												&\qw						&\ldots	\qw						&\qw	 											&\qw												&\qw					&\qw		&\gate{R_2}	&\gate{H}	&\qw			&\ctrl{4}		&\qw		&\ldots\qw	&\qw		&\qw		&\ctrl{5}		&\qw	&\ldots\qw	&\qw		&\qw		&\meter{$k_1$}	\\
	&	.					  &															& 														&							&								&  													&													&						&			&			&			&				&				&			&			&			&			&				&		&			&			&			&			\\
	&	.					  &															&														&							&								&													&													&						&			&			&			&				&				&			&			&			&			&				&		&			&			&			&			\\
	&	.					  &															&														&							&								&													&													&						&			&			&			&				&				&			&			&			&			&				&		&			&			&			&			\\
	\lstick{$\ket{0}$} 	 & \gate{H} 			  & \qw														&\qw 													&\qw						&\ldots	\qw						&\ctrl{2}											&\qw												&\qw					&\qw		&\qw		&\qw		&\gate{R_{L-1}}	&\gate{R_{L-2}}	&\qw		&\ldots\qw	&\gate{H}	&\qw		&\qw			&\qw	&\ldots\qw	&\ctrl{1}	&\qw		&\meter{$k_{L-2}$}	\\
	\lstick{$\ket{0}$} 	 & \gate{H} 			  & \qw														&\qw 													&\qw						&\ldots	\qw						&\qw												&\ctrl{1}											&\qw					&\qw		&\qw		&\qw		&\qw			&\qw			&\qw		&\ldots\qw	&\qw		&\gate{R_L}	&\gate{R_{L-1}}	&\qw	&\ldots\qw	&\gate{R_2}	&\gate{H}	&\meter{$k_{L-1}$}	\\
	\lstick{$\ket{1}$} & \qwbundle[alternate]{} & \gate{U_B^{2^{L-1}}}\qwbundle[alternate]{}	&\gate{U_B^{2^{L-2}}}\qwbundle[alternate]{}	&\qwbundle[alternate]{}  	&\ldots	\qwbundle[alternate]{}	&\gate{U_B^{2^{1}}}\qwbundle[alternate]{}	&\gate{U_B^{2^{0}}}\qwbundle[alternate]{}	&\qwbundle[alternate]{}	& \gate{\text{Discard}}\qwbundle[alternate]{}			&			&			&				&				&			&			&			&			&				&		&			&			&			&			\\
\end{quantikz}
}
	\caption{Decomposition of Shor's algorithm, with the inverse Fourier transform decomposed into  Hadamard gates and controlled rotations $R_l$. A controlled rotation $R_l$ adds a phase of $-2\pi i/2^l$ to $\ket{1}$ and leaves $\ket{0}$ unchanged. This leads to a total outcome $k=\sum_{i=0}^{L-1} 2^i k_i$. }
	\label{fig:ShorDecomposition}
\end{figure}

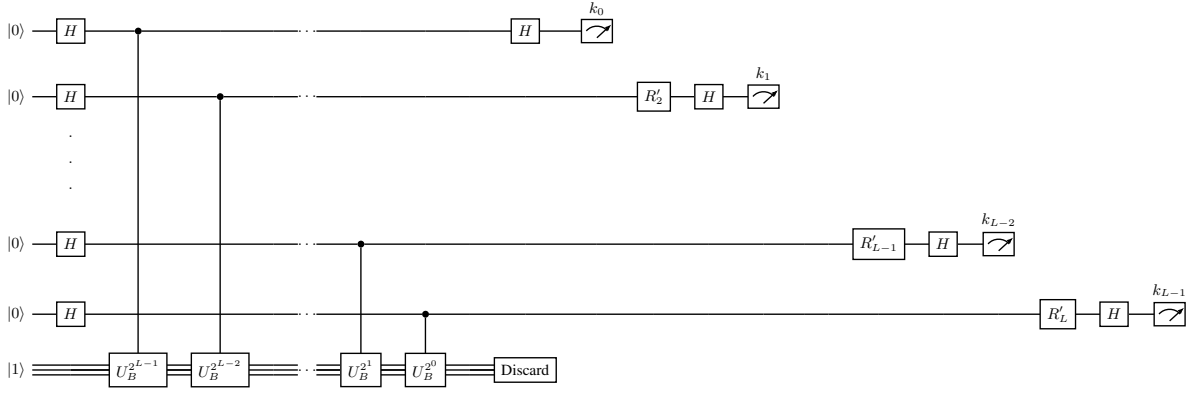
\begin{figure}[ht]
	\scalebox{0.65}{
	\begin{quantikz}
	\lstick{$\ket{0}$} 	 & \gate{H} 			  & \ctrl{7} 												&\qw		 											&\qw						&\ldots	\qw						&\qw												&\qw												&\qw					& \gate{H}	&\meter{$k_0$}	&		&		&		&	&	&		&	&			&	&	&			\\
	\lstick{$\ket{0}$} 	 & \gate{H} 			  & \qw						 								& \ctrl{6}												&\qw						&\ldots	\qw						&\qw	 											&\qw												&\qw					&\qw	&\qw	&\gate{R_2'}	&\gate{H}				&\meter{$k_1$}		&		&	&		&		&		&	&	&			\\
	&	.					  &															& 														&							&								&  													&													&						&			&			&			&				&				&			&			&			&			&				&		&			&					\\
	&	.					  &															&														&							&								&													&													&						&			&			&			&				&				&			&			&			&			&				&		&			&					\\
	&	.					  &															&														&							&								&													&													&						&			&			&			&				&				&			&			&			&			&				&		&			&						\\
	\lstick{$\ket{0}$} 	 & \gate{H} 			  & \qw														&\qw 													&\qw						&\ldots	\qw						&\ctrl{2}											&\qw												&\qw					&\qw		&\qw		&\qw		&\qw	&\qw	&\qw		&\qw	&\gate{R_{L-1}'}&\gate{H}			&\meter{$k_{L-2}$}			& 	&	& 		\\
	\lstick{$\ket{0}$} 	 & \gate{H} 			  & \qw														&\qw 													&\qw						&\ldots	\qw						&\qw												&\ctrl{1}											&\qw					&\qw		&\qw		&\qw		&\qw			&\qw			&\qw		&\qw	&\qw		&\qw	&\qw	&\gate{R_{L}'}	&\gate{H}	&\meter{$k_{L-1}$}	\\
	\lstick{$\ket{1}$} & \qwbundle[alternate]{} & \gate{U_B^{2^{L-1}}}\qwbundle[alternate]{}	&\gate{U_B^{2^{L-2}}}\qwbundle[alternate]{}	&\qwbundle[alternate]{}  	&\ldots	\qwbundle[alternate]{}	&\gate{U_B^{2^{1}}}\qwbundle[alternate]{}	&\gate{U_B^{2^{0}}}\qwbundle[alternate]{}	&\qwbundle[alternate]{}	&	\gate{\text{Discard}}\qwbundle[alternate]{}		&			&			&				&				&			&			&			&			&				&		&			&					\\
\end{quantikz}
}
	\caption{Rewriting Shor's algorithm using classically controlled effective rotations $R'_l=\sum_{n=0}^1 e^{-2\pi i n\phi'_l}\ketbra{n}{n}$ that depend on the outcomes of previous measurements via $\phi'_l=\sum_{a=2}^l k_{l-a}/2^a$.}
	\label{fig:ShorDecompositionComp}
\end{figure}

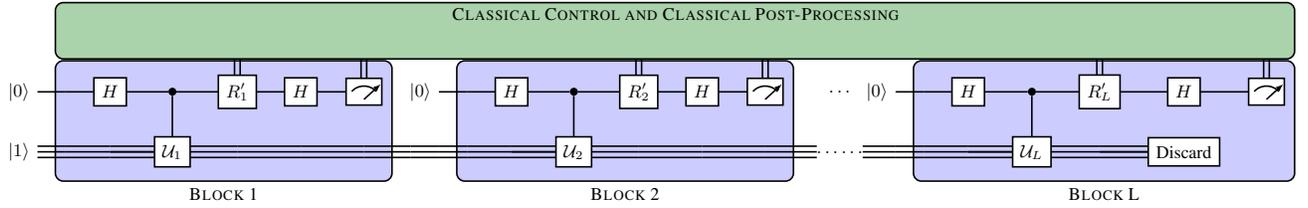
\begin{figure}[ht]
	\scalebox{0.75}{
\begin{quantikz}
&\gategroup[1,steps=22,style={rounded corners,fill=dartmouthgreen!35, inner xsep=2pt}, label style={label position=above,anchor=north,yshift=-0.2cm}]{{\sc Classical Control and Classical Post-Processing}}&\gate{\text{A}} \cw &&&&&&&&&&&&&&&&&&&& \\
\lstick{$\ket{0}$}&\qw\gategroup[2,steps=6,style={rounded corners,fill=blue!20, inner xsep=2pt},background, label style={label position=below,anchor=north,yshift=-0.2cm}]{{\sc Block 1}}&\gate{H}&\ctrl{1}& \gate{R_1^\prime} \vcw{-1} & \gate{H}&\meter{} \vcw{-1} && \lstick{$\ket{0}$} &\qw \gategroup[2,steps=6,style={rounded corners,fill=blue!20, inner xsep=2pt},background, label style={label position=below,anchor=north,yshift=-0.2cm}]{{\sc Block 2}}& \gate{H} &\ctrl{1} &\gate{R_2^\prime} \vcw{-1} &\gate{H}& \meter{}\vcw{-1} & \hspace{0.3cm}\ldots\hspace{0.3cm} & \lstick{$\ket{0}$}&\qw \gategroup[2,steps=6,style={rounded corners,fill=blue!20, inner xsep=2pt},background, label style={label position=below,anchor=north,yshift=-0.2cm}]{{\sc Block L}}&\gate{H}& \ctrl{1}&\gate{R_L^\prime} \vcw{-1} &\gate{H} &\meter{} \vcw{-1}  \\
\lstick{$\ket{1}$}&\qwbundle[alternate]{}&\qwbundle[alternate]{}&\gate{\mathcal{U}_1} \qwbundle[alternate]{}&\qwbundle[alternate]{}&\qwbundle[alternate]{}&\qwbundle[alternate]{}&\qwbundle[alternate]{} &\qwbundle[alternate]{}&\qwbundle[alternate]{}&\qwbundle[alternate]{}&\gate{\mathcal{U}_2}\qwbundle[alternate]{} &\qwbundle[alternate]{} &\qwbundle[alternate]{} &\qwbundle[alternate]{} &\qwbundle[alternate]{} \ldots\ldots &\qwbundle[alternate]{}&\qwbundle[alternate]{} &\qwbundle[alternate]{} &\gate{\mathcal{U}_L} \qwbundle[alternate]{} &\qwbundle[alternate]{} &\gate{\text{Discard}}\qwbundle[alternate]{}&
\end{quantikz}
}
	\caption{Sequential order-finding protocol using the semi-classical version of the Fourier transform. The modular exponentiation factors into single qubit controlled-operations given by $\mathcal{U}_l=U_B^{2^{L-l}}$ and the classically controlled rotations $R_l^\prime=\sum_n e^{-2\pi i n \phi_l^\prime}\ketbra{n}{n}$, where the phases $\phi_l^\prime$ depend on the previous measurement outcomes $k_l$ via $\phi_l^\prime=\sum_{a=2}^{l} k_{l-a}/2^{a}$, see Refs.~\cite{Griffiths1995,Parker2000}.}
	\label{BlockFormHadamards}
\end{figure}

\section{Choosing the free operations}

As discussed in the main text, we fix the overall protocol that we investigate and vary only parts of it. Here, we will explain our choices a bit more in detail.
First, we assume that the post-processing of measurement results after a single round is achieved by the continued fraction algorithm. 
If this fails, then we restart the algorithm and perform the post-processing without accounting for the previous outcomes, thereby ignoring possible correlations between results of failed trials. In general, this is not the best possible post-processing strategy. An example of a more involved strategy can be found in Ref.~\cite{Nielsen2016}. Nevertheless, for simplicity, we assume this fixed post-processing involves only the outcome of individual trials since we are not optimizing over post-processing strategies anyway.
The ability to create, then utilize, and finally detect coherence is a key feature in the protocol. Imposing constraints on these abilities can be done naturally within the framework of dynamical resource theories of coherence.

Notice that expressing the protocol in the form of Fig.~\ref{BlockFormHadamards} makes it clear that except for the Hadamard gates, the protocol utilizes only incoherent input states, channels $\mathcal{U}_l$ and $R_l^\prime$ that can neither detect nor create coherence, and measurements in the incoherent basis. 
Replacing Hadamard gates by quantum channels $S_1^{(l)}[\Theta_l]$ and $S_2^{(l)}[\Lambda_l]$ respectively results in the protocol depicted in Fig.~\ref{BlockForm}.
If no particular block is considered, we omit the label $l$ and refer to the channels for creation and detection simply as $\Theta$ and $\Lambda$.

\begin{figure}[ht]
	\centering
	\scalebox{0.64}{
	\begin{quantikz}
&\gategroup[1,steps=22,style={rounded corners,fill=dartmouthgreen!35, inner xsep=2pt}, label style={label position=above,anchor=north,yshift=-0.2cm}]{{\sc Classical Control and Classical Post-Processing}}&\gate{\text{A}} \cw &&&&&&&&&&&&&&&&&&&& \\
\lstick{$\sigma_1$}&\qw\gategroup[2,steps=6,style={rounded corners,fill=blue!20, inner xsep=2pt},background, label style={label position=below,anchor=north,yshift=-0.2cm}]{{\sc Block 1}}&\gate{S_1^{(1)}[\Theta_1]}&\ctrl{1}& \gate{R_1^\prime} \vcw{-1} & \gate{S_2^{(1)}[\Lambda_1]}&\meter{} \vcw{-1} && \lstick{$\sigma_2$} &\qw \gategroup[2,steps=6,style={rounded corners,fill=blue!20, inner xsep=2pt},background, label style={label position=below,anchor=north,yshift=-0.2cm}]{{\sc Block 2}}& \gate{S_1^{(2)}[\Theta_2]} &\ctrl{1} &\gate{R_2^\prime} \vcw{-1} &\gate{S_2^{(2)}[\Lambda_2]}& \meter{}\vcw{-1} & \hspace{0.3cm}\ldots\hspace{0.3cm} & \lstick{$\sigma_L$}&\qw \gategroup[2,steps=6,style={rounded corners,fill=blue!20, inner xsep=2pt},background, label style={label position=below,anchor=north,yshift=-0.2cm}]{{\sc Block L}}&\gate{S_1^{(L)}[\Theta_L]}& \ctrl{1}&\gate{R_L^\prime} \vcw{-1} &\gate{S_2^{(L)}[\Lambda_L]} &\meter{} \vcw{-1}  \\
\lstick{$\ket{1}$}&\qwbundle[alternate]{}&\qwbundle[alternate]{}&\gate{\mathcal{U}_1} \qwbundle[alternate]{}&\qwbundle[alternate]{}&\qwbundle[alternate]{}&\qwbundle[alternate]{}&\qwbundle[alternate]{} &\qwbundle[alternate]{}&\qwbundle[alternate]{}&\qwbundle[alternate]{}&\gate{\mathcal{U}_2}\qwbundle[alternate]{} &\qwbundle[alternate]{} &\qwbundle[alternate]{} &\qwbundle[alternate]{} &\qwbundle[alternate]{} \ldots\ldots &\qwbundle[alternate]{}&\qwbundle[alternate]{} &\qwbundle[alternate]{} &\gate{\mathcal{U}_L} \qwbundle[alternate]{} &\qwbundle[alternate]{} &\gate{\text{Discard}}\qwbundle[alternate]{}&
\end{quantikz}
}
	\caption{Circuit representation of the order-finding protocol using only channels $\Theta_l$ and $\Lambda_l$ to create and detect coherence. The outcomes of an (incoherent) projective measurement in the computational basis are forwarded to the classical control and post-processing unit, which re-initializes the single control qubit, classically controls the rotations $R_l^\prime$ to implement the inverse Fourier transform, and lastly computes the continued fraction decomposition to yield an estimate of the order $r$.}
	\label{BlockForm}
\end{figure}
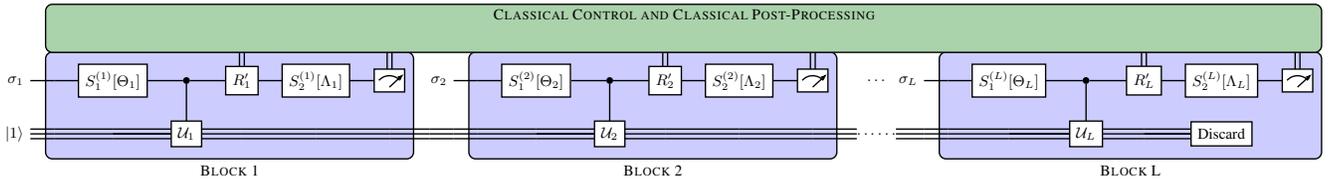

Let us now explain why the symmetry of the fully \textit{coherent} protocol that uses the same channel to create and detect coherence (i.e., the Hadamard gate) has to be broken in the more general case: The ability to create and detect coherence are two fundamentally different properties a quantum channel can possess, which in turn gives rise to two different resources that are generally not interconvertible (e.g., a channel $\Gamma(\sigma)=\rho\trace(\sigma)$ can prepare coherence if $\rho$ is chosen suitably, but not detect, whilst a destructive measurement in the Fourier basis can detect but not prepare coherence). The Hadamard gate can however create a maximally coherent state (by applying it to $\ket{0}$), but also maximizes the NSID-measure~\cite{Theurer2019}. Therefore, it plays a dual role, i.e., it both creates and detects coherence.

As mentioned in the main text, the choice of free channels follows naturally. If $\Theta$ is incapable of creating coherence, no information about the order can be encoded. If $\Lambda$ cannot detect coherence, none of this information can influence the measurement statistics. Thus, the lack of either ingredient renders the protocol practically ``useless" by reducing it to a random number generator independent of the order that it is supposed to estimate, and is moreover classically simulable. Therefore, the choices of free channels are maximally incoherent channels $\CI$~\cite{Aberg2006} and detection-incoherent channels $\DI$~\cite{Theurer2019}, also known as non-activating~\cite{Liu2017}.  Let us mention that this random number generator gives rise to different probability distributions depending on the structure of the free channels $\Theta_{\free}$ and $\Lambda_{\free}$. Details will follow in the next section.\newline
It is tempting to choose the set of creation-detection incoherent channels $\CDI$ as the set of free channels, i.e., the channels that can neither create nor detect coherence, also known as dephasing-covariant channels~\cite{Chitamber2016Comp,Marvian2016,Chitamber2016}, classical~\cite{Meznaric2013}, or commuting~\cite{Liu2017}. This would keep the symmetry of the protocol and seems to be an intuitive choice as it leads to a ``fully classical" protocol. 
However, it does not lead to a consistent connection between operational advantages and deployed resources:
Imagine we would use an channel $\Lambda \in \DI$ with $\Lambda \not\in \CI$ for detection. Although not granting any operational advantage, this channel has to be considered resourceful. In contrast, our choice of different sets of free channels naturally leads to an operationally meaningful use of resources. \newline\newline

Furthermore, the channel $\Lambda$ utilized in the detection scheme is assumed to be a unital map. This assumption is physically motivated: The measurement statics of the incoherent measurement are uniquely determined by the pre-measurement populations. To be maximally sensitive to information about $r$, we want that the deviation of the measurement statistics from a flat distribution purely depends on the coherences that $\Lambda$ mapped to populations, and not on a reshuffling of populations that does not include information about $r$.
Without knowing $r$, we can choose both free super-channels $S_1$ and $S_2$ such that this is the case iff $\Lambda$ is unital. Since the state before $\mathcal{U}_l$ is still independent of $r$, we can always choose $S_1$ such that its populations are equal to a maximally mixed state, without affecting the coherences (because the phases of the coherences are still independent of $r$ and therefore known). After $\mathcal{U}_l$, the phase of the coherences depends however on $r$, and we can thus not alter the populations without varying the coherences (or knowing $r$). Thus, if $\Lambda$ were not unital but could detect coherence, the following might happen: The measurement statistics depend stronger on the population reshuffling than on the detected coherences. In this case, we would perform worse than with a free channel that leads to equally distributed random numbers and therefore on average produces better guesses of $r$ than random numbers that are weighted in a way that does not depend on $r$. To avoid this, we must choose $\Lambda_l$ to be unital and similarly choose the super-channels $S_2^{(l)}$ to be unitality-preserving.

\section{Success probability}
The success probability of the order-finding protocol, consisting of the quantum part combined with
the continued fraction algorithm, can now be expressed. To ease up the notation, we make use of the equivalence between Shor's original version and the sequential version. That way, there is no need to laboriously track the back-action of the measurements in each block on the auxiliary system, which allows us to express the success probability compactly. Recall that the detection part, i.e., the \textit{standard} implementation of the inverse Fourier transform (see Fig.~\ref{fig:inverseFourier}), was altered only by replacing the Hadamard gates with channels $S_2^{(l)}[\Lambda_l]$. Let us denote the resulting channel by $F_{S_2^{(l)}[\Lambda_l]}$. Furthermore, we use $\tilde{\sigma}=\bigotimes_{l} \sigma_l$ and the POVM elements $M_k=\bigotimes_l M_{k_l}$. With this  notation, the incoherent measurement $\mathbb{M}= \lbrace M_k \rbrace_k$ results in the measurement statistics 
\begin{equation}\label{eq:succProb}
	\begin{split}
		p_k(S_1^{(l)}[\Theta_l],S_2^{(l)}[\Lambda_l]; \tilde{\sigma}, \mathbb{M})&=\Tr{M_k \Delta F_{S_2^{(l)}[\Lambda_l]} \mathcal{E}\bigotimes_{l=1}^{L} S_1^{(l)}[\Theta_l]\tilde{\sigma}},
	\end{split}
\end{equation}
where $\mathcal{E}$ denotes the uniformly weighted rotations described in Lem.~\ref{Encoding}. After completing all blocks, the measurement outcome $k$ is forwarded to the CFA, which will return the order $r$ with a probability of $P(k\to r\,|\, \text{CFA})$. Therefore, the probability that the order-finding protocol in Fig.~\ref{BlockForm} succeeds, is given by
\begin{equation}
	P^{\suc}(S_1^{(l)}[\Theta_l],S_2^{(l)}[\Lambda_l]; \tilde{\sigma}, \mathbb{M})=\sum_k P(k\to r\,|\, \text{CFA})\,\,p_k(S_1^{(l)}[\Theta_l],S_2^{(l)}[\Lambda_l]; \tilde{\sigma}, \mathbb{M}).
\end{equation}
Since all incoherent input states $\sigma_l$, incoherent measurements, and free super-channels $S^{(l)}_1$ and $S^{(l)}_2$ are available at no cost, we choose them optimally (but without knowledge of $r$ and in a way that is implementable efficiently), which ensures that the available resources are used adequately. The resulting success probability is then given by

\begin{equation}
	\begin{split}
		P^{\suc}(\Theta_l,\Lambda_l)&= \max_{\substack{\tilde{\sigma} \in \mathcal{I} \\\mathbb{M} \in \IM}} \sup_{\substack{S_1^{(l)} \in \CIS \\ S_2^{(l)} \in \DIS}} P^{\suc}(S_1^{(l)}[\Theta_l],S_2^{(l)}[\Lambda_l]; \tilde{\sigma}, \mathbb{M}) \\
		&= \max_{\substack{\tilde{\sigma} \in \mathcal{I} \\\mathbb{M} \in \IM}} \sup_{\substack{S_1^{(l)} \in \CIS \\ S_2^{(l)} \in \DIS}}  \sum_k P(k\to r\,|\, \text{CFA})\,\,p_k(S_1^{(l)}[\Theta_l],S_2^{(l)}[\Lambda_l]; \tilde{\sigma}, \mathbb{M}).
	\end{split}
\end{equation}
Since every incoherent POVM $\mathbb{M}$ is equivalent to a detection-incoherent channel followed by a projective measurement $\mathbb{P}$ in the incoherent basis~\cite{Theurer2019}, the optimization over the measurement can be absorbed into the optimization of the detection-incoherent super-channel, i.e.,
\begin{equation}\label{SuccessProb}
	P^{\suc}(\Theta_l,\Lambda_l)=\max_{\tilde{\sigma} \in \mathcal{I}} \sup_{\substack{S_1^{(l)} \in \CIS \\ S_2^{(l)} \in \DIS}}\sum_k P(k\to r \,|\, \text{CFA})\,p_k(S_1^{(l)}[\Theta_l],S_2^{(l)}[\Lambda_l];\tilde{\sigma},\mathbb{P}).
\end{equation}
In general, this expression seems hard to evaluate exactly. However, in the following section, we will provide bounds allowing us to compare performance and resource content. \newline

\section{Proof of the results in the main text}
In this section we give the proofs of the results presented in the main text, i.e., we derive bounds on the success probability given in Eq.~\eqref{SuccessProb}. 

\subsection{Preliminaries}\label{ProofPreliminary}
We start by presenting a bound on a product that we will later use to obtain a lower bound on the performance of the order-finding protocol.\vspace{0.4cm}
\begin{lem}\label{4PiBound}
	For positive numbers $\lbrace a_l \rbrace_l$ with $0\leq a_l\leq 1\, \forall \, l$ the following inequalities hold: 
	\begin{equation}
		\frac{4}{\pi^2} \prod_{l=1}^{L} \frac{1}{2} \left[1+a_l \right]\leq \prod_{l=1}^{L} \frac{1}{2}\left[1+a_l \cos\left(\frac{\pi}{2^l}\right) \right]\leq \prod_{l=1}^{L} \frac{1}{2} \left[1+a_l \right].
	\end{equation}
\end{lem}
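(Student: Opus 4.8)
The plan is to treat the two inequalities separately. The upper bound is immediate: since $\cos(\pi/2^l)\le 1$ for every $l\ge 1$ and $a_l\ge 0$, each factor obeys $1+a_l\cos(\pi/2^l)\le 1+a_l$, and multiplying these nonnegative inequalities factor by factor yields the right-hand estimate at once.

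For the lower bound, after dividing through by the positive quantity $\prod_l \tfrac12(1+a_l)$ it suffices to show
\begin{equation}
\prod_{l=1}^L \frac{1+a_l\cos(\pi/2^l)}{1+a_l}\ge \frac{4}{\pi^2}.
\end{equation}
The key structural observation is that the constant $4/\pi^2$ depends neither on $L$ nor on the $a_l$, which suggests locating the worst case. I would fix $l$ and regard the single factor $a\mapsto (1+a\cos(\pi/2^l))/(1+a)$ as a function on $[0,1]$; its derivative equals $(\cos(\pi/2^l)-1)/(1+a)^2\le 0$, so each factor is nonincreasing in its variable. Because the factors are positive and depend on disjoint variables, the whole product is minimized by taking $a_l=1$ for every $l$, reducing the claim to
\begin{equation}
\prod_{l=1}^L \frac{1+\cos(\pi/2^l)}{2}\ge \frac{4}{\pi^2}.
\end{equation}

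Next I would apply the half-angle identity $\tfrac12(1+\cos\theta)=\cos^2(\theta/2)$ to rewrite the left-hand side as $\bigl(\prod_{l=1}^L \cos(\pi/2^{l+1})\bigr)^2$, a truncated Vi\`ete-type product. The final ingredient is the classical telescoping identity $\sin x = 2^n\sin(x/2^n)\prod_{k=1}^n\cos(x/2^k)$, which upon letting $n\to\infty$ and setting $x=\pi/2$ gives $\prod_{k=1}^\infty\cos(\pi/2^{k+1})=\tfrac{\sin(\pi/2)}{\pi/2}=2/\pi$ (see also Ref.~\cite{MORENO201390}). Since every factor $\cos(\pi/2^{l+1})$ lies in $(0,1)$, the sequence of partial products decreases monotonically to this limit, so each finite partial product is at least $2/\pi$; squaring then delivers the bound $4/\pi^2$.

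The only genuinely nontrivial step is recognizing the truncated product as a Vi\`ete partial product and controlling it from below; everything else is elementary monotonicity. I expect the main care to be needed in justifying that the finite partial product dominates the infinite one — which holds precisely because all omitted factors are at most $1$ — rather than in any delicate estimation.
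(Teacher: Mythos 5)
Your proof is correct and follows essentially the same route as the paper: both arguments reduce the lower bound to the worst case $a_l=1$ (your monotonicity-in-$a_l$ step is equivalent to the paper's algebraic rewriting of each factor as $\tfrac{1+a_l}{2}+\tfrac{1-a_l}{2}\tfrac{1-\cos(\pi/2^l)}{1+\cos(\pi/2^l)}$), and then identify the remaining product as the truncated Vi\`ete product $\prod_{l=1}^L\cos^2(\pi/2^{l+1})$, bounded below by its limit $4/\pi^2$ because all omitted factors lie in $(0,1)$.
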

\begin{proof}
	Since $a_l\geq 0$, the upper bound holds trivially. For the lower bound, notice that the term $0\leq \cos\left(\frac{\pi}{2^l}\right) <1$ rapidly converges to one for increasing $l$. Thereby, it is reasonable that the deviation from the simple upper bound is small. First rewrite the product as 
	\begin{equation}
		\begin{split}
			\prod_{l=1}^{L} \frac{1}{2}\left[ 1+a_l \cos\left(\frac{\pi}{2^l}\right) \right] &= \prod_{l=1}^{L} \frac{1}{2} \left[1+\cos\left(\frac{\pi}{2^l}\right)\right]\prod_{l=1}^{L} \left[ \frac{1+a_l \cos\left(\frac{\pi}{2^l}\right)}{1+\cos\left(\frac{\pi}{2^l}\right)} \right]\\
			&= \prod_{l=1}^{L} \frac{1}{2} \left[1+\cos\left(\frac{\pi}{2^l}\right)\right] 
			\prod_{l=1}^{L} \left[\frac{1+a_l}{2} +\frac{1-a_l}{2} \frac{1-\cos\left(\frac{\pi}{2^l}\right)}{1+\cos\left(\frac{\pi}{2^l}\right)}\right]\\
			&\geq \prod_{l=1}^{L} \frac{1}{2} \left[1+\cos\left(\frac{\pi}{2^l}\right)\right] \prod_{l=1}^{L} \left[\frac{1+a_l}{2}\right] \\
			&=\prod_{l=1}^{L} \cos^2\left(\frac{\pi}{2^{l+1}}\right) \prod_{l=1}^{L} \left[\frac{1+a_l}{2}\right].
		\end{split}
	\end{equation}
	Now utilize a special case of the Viète-Euler product formula, see for example Ref.~\cite{MORENO201390}, $\frac{\sin(x)}{x}=\prod_{l=1}^{\infty} \cos \left( \frac{x}{2^l}\right)$ with $ x=\pi/2$ which results in
	\begin{equation}
		\frac{4}{\pi^2}=\prod_{l=1}^{\infty} \cos^2 \left( \frac{\pi}{2^{l+1}}\right)=\prod_{l=1}^{L} \cos^2 \left( \frac{\pi}{2^{l+1}}\right)\prod_{l=L+1}^{\infty} \cos^2 \left( \frac{\pi}{2^{l+1}}\right)\leq  \prod_{l=1}^{L} \cos^2 \left( \frac{\pi}{2^{l+1}}\right),
	\end{equation}
	which concludes the proof. Notice that the last inequality is asymptotically tight for $L\to \infty$.
\end{proof}

Let us proceed by introducing a particular super-channel $S_2$ for the detection scheme. The channel $S_2[\Lambda]$ mimics a key property of the Hadamard gate that will allow us to mimic a key property of the inverse Fourier transform such that the protocol yields useful measurement outcomes with high probability.

\begin{lem}\label{HadamardLike}
	Let $\Lambda$ be a qubit quantum channel, defined in the index representation as 
	\begin{equation}
		\Lambda(\ketbra{n}{m} )=\sum_{kl} \Lambda_{kl}^{nm} \ketbra{k}{l}.
	\end{equation}
	There exists an implementable super-channel $S_2\in \DIS$, such that
	\begin{equation}\label{HadamardLikeEquation}
		\Delta S_2[\Lambda](\ketbra{n}{m})=\sum_{k=0}^{1} |\Lambda_{kk}^{nm}| e^{\pi i k(n-m)} \ketbra{k}{k}.
	\end{equation}
	It suffices to choose a super-channel of the form $S[\Lambda]=\Lambda\Phi_2$. We refer to the action of the channel $S_2[\Lambda]$ on any state as Hadamard-like, or shortly the channel is Hadamard-like.
\end{lem}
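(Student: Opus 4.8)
The plan is to realize $S_2$ as pure pre-processing by a diagonal phase gate. Concretely, I would set $\Phi_2(\cdot)=P(\cdot)P^\dagger$ with the diagonal unitary $P=\diag(1,e^{i\varphi})$ for a phase $\varphi$ fixed below, and take $S_2[\Lambda]=\Lambda\Phi_2$ (free pre-processing $\Phi_2$, trivial post-processing). Conjugation by a diagonal unitary commutes with total dephasing, so $\Delta\Phi_2=\Delta=\Phi_2\Delta$ and in particular $\Delta\Phi_2=\Delta\Phi_2\Delta$, i.e. $\Phi_2\in\DI$; hence $S_2\in\DIS$. It is moreover unital, and therefore unitality preserving, and manifestly implementable (one simply applies $P$ before $\Lambda$), with $\varphi$ depending only on $\Lambda$ and not on $r$.

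Before verifying \eqref{HadamardLikeEquation}, I would record two structural identities for the coefficients $\Lambda_{kl}^{nm}$. Hermiticity preservation, $\Lambda(\ketbra10)=\Lambda(\ketbra01)^\dagger$, yields $\Lambda_{kk}^{10}=\overline{\Lambda_{kk}^{01}}$. Trace preservation yields $\Lambda_{00}^{nm}+\Lambda_{11}^{nm}=\Tr{\ketbra nm}=\delta_{nm}$, so on the coherence block $n\neq m$ one obtains the anti-alignment $\Lambda_{11}^{01}=-\Lambda_{00}^{01}$, and in particular $|\Lambda_{11}^{01}|=|\Lambda_{00}^{01}|$.

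The verification is then a short case check on matrix units, from which the general statement follows by linearity. Since $\Phi_2(\ketbra nm)=e^{i\varphi(n-m)}\ketbra nm$, the diagonal entries of the dephased output are $\bra k\Delta\Lambda\Phi_2(\ketbra nm)\ket k=e^{i\varphi(n-m)}\Lambda_{kk}^{nm}$, which I must match to $|\Lambda_{kk}^{nm}|\,e^{\pi i k(n-m)}$. Choosing $\varphi=\arg\Lambda_{00}^{01}$ settles all cases at once: for $n=m$ the phase factor is trivial and $\Lambda_{kk}^{nn}\ge0$; for $(n,m)=(0,1)$ the outcome $k=0$ gives $e^{-i\varphi}\Lambda_{00}^{01}=|\Lambda_{00}^{01}|$, while the outcome $k=1$ gives, via $\Lambda_{11}^{01}=-\Lambda_{00}^{01}$, the value $-|\Lambda_{00}^{01}|=|\Lambda_{11}^{01}|\,e^{-\pi i}$, which is exactly the required factor $e^{\pi i k(n-m)}$; and the case $(n,m)=(1,0)$ follows by complex conjugation from the Hermiticity identity $\Lambda_{kk}^{10}=\overline{\Lambda_{kk}^{01}}$.

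The single substantive point, everything else being phase-gate bookkeeping, is the trace-preservation relation $\Lambda_{11}^{01}=-\Lambda_{00}^{01}$. This anti-alignment is exactly what lets one phase $\varphi$ correct the coherence-to-population transfer for both outcomes $k=0$ and $k=1$ simultaneously; were the two off-diagonal coefficients not opposite, the two diagonal entries would demand incompatible phases and no pre-processing phase gate could reproduce the Hadamard-like sign pattern $e^{\pi i k(n-m)}$. The needed relations here are elementary consequences of Hermiticity and trace preservation, so no appeal to the finer qubit phase identities of Ref.~\cite{Masini2021} is required.
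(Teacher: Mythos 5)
Your proof is correct and follows essentially the same route as the paper's: a diagonal phase gate $P=\diag(1,e^{i\varphi})$ with $\varphi=\arg\Lambda_{00}^{01}$ used as free pre-processing, with the population case handled by positivity, the $(0,1)$ case by the trace-preservation identity $\Lambda_{11}^{01}=-\Lambda_{00}^{01}$, and the $(1,0)$ case by Hermiticity. Your explicit check that $\Phi_2\in\DI$ and your remark that $\varphi$ is independent of $r$ are welcome additions that the paper leaves implicit.
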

\begin{proof}
	Let us use the notation $\Lambda_{kl}^{nm}=|\Lambda_{kl}^{nm}| e^{i\lambda_{kl}^{nm}}$ and choose $\Phi_2$ as the channel corresponding to the unitary $\sum_n e^{i\lambda_{00}^{01}n} \ketbra{n}{n}$. In the following, we will see that this choice satisfies our requirements.
	Note first that $\Lambda_{kk}^{nn}\geq 0\, \forall k,n$, and therefore property~\eqref{HadamardLikeEquation} holds for populations. Moreover
	\begin{align}
		\bra{0}\left( \Lambda \Phi_2\ketbra{0}{1}\right) \ket{0}=|\Lambda_{00}^{01}|
	\end{align}
	as required, and due to trace preservation
	\begin{align}
		\bra{1}\left( \Lambda \Phi_2\ketbra{0}{1}\right) \ket{1}=|\Lambda_{00}^{10}| e^{i(\lambda_{11}^{01}-\lambda_{00}^{01})}=-|\Lambda_{00}^{01}|,
	\end{align}
	i.e., $e^{i(\lambda_{11}^{01}-\lambda_{00}^{01})}=-1=e^{i\pi 1(0-1)}$, which finishes this case too. Finally,
	\begin{align}
		\Lambda \Phi_2\ketbra{1}{0}=\left( \Lambda \Phi_2\ketbra{0}{1}\right)^\dagger,
	\end{align}
	from which the remainder of the proof follows.
\end{proof}

\subsection{A lower bound}\label{LowerBoundProof}
A lower bound on the success probability~\eqref{SuccessProb} is essential to bound the runtime of the algorithm. For the coherent protocol, which utilizes Hadamard gates, it has been shown that the success probability is lower bounded by a function that is slowly growing in the number $N$ to factor~\cite{Shor1995}. In this section, we prove a similar bound for less resourceful channels, that will include the coherent bound derived by Shor as a limiting case.

For the lower bound on Eq.~\eqref{SuccessProb} discussed in the following, we can simply choose a specific set of free super-channels $S_1^{(l)}$ and $S_2^{(l)}$, which are depicted in Fig.~\ref{fig4}.
\begin{figure}[ht]
	\centering
	\scalebox{0.95}{
	\begin{quantikz}
&&&&&&\vcw{2}&&&&\vcw{2}&\\
&\gategroup[3,steps=10,style={rounded corners,fill=blue!20, inner xsep=2pt},background, label style={label position=below,anchor=north,yshift=-0.2cm}]{}&\\
\lstick{$\sigma_l$}&\qw &\qw &\gate{\Theta_l} \gategroup[1,steps=2,style={
rounded corners,fill=cornellred!40, inner xsep=2pt},
background]{{\sc $S_1^{(l)}$}} &\gate{\Phi_2^{(l)}} &\ctrl{1}&\gate{R_l^\prime}&\gate{\Phi_3^{(l)}}\gategroup[1,steps=2,style={
rounded corners,fill=cornellred!40, inner xsep=2pt},
background]{{\sc $S_2^{(l)}$}} &\gate{\Lambda_l}&\qw &\meter{}\\
\qwbundle[alternate]{}&\qwbundle[alternate]{}&\qwbundle[alternate]{}&\qwbundle[alternate]{}&\qwbundle[alternate]{}&\gate{\mathcal{U}_l}\qwbundle[alternate]{}&\qwbundle[alternate]{}&\qwbundle[alternate]{}&\qwbundle[alternate]{}&\qwbundle[alternate]{}&\qwbundle[alternate]{}&\qwbundle[alternate]{}
\end{quantikz}}
	\caption{The particular super-channels that are employed for each individual block to derive the lower bound on the success probability in Thm.~\ref{Thm:LowerBound}.}
	\label{fig4}
\end{figure}
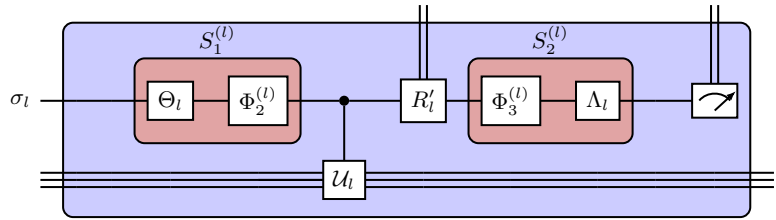

The super-channels $S_2^{(l)}$ employed in the detection part will be the ones that lead to Hadamard-like channels (see Lem.~\ref{HadamardLike}), whereas the $S_1^{(l)}$ will be introduced in the following Theorem.

\begin{thm}\label{Thm:LowerBound} 
	The success probability of the order-finding protocol with qubit channels $\Theta_l$ and unital $\Lambda_l$ is bounded by
	\begin{equation}\label{ThmSM}
		P^{\suc}( \Theta_l ,\Lambda_l) \geq \frac{4}{\pi^2} \frac{\varphi(r)}{r} \prod_{l=1}^{L} \left(\frac{1+\mathscr{C}(\Theta_l) \tilde{M}_\diamond(\Lambda_l)}{2}\right),
	\end{equation}
	where $\mathscr{C}$ denotes the cohering power with respect to the robustness of coherence, $\tilde{M}_\diamond$ is the NSID-measure, both introduced in the main text, and $\varphi(r)$ denotes Euler's totient function.
\end{thm}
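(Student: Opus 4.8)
The plan is to exploit that $P^{\suc}(\Theta_l,\Lambda_l)$ is defined through a supremum over free super-channels $S_1^{(l)}\in\CIS$, $S_2^{(l)}\in\DIS$ and incoherent inputs $\sigma_l$, so that any admissible choice already yields a valid lower bound. I would fix the choice depicted in Fig.~\ref{fig4}: for the detection stage take $S_2^{(l)}$ to be the Hadamard-like super-channel of Lem.~\ref{HadamardLike}, so that after dephasing the detector reproduces the sign structure $e^{\pi i k(n-m)}$ of a genuine Hadamard and the semiclassical inverse Fourier transform of the coherent protocol survives, merely with a reduced visibility. For the creation stage I would set $\sigma_l$ to the incoherent state that is optimal in $\mathscr{C}(\Theta_l)=\max_{\tau\in\mathcal I}C(\Theta_l\tau)$ and let $\Phi_2^{(l)}\in\CI$ be a phase rotation followed by the incoherent mixture of identity and bit flip $\rho\mapsto\tfrac12(\rho+X\rho X)$; both are incoherent, and together they turn $\Theta_l\sigma_l$ into a control qubit with balanced populations and real coherence of magnitude $\tfrac12\mathscr{C}(\Theta_l)$, using that for qubits $C(\rho)=2|\rho_{01}|$. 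Balancing the populations is precisely what the unitality of $\Lambda_l$ is for: it guarantees that the measurement statistics deviate from the uniform distribution only through the detected coherences.

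Next I would lower bound the success probability by keeping only the favourable measurement outcomes. Using $\mathcal{E}=\tfrac1r\sum_{j=0}^{r-1}\mathcal{E}_j$ from Lem.~\ref{Encoding} and discarding all but the terms with $\gcd(j,r)=1$, of which there are $\varphi(r)$, I would write $P^{\suc}\ge \tfrac1r\sum_{\gcd(j,r)=1}p^{(j)}_{\mathrm{good}}$, where $p^{(j)}_{\mathrm{good}}$ is the probability that the sequential measurement returns the unique outcome $k\in\mathcal{K}_1^{j}$ that the continued fraction algorithm maps to $r$ (existence and uniqueness from Cor.~\ref{cor:goodEstimate} and Lem.~\ref{Unique}, and $P(k\to r\,|\,\text{CFA})=1$ on $\mathcal{K}_1$). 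Because the blocks act on separate, sequentially measured control qubits and the detector is Hadamard-like, $p^{(j)}_{\mathrm{good}}$ factorises over the blocks into single-qubit bit probabilities. For block $l$ the control qubit carries a phase $\theta_l$ (depending on $j$) produced by $\mathcal{U}_l=U_B^{2^{L-l}}$ acting on the eigenvalue $e^{2\pi i j/r}$ and corrected by the classically controlled rotation $R_l'$, and the Hadamard-like detector turns it into the outcome probability $\tfrac12[1+\mathscr{C}(\Theta_l)\tilde M_\diamond(\Lambda_l)\cos\theta_l]$. Here the factor $\tfrac12\mathscr C(\Theta_l)$ is the created coherence and $|\Lambda_{00}^{01}|=\tfrac12\tilde M_\diamond(\Lambda_l)$ (Lem.~\ref{lem:DetectabilityNSID}) is the detection strength, and their product is exactly $\mathscr C(\Theta_l)\tilde M_\diamond(\Lambda_l)$.

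The telescoping of the semiclassical inverse Fourier transform then bounds the residual phase of the $l$-th block by $|\theta_l|\le \pi/2^l$ whenever $k$ is the favourable outcome attached to a coprime $j$, exactly as in the coherent analysis of Refs.~\cite{Shor1995,Parker2000}; since $\pi/2^l\le\pi/2$ the cosine is nonnegative and $\cos\theta_l\ge\cos(\pi/2^l)$, a bound independent of $j$. Summing the resulting $j$-independent product over the $\varphi(r)$ coprime residues gives
\begin{equation}
	P^{\suc}(\Theta_l,\Lambda_l)\ge \frac{\varphi(r)}{r}\prod_{l=1}^{L}\frac{1}{2}\Big[1+\mathscr{C}(\Theta_l)\tilde M_\diamond(\Lambda_l)\cos\big(\tfrac{\pi}{2^l}\big)\Big],
\end{equation}
and applying Lem.~\ref{4PiBound} with $a_l=\mathscr{C}(\Theta_l)\tilde M_\diamond(\Lambda_l)\in[0,1]$ (using $\mathscr C\le1$ and $\tilde M_\diamond\le1$ for qubit operations) produces the prefactor $4/\pi^2$ and the claimed inequality.

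I expect the main obstacle to be the rigorous justification of the block factorisation together with the residual-phase bound $|\theta_l|\le\pi/2^l$: one must carefully propagate the sequentially measured, classically controlled rotations $R_l'$ through the Hadamard-like detectors and verify that the reduced-visibility channel still yields the same nested phase structure as the ideal inverse Fourier transform. Appealing to the equivalence between the sequential and the parallel versions of the algorithm, so that the measurement back-action on the auxiliary register need not be tracked, is the device that keeps this step manageable; everything else is the qubit bookkeeping of Lem.~\ref{lem:DetectabilityNSID} and Lem.~\ref{HadamardLike} together with the arithmetic of Lem.~\ref{4PiBound}.
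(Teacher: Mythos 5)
Your proposal is correct and follows essentially the same route as the paper's proof: the same choice of Hadamard-like detection super-channel (Lem.~\ref{HadamardLike}) and population-balancing, phase-removing creation super-channel, restriction to the unique favourable outcome in $\mathcal{K}_1^j$ for each coprime $j$, block factorisation with residual phase bounded by $\pi/2^l$, and Lem.~\ref{4PiBound} for the $4/\pi^2$ prefactor. The one step you flag as delicate—propagating the classically controlled rotations $R_l'$ to obtain $|\theta_l|\le\pi/2^l$—is handled in the paper exactly as you anticipate, via the identity $e^{-2\pi i(\frac{j}{r}2^{L-l}-\sum_{a}k'_{l-a}/2^{a})}=e^{-2\pi i 2^{L-l}(\frac{j}{r}-\frac{k'}{2^L})}$ combined with $|\frac{j}{r}-\frac{k'}{q}|<\frac{1}{2q}$.
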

\begin{proof}
	Let us consider an idealized version of the order-finding protocol first. Assume that instead of a symmetry channel $\mathcal{E}$, derived in Lem.~\ref{Encoding}, only a single rotation $\mathcal{E}_j $, where $j$ is coprime to $r$, is utilized. Let us denote the success probability of this order-finding protocol by $\tilde{P}^{\suc}_j(\Theta_l,\Lambda_l)$, which is given by
	\begin{equation}
		\tilde{P}^{\suc}_j(\Theta_l,\Lambda_l)=\max_{\tilde{\sigma} \in \mathcal{I}} \sup_{\substack{S_1^{(l)} \in \CIS \\ S_2^{(l)} \in \DIS}}\sum_k P(k\to r \,|\, \text{CFA})\,p_k^{(j)}(S_1^{(l)}[\Theta_l],S_2^{(l)}[\Lambda_l];\tilde{\sigma},\mathbb{P}),
	\end{equation}
	where $p_k^{(j)}(S_1^{(l)}[\Theta_l],S_2^{(l)}[\Lambda_l];\tilde{\sigma},\mathbb{P})=\Tr{P_k\Delta F_{S_2^{(l)}[\Lambda_l]}\mathcal{E}_j \bigotimes_{l}S_1^{(l)}[\Theta_l]\tilde{\sigma}}$ (recall the notations introduced around Eq.~\eqref{eq:succProb}).
	
	One way of obtaining a compact lower bound is the following: Instead of accounting for all possible measurement outcomes which may or may not yield the correct $r$ via the classical post-processing, i.e., all outcomes contained in the set $\mathcal{K}_2^j$ in \eqref{eq:Sets}, we focus on the set $\mathcal{K}_1^j$. Since the set $\mathcal{K}_1^j$ contains exactly one outcome, we use this single measurement outcome $k^\prime$ obeying $|\tfrac{j}{r}-\tfrac{k^\prime}{q}|<1/(2q)$ to provide a lower bound according to
	\begin{equation}\label{BoundIdea}
		\begin{split}
			\tilde{P}^{\suc}_j(\Theta_l,\Lambda_l)&=\max_{\tilde{\sigma} \in \mathcal{I}} \sup_{\substack{S_1^{(l)} \in \CIS \\ S_2^{(l)} \in \DIS}}\sum_k P(k\to r \,|\, \text{CFA})\,p_k^{(j)}(S_1^{(l)}[\Theta_l],S_2^{(l)}[\Lambda_l];\tilde{\sigma},\mathbb{P})\\ 
			&\geq  \max_{\tilde{\sigma} \in \mathcal{I}} \sup_{\substack{S_1^{(l)} \in \CIS \\ S_2^{(l)} \in \DIS}} P( k^\prime \to r \,|\, \text{CFA})\,p_{k^\prime}^{(j)}(S_1^{(l)}[\Theta_l],S_2^{(l)}[\Lambda_l];\tilde{\sigma},\mathbb{P})\\
			&= \max_{\tilde{\sigma} \in \mathcal{I}} \sup_{\substack{S_1^{(l)} \in \CIS \\ S_2^{(l)} \in \DIS}} p_{k^\prime}^{(j)}(S_1^{(l)}[\Theta_l],S_2^{(l)}[\Lambda_l];\tilde{\sigma},\mathbb{P}),
		\end{split}
	\end{equation}
	where in the third line we used the results of Cor.~\ref{cor:goodEstimate} and Lem.~\ref{Unique}.
	To further simplify this bound, we make use of particular super-channels $S_1^{(l)}, S_2^{(l)}$ depicted in Fig.~\ref{fig4}. For the detection we choose $S_2^{(l)}[\Lambda_l]=\Lambda_l\Phi_2^{(l)}$ such that we obtain a Hadamard-like channel (see Lem.~\ref{HadamardLike}). For the adjustment of the channels $\Theta_l$ we do the following: after $\Theta_l$ was applied to $\sigma_l$, we perform a rotation removing the relative phase of the qubit state $\Theta_l(\sigma_l)$. Let us denote this rotation by $\mathcal{R}_1^{(l)}$. Then we apply the map $\tilde{\Phi}(\rho)=\tfrac{1}{2} (\rho+\sigma_x \rho\sigma_x)$. This post-processing of $\Theta_l(\sigma_l)$ results in  a state of the form $S_1^{(l)}[\Theta_l](\sigma_l)=\tfrac{1}{2} \mathbb{1}+c_l\sigma_x$ where $c_l\geq 0$, which is then used to probe $\mathcal{E}_j$. Importantly, it does not carry any intrinsic phases that may interfere with the detection of the phases induced by $\mathcal{E}_j$. Enforcing uniformly distributed populations (which are preserved since $\Lambda_l$ is unital by assumption) will ensure that the deviation in the measurement statistics caused by coherence can be maximized. Choosing super-channels defined in such a way, i.e., $S_1^{(l)}[\Theta_l]=\Phi_1^{(l)}\Theta_l=\tilde{\Phi}\mathcal{R}_1^{(l)}\Theta_l$ and $S_2^{(l)}[\Lambda_l]=\Lambda_l \Phi_2^{(l)}$, we obtain from Eq.~\eqref{BoundIdea} that
	
	\begin{equation}
		\begin{split}
			\tilde{P}^{\suc}_j(\Theta_l,\Lambda_l)
			&\ge \max_{\tilde{\sigma} \in \mathcal{I}} \sup_{\substack{S_1^{(l)} \in \CIS \\ S_2^{(l)} \in \DIS}} p_{k^\prime}^{(j)}(S_1^{(l)}[\Theta_l],S_2^{(l)}[\Lambda_l];\tilde{\sigma},\mathbb{P}) \\
			&\ge \max_{\tilde{\sigma} \in \mathcal{I}} \  p_{k^\prime}^{(j)}(\Phi_1^{(l)}\Theta_l,\Lambda_l \Phi_2^{(l)};\tilde{\sigma},\mathbb{P}) \\
			&=\max_{\tilde{\sigma} \in \mathcal{I}} \ \Tr{P_{k^\prime}\Delta F_{\Lambda_l \Phi_2^{(l)}}\mathcal{E}_j \bigotimes_{l}\Phi_1^{(l)}\Theta_l\tilde{\sigma}}.
		\end{split}
	\end{equation}
	At this point, we notice that we can express $\mathcal{E}_j$ (see  Lem.~\ref{Encoding}) as a tensor product: expanding $n$ into its binary representation, i.e., $n=n_1n_2...n_L=\sum_{l=1}^{L}n_l2^{L-l}$, we find 
	\begin{align}
		R_{j/r}=&\sum_{n=0}^{2^{L}-1} e^{2\pi i \frac{j}{r}n}\ketbra{n}{n} \\
		=&\sum_{n_1=0}^{1}\ldots \sum_{n_L=0}^{1} e^{2\pi i \frac{j}{r}\sum_{l=1}^{L}n_l2^{L-l}}\ketbra{n_1n_2...n_L}{n_1n_2...n_L} \\
		=&\bigotimes_{l=1}^L \sum_{n_l=0}^{1} e^{2\pi i \frac{j}{r} n_l 2^{L-l}}\ketbra{n_l}{n_l} \\
		=& \bigotimes_{l=1}^L R_{j/r}^{(l)},
	\end{align}
	with $R_{j/r}^{(l)}= \sum_{n_l=0}^{1} e^{2\pi i \frac{j}{r} n_l 2^{L-l}}\ketbra{n_l}{n_l}$. We thus define $\mathcal{E}_j^{(l)}(\rho):=R_{j/r}^{(l)}\rho \left(R_{j/r}^{(l)}\right)^\dagger$ and notice that, with the equivalence of Figs.~\ref{fig:ShorDecomposition}, \ref{fig:ShorDecompositionComp}, and~\ref{BlockFormHadamards} (and $\tilde{\sigma}=\bigotimes_{l}\sigma_l$), 
	\begin{align}
		\Tr{P_{k^\prime}\Delta F_{\Lambda_l \Phi_2^{(l)}}\mathcal{E}_j \bigotimes_{l=1}^L\Phi_1^{(l)}\Theta_l\tilde{\sigma}}=&\Tr{P_{k^\prime}\Delta F_{\Lambda_l \Phi_2^{(l)}}\bigotimes_{l=1}^L\left(\mathcal{E}_j^{(l)} \Phi_1^{(l)}\Theta_l\sigma_l\right)} \nonumber\\
		=&\prod_{l=1}^L \Tr{P_{k'}^{(l)} \Delta \Lambda_l \Phi_2^{(l)} R'_l \mathcal{E}_j^{(l)} \Phi_1^{(l)} \Theta_l\sigma_l},
	\end{align}
	where $P_{k'}^{(l)}=\ketbra{k'_{l-1}}{k'_{l-1}}$ for a total $k'=\sum_{l=0}^{L-1} 2^l k_l'$ (see Figs.~\ref{fig:ShorDecomposition} and~\ref{fig:ShorDecompositionComp}). Here it is important that we understand the product as ordered, since $R'_l$ depends on all previous measurement outcomes.
	Recall that $c_l=\left| \left[\Theta_l(\sigma_l)\right]_{01}\right|$, with which
	\begin{align}
		\Delta \Lambda_l \Phi_2^{(l)} R'_l \mathcal{E}_j^{(l)} \Phi_1^{(l)} \Theta_l\sigma_l=&\Delta \Lambda_l \Phi_2^{(l)} R'_l \mathcal{E}_j^{(l)} \left[\frac{1}{2}\id+c_l\sigma_x\right] \nonumber \\
		=&\Delta \Lambda_l \Phi_2^{(l)} R'_l \left[\frac{1}{2}\id+c_l\left(e^{-2 \pi i\frac{j}{r} 2^{L-l}} \ketbra{0}{1}+h.c. \right)\right] \nonumber \\
		=&\Delta \Lambda_l \Phi_2^{(l)}  \left[\frac{1}{2}\id+c_l\left(e^{-2 \pi i \left(\frac{j}{r} 2^{L-l}-\sum_{a=2}^l k'_{l-a}/2^a\right)} \ketbra{0}{1}+h.c. \right)\right] \nonumber \\
		=&\frac{1}{2}\id+c_l\left(e^{-2 \pi i \left(\frac{j}{r} 2^{L-l}-\sum_{a=2}^l k'_{l-a}/2^a\right)} \sum_{b_l=0}^1 \left|(\Lambda_l)_{b_l b_l}^{01}\right| e^{i\pi b_l}\ketbra{b_l}{b_l}+h.c. \right).
	\end{align}
	Since the robustness of coherence coincides with the $l_1$ measure of coherence for qubits, see Ref.~\cite{Napoli2016}, we find
	\begin{align}\label{eq:RobustnessQubit}
		\max_{\sigma_l \in \mathcal{I}}c_l= \max_{\sigma_l \in \mathcal{I}} \left| \left[\Theta_l(\sigma_l)\right]_{01}\right| =\max_{\sigma_l \in \mathcal{I}} C(\Theta_l \sigma_l)/2=\mathscr{C}(\Theta_l)/2,
	\end{align}
	where $\mathscr{C}$ denotes the cohering power with respect to the robustness, and $\left|(\Lambda_l)_{0 0}^{01}\right|=\left|(\Lambda_l)_{1 1}^{01}\right|$~\cite[Prop.~6]{Masini2021} and thus
	\begin{align}
		&\max_{\tilde{\sigma} \in \mathcal{I}} \ \Tr{P_{k^\prime}\Delta F_{\Lambda_l \Phi_2^{(l)}} \bigotimes_{l}\mathcal{E}_j^{(l)}\Phi_1^{(l)}\Theta_l\tilde{\sigma}} \nonumber \\
		&= \prod_{l=1}^L \max_{\sigma_l \in \mathcal{I}} \left[\frac{1}{2}+c_l \left|(\Lambda_l)_{k_{0}' k_{0}'}^{01}\right| \left(e^{-2 \pi i \left(\frac{j}{r} 2^{L-l}-\sum_{a=1}^l k'_{l-a}/2^a\right)}   +h.c. \right)  \right] \nonumber \\
		&= \prod_{l=1}^L \frac{1}{2}\left[1+\mathscr{C}(\Theta_l) \left|(\Lambda_l)_{0 0}^{01}\right| \left(e^{-2 \pi i \left(\frac{j}{r} 2^{L-l}-\sum_{a=1}^l k'_{l-a}/2^a\right)}   +h.c. \right)  \right] .
	\end{align}
	Following the usual procedure (see for example Ref.~\cite{Nielsen2016}), we note that $\sum_{a=1}^{l}k_{l-a}^\prime/2^a=2^{-l}\sum_{b=0}^{l-1}k_b' 2^b$ and $e^{2\pi i 2^{-l}\sum_{b=l}^{L-1}k_b' 2^b}=1$. Therefore,
	\begin{align}
		e^{-2 \pi i \left(\frac{j}{r} 2^{L-l}-\sum_{a=1}^l k'_{l-a}/2^a\right)} =e^{-2 \pi i 2^{L-l}(\frac{j}{r}-\frac{k'}{2^L})}
	\end{align}
	and if we consider the worst-case scenario we find
	\begin{align}
		\max_{\tilde{\sigma} \in \mathcal{I}} \ \Tr{P_{k^\prime}\Delta F_{\Lambda_l \Phi_2^{(l)}}\mathcal{E}_j \bigotimes_{l}\Phi_1^{(l)}\Theta_l\tilde{\sigma}}=& \prod_{l=1}^L \frac{1}{2}\left[1+\mathscr{C}(\Theta_l) \left|(\Lambda_l)_{0 0}^{01}\right| 2\cos\left(2 \pi 2^{L-l} \left(\frac{j}{r}- \frac{k'}{2^L}\right)\right)  \right] \nonumber \\
		\ge&\inf_{|\chi|<\tfrac{1}{2q}}\prod_{l=1}^L \frac{1}{2}\left[1 +\mathscr{C}(\Theta_l) \left|(\Lambda_l)_{0 0}^{01}\right| 2 \cos\left(2 \pi 2^{L-l} \chi \right)  \right],
	\end{align}
	where in the last line we used our assumption that $k'\in \mathcal{K}_1^j$. Since  $2|\Lambda_{00}^{01}|=\tilde{M}_\diamond(\Lambda)$, as detailed in Lem.~\ref{lem:DetectabilityNSID}, it follows that
	\begin{align}
		\tilde{P}^{\suc}_j(\Theta_l,\Lambda_l)&\ge \inf_{|\chi|<\tfrac{1}{2q}}\prod_{l=1}^L \frac{1}{2}\left[1 +\mathscr{C}(\Theta_l) \tilde{M}_\diamond(\Lambda_l) \cos\left(2 \pi 2^{L-l} \chi \right)  \right] \nonumber \\
		&= \prod_{l=1}^L \frac{1}{2}\left[1 +\mathscr{C}(\Theta_l) \tilde{M}_\diamond(\Lambda_l) \cos\left( \pi 2^{-l}  \right)  \right] \nonumber \\
		&\overset{\text{Lem.}~\ref{4PiBound}}{\geq}   \frac{4}{\pi^2}\prod_{l=1}^L \frac{1}{2}\left[1 +\mathscr{C}(\Theta_l) \tilde{M}_\diamond(\Lambda_l)  \right].
	\end{align}
	
	Now remember that up to here, we assumed that we replaced $\mathcal{E}$ with $\mathcal{E}_j$. This is of course not possible since it would require knowledge of $r$. To get back to the original protocol, we note that applying $\mathcal{E}$ corresponds to applying  $\mathcal{E}_j$ with $j\in 0,...,r-1$ chosen uniformly at random. The number of such $j$ with $\text{gcd}(j,r)=1$ is given by $\varphi(r)$, where $\varphi(r)$ denotes Euler's totient function. The overall success probability is therefore bounded by
	
	\begin{equation}
		\begin{split}
			P^{\suc}(\Theta_l,\Lambda_l)&=\frac{1}{r}\sum_{j=0}^{r-1}\tilde{P}^{\suc}_j(\Theta_l,\Lambda_l) \\
			&\ge \left(\frac{\varphi(r)}{r}\right) \frac{4}{\pi^2}\prod_{l=1}^L \frac{1}{2}\left[1 +\mathscr{C}(\Theta_l) \tilde{M}_\diamond(\Lambda_l)  \right].
		\end{split}
	\end{equation}
	Particularly, if the same channels are utilized in each block the bound simplifies to 
	\begin{equation}
		P^{\suc}(\Theta,\Lambda)\ge \frac{4}{\pi^2} \left(\frac{\varphi(r)}{r}\right) \left[\frac{1+\mathscr{C}(\Theta) \tilde{M}_\diamond(\Lambda)}{2} \right]^L.
	\end{equation}
\end{proof}
Euler's totient function grows almost linearly in its argument and is strictly bounded by $\varphi(r)>\tfrac{\delta r}{\log\log r}>\tfrac{\delta r}{\log\log N}$ for some $\delta>0$, where $\delta \approx e^{-\gamma}$ with $\gamma$ being the Euler-Mascheroni constant, see for instance Ref. \cite[Theorem 328]{Hardy1984}, which connects this bound to the original bound derived by Shor~\cite{Shor1995,Parker2000}. For a perfectly coherent protocol, this bound would take the form $P^{\suc}> \tfrac{4}{\pi^2} \tfrac{\delta}{\log\log r}$, which equals the bound originally obtained by Shor~\cite{Shor1995}. In the following works, see for example Refs.~\cite{Gerjuoy2005,Bourdon}, it has been shown that for the fully coherent protocol, the factor $\tfrac{4}{\pi^2}\approx 0.4$ can be pushed to about $0.9$ (at least in an average case) by a more careful, yet tedious, analysis. The basic idea behind these proofs is to consider not only the set $\mathcal{K}_1$ as useful outcomes but to stretch the definition of said set as it has been outlined in Cor.~\ref{cor:goodEstimate}.
Since continuity in the dynamical measures $\mathscr{C}(\Theta)$ and $\tilde{M}_\diamond(\Lambda)$ is to be expected, it would not be surprising if the bound in Eq.~\eqref{ThmSM} can be sharpened analogously. For now, we leave this to future work.

\subsection{Classical limit}
As already pointed out, the classical limit of the protocol uses only free channels $\Theta_{\free}^{(l)}$ and $\Lambda_{\free}^{(l)}$ and corresponds to a random number generator. It returns a number in the range $0\leq k\leq 2^L-1$ with a probability distribution of $\lbrace p_k(S_1^{(l)}[\Theta_{\free}^{(l)}],S_2^{(l)}[\Lambda_{\free}^{(l)}]; \tilde\sigma,\mathbb{P})\rbrace_k$ independent of the order $r$ since
\begin{equation}
	\begin{split}
		p_k(S_1^{(l)}[\Theta_{\free}^{(l)}],S_2^{(l)}[\Lambda_{\free}^{(l)}];\tilde\sigma,\mathbb{P})&=\Tr{P_k\Delta F_{S_2^{(l)}[\Lambda_{\free}^{(l)}]} \mathcal{E} \bigotimes_{l}S_1^{(l)}[\Theta_{\free}^{(l)}](\tilde\sigma)}\\
		&= \Tr{P_k\Delta F_{S_2^{(l)}[\Lambda_{\free}^{(l)}]}\Delta \mathcal{E} \bigotimes_{l}S_1^{(l)}[\Theta_{\free}^{(l)}](\tilde\sigma)}\\
		&=\Tr{P_k\Delta F_{S_2^{(l)}[\Lambda_{\free}^{(l)}]} \bigotimes_{l}S_1^{(l)}[\Theta_{\free}^{(l)}](\tilde\sigma)}.
	\end{split}
\end{equation}
Without prior knowledge about the order $r$ (including factors of $r$ itself which may be obtained by considering the outcomes of multiple rounds combined; not considered here though), the on average most beneficial probability distribution $p_k$ is the uniform distribution. For all free channels $\Theta_{\free}^{(l)},\Lambda_{\free}^{(l)}$ we can always choose a pair $S_1^{(l)},S_2^{(l)}$ that achieves this uniform distribution. In fact, such super-channels can even be chosen independently of $\Theta_{\free}^{(l)}$ and $\Lambda_{\free}^{(l)}$ in the classical limit, even in the case of non-unital $\Lambda_{\text{free}}^{(l)}$. 
A simple example would be to choose a suitable replacement channel as the post-processing of $\Lambda_{\text{free}}^{(l)}$.
The resulting uniformly random measurement outcome is forwarded to the continued fraction algorithm producing an estimate on $r$.  Thus, the overall success probability (assuming no prior knowledge of $r$) in the classical limit is given by
\begin{equation}\label{eq:ClassicalSucc}
	P^{\suc}(\Theta_{\free}^{(l)},\Lambda_{\free}^{(l)})= \frac{f(N,r)}{2^L},
\end{equation}
where we define $f(N,r)=\sum_k P(k\to r \,|\, \text{CFA})$. The function $f(N,r)$ mitigates the exponential term in the success probability to some extend and to quantify this notion we proceed to derive bounds on this function.
\begin{prop}\label{prop:FBound}
	The function $f(N,r)$ is bounded by 
	\begin{equation}
		2\varphi(r)\floor*{\frac{q-1}{2r^2}}\le f(N,r)\le \varphi(r)\left(1+2\floor*{\frac{q}{r^2}} \right),
	\end{equation}
	where $\varphi(r)$ denotes Euler's totient function and $q$ is uniquely given by $N^2<q=2^L<2N^2$. 
\end{prop}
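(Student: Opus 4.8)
The plan is to reinterpret $f(N,r)=\sum_k P(k\to r\mid\mathrm{CFA})$ as the cardinality of the success set $G=\{k: 0\le k<q,\ k/q\text{ has a CFA convergent }j/r\text{ with }\gcd(j,r)=1\}$, using that the post-processing is deterministic, so each $P(k\to r\mid\mathrm{CFA})\in\{0,1\}$ and $r$ is recovered exactly when some convergent of $k/q$ has denominator $r$ in lowest terms. I would organize $G$ by which reduced fraction $j/r$ (with $0<j<r$, $\gcd(j,r)=1$) is produced; there are exactly $\varphi(r)$ of these. Theorem~\ref{thm::Approximation} then sandwiches $G$ between an explicit sufficient and an explicit necessary set: by the converse part~\eqref{cond2}, $k\in G$ forces $|k/q-j/r|\le 1/r^2$ for the relevant $j$, so $G\subseteq\bigcup_j\mathcal{K}_2^j$; and by Cor.~\ref{cor:goodEstimate}, every $k$ with $|k/q-j/r|\le\beta/(2q)$, $\beta=(q-1)/r^2$, already lies in $G$, so $\bigcup_j A_j\subseteq G$ with $A_j=\{k:|k/q-j/r|\le (q-1)/(2qr^2)\}$.

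For the lower bound I would first note that the $A_j$ are pairwise disjoint: two distinct reduced fractions with denominator $r$ are at least $1/r$ apart, whereas the combined radius of two $A_j$ is $\beta/q=(q-1)/(qr^2)<1/r^2\le 1/r$. Each $A_j$ consists of the integers in a closed interval centered at $jq/r$ of half-width $w=(q-1)/(2r^2)$, and every closed interval of length $2w$ contains at least $2\lfloor w\rfloor$ integers (the minimum is attained near a half-integer centre); moreover these integers lie strictly inside $[0,q)$ because $1\le j\le r-1$. Summing the disjoint contributions gives $f=|G|\ge\sum_j|A_j|\ge 2\varphi(r)\lfloor(q-1)/(2r^2)\rfloor$, the left inequality.

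For the upper bound I would write $f=|G|=\sum_j|C_j|$, where $C_j=\{k: j/r\text{ is a CFA convergent of }k/q\}$; for $r\ge 3$ the sets $\mathcal{K}_2^j\supseteq C_j$ are disjoint (their combined radius $2/r^2$ is below the $1/r$ separation), so no $k$ is double-counted, and the cases $r\le 2$ are immediate. The goal is the per-fraction estimate $|C_j|\le 1+2\lfloor q/r^2\rfloor$; summing over the $\varphi(r)$ fractions then yields the right inequality, using $C_j\subseteq\mathcal{K}_2^j$, the integer points of a length-$2q/r^2$ interval.

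The main obstacle is exactly this last count. The elementary bound "a closed interval of length $L$ contains at most $\lfloor L\rfloor+1$ integers" only gives $|C_j|\le\lfloor 2q/r^2\rfloor+1$, and when $\{q/r^2\}\ge 1/2$ this equals $2\lfloor q/r^2\rfloor+2$, one more than the target; indeed $|\mathcal{K}_2^j|$ itself can realize this larger value (for instance $r=7$, $q=128$ gives an interval with six lattice points while the target allows five). The resolution must exploit that $\mathcal{K}_2^j$ is only a \emph{necessary} window: the extremal lattice points of an over-full interval sit in the annulus $1/(2r^2)<|k/q-j/r|\le 1/r^2$, where $j/r$ need not be a convergent, and one checks that they fail the best-approximation (equivalently, Stern--Brocot mediant) criterion that actually characterizes $C_j$. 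Carrying out this trimming — or, alternatively, using the reflection symmetry $j\mapsto r-j$, which sends $jq\bmod r$ to $r-(jq\bmod r)$, to pair intervals and cancel the parity surplus globally — is the delicate step that brings the count down to the exact $1+2\lfloor q/r^2\rfloor$ per fraction.
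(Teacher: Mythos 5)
Your strategy coincides with the paper's: organize the good outcomes by the reduced fraction $j/r$ they produce, sandwich each class between the sufficient window $\mathcal{K}_1^j(\beta)$ of Cor.~\ref{cor:goodEstimate} and the necessary window $\mathcal{K}_2^j$ of Thm.~\ref{thm::Approximation}, count lattice points per window, and sum over the $\varphi(r)$ coprime numerators. Your lower bound is complete and correct (the windows are pairwise disjoint, each contains at least $\lfloor 2w\rfloor\ge 2\lfloor w\rfloor$ integers with $w=(q-1)/(2r^2)$, and they stay inside $[0,q)$). The genuine gap is exactly where you flag it: you never prove the per-fraction count $|C_j|\le 1+2\lfloor q/r^2\rfloor$, and your own observation shows that the naive route through $|\mathcal{K}_2^j|$ cannot give it — for $r=7$, $q=128$, $j=2$ the window $|k/q-2/7|\le 1/49$ contains the six integers $34,\dots,39$ while the target is $5$. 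The two repairs you sketch (trimming the extremal lattice points using the best-approximation characterization of convergents, or pairing under $j\mapsto r-j$) are not carried out, so as written the upper bound is unproved.

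You should know that the paper's own proof does not close this gap either. It bounds $\tilde f_j\le|\mathcal{K}_2^j|$ and asserts that on the far side of the nearest lattice point $k'$ one has $m_{\max}=\lfloor q/r^2\rfloor$; but the actual constraint on that side is $m\le q/r^2+q\,|j/r-k'/q|\le q/r^2+\tfrac12$, which allows $m_{\max}=\lfloor q/r^2\rfloor+1$ whenever the fractional part of $q/r^2$ is at least $\tfrac12$ — precisely what your example realizes, giving $|\mathcal{K}_2^j|=2\lfloor q/r^2\rfloor+2$. So the intermediate claim $|\mathcal{K}_2^j|\le 1+2\lfloor q/r^2\rfloor$ is false as a statement about the window itself, and any correct proof must, as you anticipate, use that membership in $\mathcal{K}_2^j$ is only \emph{necessary}: in your example one checks that $2/7$ is a convergent of $35/128,\dots,38/128$ but of neither $34/128$ nor $39/128$, so $\tilde f_2=4\le 5$ and the Proposition itself survives. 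Until that trimming argument is made general (or one settles for the marginally weaker prefactor $\varphi(r)\bigl(2+2\lfloor q/r^2\rfloor\bigr)$, which would not affect any downstream conclusion), neither your proposal nor the paper's argument establishes the stated upper bound.
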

\begin{proof}
	Consider a single coprime pair $(j,r)$. Let $\tilde{f}_j(N,r)$ denote the function that counts the number of measurement outcomes that lead to this particular convergent $j/r$, i.e., 
	\begin{equation}
		\tilde{f}_j(N,r)= \sum_k P(k\to (j,r)\, |\,\text{CFA}).
	\end{equation}
	For a lower bound, recall Thm.~\ref{thm::Approximation} and Cor.~\ref{cor:goodEstimate}. Let us define the set $\mathcal{K}_1^j(\beta)$ which contains all integers that surely allow for a successful post-processing, i.e., 
	\begin{equation}\label{eq:fUpper}
		\mathcal{K}_1^j(\beta)=\left\lbrace k:0\leq k<q \wedge \,  \left|\frac{j}{r}-\frac{k}{q}\right|\le \frac{\beta}{2q} \right\rbrace,
	\end{equation}
	where $\beta=\tfrac{q-1}{r^2}$. A lower bound on $\tilde{f}_j(N,r)$ is then given by 
	\begin{equation}\label{eq:fLower}
		\begin{split}
			\tilde{f}_j(N,r)&=\sum_k P(k\to (j,r)\, |\,\text{CFA}) \ge \sum_{k\in \mathcal{K}_1^j(\beta)} P(k\to (j,r)\, |\,\text{CFA})= \sum_{k\in \mathcal{K}_1^j(\beta)} 1 = |\mathcal{K}_1^j(\beta)|.
		\end{split}
	\end{equation}
	Furthermore, according to the second part of Thm.~\ref{thm::Approximation} all measurement outcomes that yield the pair $(j,r)$ as a convergent of $k/q$ are contained in the set $\mathcal{K}_2^j$, as introduced in Eq.~\eqref{eq:Sets}. Thus the function can be upper bounded as
	\begin{equation}
		\tilde{f}_j(N,r)= \sum_k P(k\to (j,r)\, |\,\text{CFA}) = \sum_{k \in \mathcal{K}_2^j } P(k\to (j,r)\, |\,\text{CFA}) \le \sum_{k \in \mathcal{K}_2^j } 1= |\mathcal{K}_2^j| .
	\end{equation}
	To further simplify these bounds, we proceed to bound the cardinalities of $\mathcal{K}_2^j$ and $\mathcal{K}_1^j(\beta)$. We start with a lower bound on $|\mathcal{K}_1^j(\beta)|$.
	Consider the closest fraction $\tfrac{k^\prime}{q}$ defined by the smallest distance to the fraction $\tfrac{j}{r}$. This integer $k^\prime$ is roughly in the center of the set defined by $\mathcal{K}_1^j(\beta)$, and also the set $\mathcal{K}_2^j$.  Now consider the adjacent integers $k=k^\prime \pm n$ to the closets integer $k^\prime$. First assume $\tfrac{j}{r}-\tfrac{k^\prime}{q}> 0$. Then for the elements to the left of $k^\prime$, i.e., $k=k^\prime-n$ contained in $\mathcal{K}_1^j(\beta)$ we have
	\begin{equation}\label{eq:boundK1j}
		\frac{\beta}{2q}\geq \frac{j}{r}-\frac{k}{q}=\left(\frac{j}{r}-\frac{k^\prime}{q}\right)+\frac{n}{q}>\frac{n}{q},
	\end{equation}
	and therefore
	\begin{align}
		n<\tfrac{q-1}{2r^2}.
	\end{align}
	If $\tfrac{q-1}{2r^2}$ is an integer, all natural numbers $n\le n_\text{max}=\tfrac{q-1}{2r^2}-1=$ satisfy this equation and therefore lead to a $k$ in $\mathcal{K}_1^j(\beta)$. Moreover, due to our closeness assumption of $k^\prime$, larger $n$ cannot be in $\mathcal{K}_1^j(\beta)$. In this case, also all $k=k'+m$ with $m\le m_\text{max}=\tfrac{q-1}{2r^2}$ are contained in $\mathcal{K}_1^j(\beta)$, because we assumed $\tfrac{j}{r}-\tfrac{k^\prime}{q}> 0$, i.e., there cannot be less integers $k>k'$ in $\mathcal{K}_1^j(\beta)$ than integers $k<k'$ and we cannot hit the boundary twice exactly. In sum, we find $|\mathcal{K}_1^j(\beta)|=1+n_\text{max}+m_\text{max}=2\tfrac{q-1}{2r^2}$. If $\tfrac{q-1}{2r^2}$ is not an integer, we take $n_\text{max}=\floor{\frac{q-1}{2r^2}}$ instead, and $|\mathcal{K}_1^j(\beta)|=1+2n_\text{max}=1+2\floor{\frac{q-1}{2r^2}}$. Combining both cases, we have $|\mathcal{K}_1^j(\beta)|\ge2\floor{\frac{q-1}{2r^2}}$.
	
	If $\tfrac{j}{r}-\tfrac{k^\prime}{q}< 0$, the same bound holds true, which can be seen by switching the role of $n_\text{max}$ and $m_\text{max}$, i.e., switching the intervals to the left and right. Lastly, consider the case of $\tfrac{j}{r}-\tfrac{k^\prime}{q}= 0$ and $k=k^\prime  \pm n$. From 
	\begin{equation}
		\frac{\beta}{2q}\geq\left| \frac{j}{r}-\frac{k}{q}\right|=\left|\left(\frac{j}{r}-\frac{k^\prime}{q}\right)+\frac{n}{q}\right|=\frac{|n|}{q},
	\end{equation}
	follows that all integers $k=k'+n$ with $|n|\le  \floor{\tfrac{q-1}{2r^2}}$ are contained in $\mathcal{K}_1^j(\beta)$, i.e., $|\mathcal{K}_1^j(\beta)|\ge 1+2 \floor{\tfrac{q-1}{2r^2}}$.  Combining all cases, we we find
	\begin{equation}\label{eq:CardLower}
		|\mathcal{K}_1^j(\beta)|\ge\min\left\lbrace 2\floor{\tfrac{q-1}{2r^2}}, 1+2\floor{\tfrac{q-1}{2r^2}} \right\rbrace= 2\floor{\tfrac{q-1}{2r^2}}.
	\end{equation}

	Now we continue to obtain an upper bound on the cardinality of $|\mathcal{K}_2^j|$. All integers $k\in \mathcal{K}_2^j$ obey $|\tfrac{j}{r}-\tfrac{k}{q}|\leq \tfrac{1}{r^2}$ by definition.
	Again consider the closest fraction $\tfrac{k^\prime}{q}$ defined as the one with the smallest difference to the fraction $\tfrac{j}{r}$. As in the discussion for the lower bound, assume $\tfrac{j}{r}-\tfrac{k'}{q}>0$. From the analogue of Eq.~\eqref{eq:boundK1j} follows that $k=k'-n$ is an element of $\mathcal{K}_2^j$ if  $n<\tfrac{q}{r^2}$. If $q/r^2$ is an integer then $n_\text{max}=\tfrac{q}{r^2}-1$, and for the same arguments as before, $m_\text{max}=\tfrac{q}{r^2}$. If $q/r^2$ is not an integer,  $n_\text{max}=\floor{\tfrac{q}{r^2}}$. In addition,  $m_\text{max}=\floor{\tfrac{q}{r^2}}$. Depending on $\tfrac{q}{r^2}$ being an integer or not, the cardinality is given by $|\mathcal{K}_2^j|=1+2\floor{\tfrac{q}{r^2}}$ or $|\mathcal{K}_2^j|=2\floor{\tfrac{q}{r^2}}$, thereby the cardinality is bounded by
	\begin{equation}\label{eq:CardUpper}
		|\mathcal{K}_2^j|\leq 1+2\floor{\tfrac{q}{r^2}}.
	\end{equation}
	In the remaining case, i.e., if $\tfrac{j}{r}-\tfrac{k}{q}=0$, we find $|\mathcal{K}_2^j|=1+2\floor{\tfrac{q}{r^2}}$, whether or not $\tfrac{q}{r^2}$ is an integer.

	To conclude the proof, take into account all possible integers $j$ that are smaller then and coprime to $r$. There are exactly $\varphi(r)$ such values for $j$, and correspondingly for each such $j$, there is a range of possible outcomes $k$ that lead to the respective pair $(j,r)$. Inserting the expressions Eqs.~\eqref{eq:CardLower} and~\eqref{eq:CardUpper} into
	Eqs.~\eqref{eq:fUpper} and~\eqref{eq:fLower} respectively, we see that the function $\tilde{f}_j(n,r)$ can be bounded by
	\begin{align}
		f(N,r)=\sum_k P(k\to r\, |\,\text{CFA})=\sum_{j \text{ coprime to }r} \tilde{f}_j(N,r) \ge \sum_{j \text{ coprime to }r} 2\floor*{\frac{q-1}{2r^2}} =2\varphi(r)\floor*{\frac{q-1}{2r^2}} \nonumber ,
	\end{align}
	and
	\begin{align}
		f(N,r)=\sum_k P(k\to r\, |\,\text{CFA})=\sum_{j \text{ coprime to }r} \tilde{f}_j(N,r) \le \sum_{j \text{ coprime to }r} \left(1+2\floor*{\frac{q}{r^2}}\right) =\varphi(r)\left(1+2\floor*{\frac{q}{r^2}} \right) \nonumber.
	\end{align}
	Note that for all $(j,r)$ with $j$ not coprime to $r$, the continued fraction algorithm will yield a factor of $r$. Whilst this information can be used in principle, it is not relevant for the fixed post-processing strategy that we chose. \qedhere
\end{proof}

Let us conclude this section by noting that with the result of Prop.~\ref{prop:FBound} and Eq.~\eqref{eq:ClassicalSucc}, we can provide bounds on the classical limit of the success probability, i.e., 
\begin{equation}
	2\frac{\varphi(r)}{2^L} \floor*{\frac{2^L-1}{2r^2}}\leq P^{\suc}(\Theta_{\free}^{(l)},\Lambda_{\free}^{(l)})\le \frac{\varphi(r)}{2^L}\left(1+2\floor*{\frac{2^L}{r^2}} \right),
\end{equation}
where we used $q=2^L$. The classical limit of the success probability is thus sensibly dependent on the ratio between $2^L$ and $r^2$. Since $N^2<2^L<2N^2$, this is a purely problem specific expression, in the sense that it only depends on the number $N$ to factor and a corresponding order $r$.

\section{An upper bound}
From a complexity theoretic perspective, providing an upper bound on the success probability is rather uninteresting, since it corresponds to a best-case scenario. On the other hand, an upper bound is an interesting question, if we want to attribute a potential speed-up to a resource, i.e., in our case coherence. For this reason we use a similar technique as in the classical limit to provide a sufficiently general upper bound on the performance of the protocol. Nevertheless, the bound is general enough to provide quantitative insights on the role of coherence in the algorithm.

\begin{thm}\label{thm:UpperBoundSM}
	The success probability of the order-finding protocol with qubit channels $\Theta_l$ and unital $\Lambda_l$ is bounded by
	\begin{equation}
		P^\suc(\Theta_l,\Lambda_l)\leq \min \left\lbrace \frac{\varphi(r)}{2^L}\left(1+2\floor{\tfrac{2^L}{r^2}} \right)\prod_{l=1}^{L}\left( 1 + \mathscr{C}(\Theta_l) \tilde{M}_\diamond(\Lambda_l)\right)  ,\, 1 \right\rbrace,
	\end{equation}
	where $\mathscr{C}$ denotes the cohering power with respect to the robustness of coherence, $\tilde{M}_\diamond$ is the NSID-measure, and $\varphi(r)$ is Euler's totient function.
\end{thm}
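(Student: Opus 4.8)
The plan is to bound the success probability from above by discarding the detailed decoding probabilities $P(k\to r\,|\,\text{CFA})$ in favour of the set of outcomes that can possibly succeed. Starting from Eq.~\eqref{SuccessProb}, I would use $P(k\to r\,|\,\text{CFA})\le 1$ together with the necessary part of Thm.~\ref{thm::Approximation}: an outcome $k$ can yield $r$ through the continued fraction algorithm only if $j/r$ is a convergent of $k/q$ for some $j$ coprime to $r$, i.e.\ only if $k\in\mathcal{K}_2=\cup_j\mathcal{K}_2^{j}$ as defined in Eq.~\eqref{eq:Sets}. This gives $P^{\suc}(\Theta,\Lambda)\le\sup\sum_{k\in\mathcal{K}_2}p_k$, the supremum running over the admissible $\tilde\sigma$, $S_1^{(l)}\in\CIS$ and unitality-preserving $S_2^{(l)}\in\DIS$. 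I would then separate a counting factor from a per-outcome bound via $\sum_{k\in\mathcal{K}_2}p_k\le|\mathcal{K}_2|\,\max_k p_k$.

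The cardinality $|\mathcal{K}_2|$ is controlled exactly as in the proof of Prop.~\ref{prop:FBound}: since $|\mathcal{K}_2^{j}|\le 1+2\floor{q/r^2}$ and there are $\varphi(r)$ admissible values of $j$, one has $|\mathcal{K}_2|\le\varphi(r)\bigl(1+2\floor{q/r^2}\bigr)$ with $q=2^L$. For the per-outcome factor I would first factorize $p_k$ over the blocks: using the equivalence of Figs.~\ref{fig:ShorDecomposition}, \ref{fig:ShorDecompositionComp} and~\ref{BlockFormHadamards}, the decomposition $\mathcal{E}=\tfrac1r\sum_j\mathcal{E}_j$ from Lem.~\ref{Encoding}, and the tensor structure $\mathcal{E}_j=\bigotimes_l\mathcal{E}_j^{(l)}$ established in the proof of Thm.~\ref{Thm:LowerBound}, each $p_k^{(j)}$ becomes an ordered product $\prod_l p_{k_l}^{(j,l)}$ of single-qubit outcome probabilities, so it suffices to bound one block.

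The heart of the argument is to show $p_{k_l}^{(j,l)}\le\tfrac12\bigl(1+\mathscr{C}(\Theta_l)\tilde{M}_\diamond(\Lambda_l)\bigr)$. Writing the post-creation qubit state as $\rho_l=S_1^{(l)}[\Theta_l]\sigma_l$ and $\Gamma_l=S_2^{(l)}[\Lambda_l]$, the outcome probability splits into a population contribution and a coherence contribution $2\,\real\!\bigl(c_l e^{i\phi}(\Gamma_l)^{01}_{00}\bigr)$ with $c_l=[\rho_l]_{01}$. The coherence term is bounded by monotonicity of the two measures: $|c_l|\le\mathscr{C}(\Theta_l)/2$ because the robustness equals the $l_1$-measure for qubits and $\mathscr{C}$ cannot increase under $\CIS$, while $2|(\Gamma_l)^{01}_{00}|=\tilde{M}_\diamond(\Gamma_l)\le\tilde{M}_\diamond(\Lambda_l)$ by Lem.~\ref{lem:DetectabilityNSID} and monotonicity under $\DIS$; together these give a coherence contribution at most $\tfrac12\mathscr{C}(\Theta_l)\tilde{M}_\diamond(\Lambda_l)$. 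The main obstacle is the population contribution, which a priori could push a single outcome above $1/2$. Here the unitality of $\Gamma_l$ (doubly stochastic action on populations) together with the fact that the populations entering $\mathcal{U}_l$ are independent of $r$ is decisive: because the strategy must be fixed independently of $r$ while $\mathcal{K}_2$ depends on $r$, a biased population profile cannot be aimed at the useful outcomes, so the extremal $r$-independent choice is the balanced one, which unitality maps to the flat value $1/2$ per block. Making this reduction to balanced populations fully rigorous — rather than merely optimal in the average sense discussed for the free operations — is the step I expect to require the most care.

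Collecting the pieces, $\max_k p_k\le\prod_l\tfrac12\bigl(1+\mathscr{C}(\Theta_l)\tilde{M}_\diamond(\Lambda_l)\bigr)=2^{-L}\prod_l\bigl(1+\mathscr{C}(\Theta_l)\tilde{M}_\diamond(\Lambda_l)\bigr)$, whence $\sum_{k\in\mathcal{K}_2}p_k\le\tfrac{\varphi(r)}{2^L}\bigl(1+2\floor{2^L/r^2}\bigr)\prod_l\bigl(1+\mathscr{C}(\Theta_l)\tilde{M}_\diamond(\Lambda_l)\bigr)$. Finally, since $P^{\suc}$ is a probability it is trivially bounded by $1$; intersecting the two bounds and specializing to a common pair $\Theta,\Lambda$ in every block yields the stated minimum.
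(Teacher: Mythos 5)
Your proposal is correct and follows essentially the same route as the paper's proof: bound $P(k\to r\,|\,\text{CFA})\le 1$ and restrict to the necessary-condition set $\mathcal{K}_2$, count $|\mathcal{K}_2|\le\varphi(r)\bigl(1+2\lfloor 2^L/r^2\rfloor\bigr)$ as in Prop.~\ref{prop:FBound}, factorize the outcome probability over blocks via $\mathcal{E}_j=\bigotimes_l\mathcal{E}_j^{(l)}$, and bound each block by $\tfrac12\bigl(1+\mathscr{C}(\Theta_l)\tilde{M}_\diamond(\Lambda_l)\bigr)$ after separating the population term from the coherence term. The step you flag as delicate — reducing the population contribution to the flat value $1/2$ using the $r$-independence of admissible strategies together with unitality of $\Lambda_l$ and unitality-preservation of $S_2^{(l)}$ — is handled in the paper by exactly the same (informal) argument, so your outline matches the paper's proof in all essentials.
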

\begin{proof}
	Again, consider an idealized protocol with only a single rotation $\mathcal{E}_j$ first. Recall the notations introduced around Eq.~\eqref{eq:succProb}. We now need to be more careful than in the lower bound and use a similar technique as in the classical limit. From the quantum part of the protocol we obtain a measurement outcome, i.e., an integer $k$, with a probability depending on the rotation $\mathcal{E}_j$. The classical post-processing succeeds by definition if it returns a coprime pair $(j^\prime,r)$, where $r$ is the order we are looking for. Even if a measurement outcome $k$ does not lead to $j/r$, it could still be close enough to another coprime fraction $j^\prime/r$ such that the post-processing succeeds. This means we have to account for all possible coprime integers $j^\prime$ and hence their corresponding integers $k$ that allow to estimate $j^\prime/r$. The success probability is given by
	\begin{equation}\label{eq:UpperBoundCalc}
		\begin{split}
			\tilde{P}_j^{\suc}(\Theta_l,\Lambda_l) =&\sup_{\substack{S_1^{(l)} \in \CIS \\ S_2^{(l)}\in \DIS}} \max_{\tilde \sigma \in \mathcal{I}} \sum_k P(k\to r \, |\, \text{CFA} )p_k^{(j)}(S_1^{(l)}[\Theta_l],S_2^{(l)}[\Lambda_l];\tilde\sigma, \mathbb{P}) \\
			&=\sup_{\substack{S_1^{(l)} \in \CIS \\ S_2^{(l)}\in \DIS}} \max_{\tilde \sigma \in \mathcal{I}} \sum_{j^\prime \perp r} \sum_k P(k\to (j^\prime,r) \, |\, \text{CFA} )p_k^{(j)}(S_1^{(l)}[\Theta_l],S_2^{(l)}[\Lambda_l];\tilde\sigma, \mathbb{P}),
		\end{split}
	\end{equation}
	where in the second line we used that we only consider direct estimates of $r$ but not factors of $r$, thus we sum only over all coprime $j^\prime$. Recall the necessary condition that integers $k$ leading to $j^\prime/r$ are contained in $\mathcal{K}_2^{j^\prime}$, hence
	\begin{equation}
		\begin{split}
			\tilde{P}_j^{\suc}(\Theta_l,\Lambda_l) &=\sup_{\substack{S_1^{(l)} \in \CIS \\ S_2^{(l)}\in \DIS}} \max_{\tilde \sigma \in \mathcal{I}} \sum_{j^\prime \perp r} \sum_k P(k\to (j^\prime,r) \, |\, \text{CFA} )p_k^{(j)}(S_1^{(l)}[\Theta_l],S_2^{(l)}[\Lambda_l];\tilde\sigma, \mathbb{P}) \\
			&\leq \sup_{\substack{S_1^{(l)} \in \CIS \\ S_2^{(l)}\in \DIS}} \max_{\tilde \sigma \in \mathcal{I}} \sum_{j^\prime \perp r} \sum_{k \in \mathcal{K}_2^{j^\prime}} p_k^{(j)}(S_1^{(l)}[\Theta_l],S_2^{(l)}[\Lambda_l];\tilde\sigma, \mathbb{P}) \\
			&\leq  \sup_{\substack{S_1^{(l)} \in \CIS \\ S_2^{(l)}\in \DIS}} \max_{\tilde \sigma \in \mathcal{I}} \sum_{j^\prime \perp r} \max_{j^\prime \perp r} \sum_{k \in \mathcal{K}_2^{j^\prime}}   p_k^{(j)}(S_1^{(l)}[\Theta_l],S_2^{(l)}[\Lambda_l];\tilde\sigma, \mathbb{P})\\
			&=\varphi(r)\sup_{\substack{S_1^{(l)} \in \CIS \\ S_2^{(l)}\in \DIS}} \max_{\tilde \sigma \in \mathcal{I}}  \max_{j^\prime \perp r}\sum_{k \in \mathcal{K}_2^{j^\prime}}  p_k^{(j)}(S_1^{(l)}[\Theta_l],S_2^{(l)}[\Lambda_l];\tilde\sigma, \mathbb{P}).
		\end{split}
	\end{equation}
	Recall from the proof of Thm.~\ref{Thm:LowerBound} that
	\begin{align}
		p_k^{(j)}(S_1^{(l)}[\Theta_l],S_2^{(l)}[\Lambda_l];\tilde\sigma, \mathbb{P})=\prod_{l=1}^L \Tr{P_{k}^{(l)} \Delta S_2^{(l)}[\Lambda_l] R'_l \mathcal{E}_j^{(l)} S_1^{(l)}[\Theta_l] \sigma_l}.
	\end{align}
	As in the proof for the lower bound, we derive an expression for the measurement statistics for arbitrary super-channels. We can write
	\begin{align}
		\Delta S_2^{(l)}[\Lambda_l]  R'_l \mathcal{E}_j^{(l)} S_1^{(l)}[\Theta_l] \sigma_l&= \Delta S_2^{(l)}[\Lambda_l]\Delta R'_l \mathcal{E}_j^{(l)} S_1^{(l)}[\Theta_l] \sigma_l+\Delta S_2^{(l)}[\Lambda_l](\id-\Delta) R'_l \mathcal{E}_j^{(l)} S_1^{(l)}[\Theta_l] \sigma_l\nonumber\\
		&=\Delta S_2^{(l)}[\Lambda_l]\Delta S_1^{(l)}[\Theta_l] \sigma_l+\Delta S_2^{(l)}[\Lambda_l](\id-\Delta) R'_l \mathcal{E}_j^{(l)} S_1^{(l)}[\Theta_l] \sigma_l \nonumber\\
		&=\Delta S_2^{(l)}[\Lambda_l]\Delta S_1^{(l)}[\Theta_l] \sigma_l+\Delta S_2^{(l)}[\Lambda_l] R'_l \mathcal{E}_j^{(l)}(\id-\Delta) S_1^{(l)}[\Theta_l] \sigma_l,
	\end{align}
	where in the last line we used that $\id-\Delta$ commutes with the rotations.
	Let us focus on the second term first. We define
	\begin{align}
		& \left[S_1^{(l)}[\Theta_l](\sigma_l)\right]_{01}=\left| \left[S_1^{(l)}[\Theta_l](\sigma_l)\right]_{01}\right| e^{i\phi_1}=c_l e^{i\phi_1}, \nonumber\\
		& (S_2^{(l)}[\Lambda_l])_{b_l,b_l}^{01}=|(S_2^{(l)}[\Lambda_l])_{b_l,b_l}^{01}| e^{i\lambda_{b_l,b_l}^{01}}. 
	\end{align}
	Then we can write the second term as
	\begin{align}\label{eq:CoherenceTermBound}
		&\Delta S_2^{(l)}[\Lambda_l] R'_l \mathcal{E}_j^{(l)}(\id-\Delta) S_1^{(l)}[\Theta_l] \sigma_l\nonumber\\
		&= \Delta S_2^{(l)}[\Lambda_l] R'_l \left(c_l e^{-2 \pi i\frac{j}{r} 2^{L-l}}e^{i \phi_1} \ketbra{0}{1}+h.c.  \right) \nonumber \\
		&=\Delta S_2^{(l)}[\Lambda_l]\left( c_l e^{-2 \pi i \left(\frac{j}{r} 2^{L-l}-\sum_{a=2}^l k'_{l-a}/2^a\right)} e^{i\phi_1} \ketbra{0}{1}+h.c. \right) \nonumber\\
		&= c_l\left(e^{-2 \pi i \left(\frac{j}{r} 2^{L-l}-\sum_{a=2}^l k'_{l-a}/2^a\right)}e^{i\phi_1} \sum_{b_l=0}^1  \left|(S_2^{(l)}[\Lambda_l])_{b_l b_l}^{01}\right| e^{i\lambda_{b_l,b_l}^{01}} \ketbra{b_l}{b_l}+h.c. \right)\nonumber\\
		&= \left| \left[S_1^{(l)}[\Theta_l](\sigma_l)\right]_{01}\right| \left|(S_2^{(l)}[\Lambda_l])_{00}^{01}\right|\left(e^{-2 \pi i \left(\frac{j}{r} 2^{L-l}-\sum_{a=2}^l k'_{l-a}/2^a\right)}e^{i\phi_1} \sum_{b_l=0}^1  e^{i\lambda_{b_l,b_l}^{01}} \ketbra{b_l}{b_l}+h.c. \right).
	\end{align}
	where we used in the last line that $\left|(S_2^{(l)}[\Lambda_l])_{0 0}^{01}\right|=\left|(S_2^{(l)}[\Lambda_l])_{1 1}^{01}\right|$~\cite[Prop.~6]{Masini2021}. Let us introduce the abbreviation $q_{k_l}(S_1^{(l)}[\Theta_l],S_2^{(l)}[\Lambda_l];\sigma_l, \mathbb{P} )=\Tr{P_{k}^{(l)} \Delta S_2^{(l)}[\Lambda_l]\Delta S_1^{(l)}[\Theta_l]\sigma_l}$. We evaluate the projective measurement and bound the phase dependent terms by two to find
	\begin{align}
		&\tilde{p}_k^{(j)}(S_1^{(l)}[\Theta_l],S_2^{(l)}[\Lambda_l];\tilde\sigma, \mathbb{P})\nonumber\\
		&\leq \prod_{l=1}^{L} \left(q_{k_l}(S_1^{(l)}[\Theta_l],S_2^{(l)}[\Lambda_l];\sigma_l, \mathbb{P}) +2\left| \left[S_1^{(l)}[\Theta_l](\sigma_l)\right]_{01}\right| \left|(S_2^{(l)}[\Lambda_l])_{00}^{01}\right|  \right)\nonumber\\
		&= \prod_{l=1}^{L} \left(q_{k_l}(S_1[\Theta],S_2[\Lambda];\sigma_l, \mathbb{P}) + \tfrac{1}{2}C(S_1^{(l)}[\Theta_l](\sigma_l)) \tilde{M}_\diamond(S_2^{(l)}[\Lambda])\right),
	\end{align}
	where we used the expressions for the measures from Eq.~\eqref{eq:RobustnessQubit} and Lem.~\ref{lem:DetectabilityNSID}. Using this bound on the probability to measure an outcome $k$ results in a bound on the success probability given by  
	\begin{align}
		\tilde{P}_j^{\suc}(\Theta_l,\Lambda_l)&\leq \varphi(r)\sup_{\substack{S_1 \in \CIS \\ S_2\in \DIS}} \max_{\tilde \sigma \in \mathcal{I}}  \max_{j^\prime \perp r}\\
		& \sum_{k \in \mathcal{K}_2^{j^\prime}} \prod_{l=1}^{L} \left(q_{k_l}(S_1[\Theta],S_2[\Lambda];\sigma_l, \mathbb{P}) + \tfrac{1}{2}C(S_1^{(l)}[\Theta_l](\sigma_l)) \tilde{M}_\diamond(S_2^{(l)}[\Lambda])\right).\nonumber
	\end{align}
	Only the contribution $q_{k_l}(S_1^{(l)}[\Theta_l],S_2^{(l)}[\Lambda_l];\sigma_l, \mathbb{P})$ emerging from the distribution of the initial populations depends implicitly on the integer $j^\prime$, in the sense that the integer $j^\prime$ determines which measurement outcomes $k$ are contained in the respective set $\mathcal{K}_2^{j^\prime}$. We assume no prior knowledge about $r$ and therefore no knowledge about the fraction $\tfrac{j^\prime}{r}$. Therefore, the integers $k\in \mathcal{K}_2^{j^\prime}$, cannot be known prior to the experiment and we have to assume that the interval of integers defined by $\mathcal{K}_2^{j^\prime}$ is distributed  uniformly across the range $0\leq k^\prime<q$. Hence, the optimal initial population distribution is uniform, i.e., $ q_{k_l}(S_1^{(l)}[\Theta_l],S_2^{(l)}[\Lambda_l];\sigma_l, \mathbb{P} )=\tfrac{1}{2} \, \forall k_l$. Recall that we assume that all $\Lambda_l$ are unital and all $S_2^{(l)}$ are unitality-preserving. Thus, we can construct super-operations $S_1^{(l)},S_2^{(l)}$ that can achieves this uniform distribution without influencing the second term involving the measures, see for example the super-operations utilized in the proof of the lower bound. Therefore, we find the upper bound
	\begin{align}
		\tilde{P}_j^{\suc}(\Theta_l,\Lambda_l)&\leq \varphi(r)\sup_{\substack{S_1 \in \CIS \\ S_2\in \DIS}} \max_{\tilde \sigma \in \mathcal{I}}  \max_{j^\prime \perp r}\sum_{k \in \mathcal{K}_2^{j^\prime}} \frac{1}{2^L}\prod_{l=1}^{L} \left( 1 + C(S_1^{(l)}[\Theta_l](\sigma_l)) \tilde{M}_\diamond(S_2^{(l)}[\Lambda_l])\right)\nonumber\\
		&= \frac{\varphi(r)}{2^L}\prod_{l=1}^{L}\left( 1 + \mathscr{C}(\Theta_l) \tilde{M}_\diamond(\Lambda_l)\right)  \max_{j^\prime \perp r}\sum_{k \in \mathcal{K}_2^{j^\prime}} 1 \nonumber\\
		&= \frac{\varphi(r)}{2^L}\prod_{l=1}^{L}\left( 1 + \mathscr{C}(\Theta_l) \tilde{M}_\diamond(\Lambda_l)\right)  \max_{j^\prime \perp r} |\mathcal{K}_2^{j^\prime}|\nonumber\\
		&\leq \frac{\varphi(r)}{2^L}\prod_{l=1}^{L}\left( 1 + \mathscr{C}(\Theta_l) \tilde{M}_\diamond(\Lambda_l)\right) \left(1+2\floor{\tfrac{2^L}{r^2}} \right),
	\end{align}
	where we used the results from the proof concerning the classical limit in the last line. So far we only used a single rotation. Note that since this bound holds for an arbitrary rotation $\mathcal{E}_j$ and since it is independent of the $j$ labeling a rotation $\mathcal{E}_j$, which is applied probabilistically, we find
	\begin{align}
		P^{\suc}(\Theta_l,\Lambda_l)&=\frac{1}{r}\sum_{j=0}^{r-1} \tilde{P}_j^{\suc}(\Theta_l,\Lambda_l)\nonumber\\
		&\leq \frac{1}{r}\sum_{j=0}^{r-1} \frac{\varphi(r)}{2^L}\prod_{l=1}^{L}\left( 1 + \mathscr{C}(\Theta_l) \tilde{M}_\diamond(\Lambda_l)\right) \left(1+2\floor{\tfrac{2^L}{r^2}} \right) \nonumber\\
		&=\frac{\varphi(r)}{2^L}\prod_{l=1}^{L}\left( 1 + \mathscr{C}(\Theta_l) \tilde{M}_\diamond(\Lambda_l)\right) \left(1+2\floor{\tfrac{2^L}{r^2}} \right).
	\end{align}
	Lastly, note that this bound can exceed unit probability and for this reason, we decide to formulate a bound of the form
	\begin{align}
		P^\suc(\Theta_l,\Lambda_l)\leq \min \left\lbrace \frac{\varphi(r)}{2^L}\left(1+2\floor{\tfrac{2^L}{r^2}} \right)\prod_{l=1}^{L}\left( 1 + \mathscr{C}(\Theta_l) \tilde{M}_\diamond(\Lambda_l)\right)  ,\, 1 \right\rbrace,
	\end{align}
	which concludes the proof.
\end{proof}
Again, note that for identical operations in each block we obtain the important special case of
\begin{align}
	P^\suc(\Theta_l,\Lambda_l)\leq \min \left\lbrace \frac{\varphi(r)}{2^L}\left(1+2\floor{\tfrac{2^L}{r^2}} \right)\left( 1 + \mathscr{C}(\Theta_l) \tilde{M}_\diamond(\Lambda_l)\right)^L  ,\, 1 \right\rbrace.
\end{align}
If the bound involving the resources measures exceeds unity, the upper bound reduces to a trivial bound. However, we emphasize that this is not only a trivial bound on the success probability. Most importantly, the expression exceeds unity if the prefactor becomes large. Comparing it with the classical success probability, we see that the prefactors are the same. This leads us to the conclusion that our bound is relevant whenever the order-finding problem is hard in the classical limit. Then the bound involving the dynamical resource measures is indeed useful. In that sense, we can argue that coherence is the resource that provides an advantage whenever there is an actual advantage to grant.

\end{document}